\documentclass [11pt]{article}

\usepackage{scribe}

\usepackage[backend=biber,style=alphabetic,backref=false]{biblatex}
\usepackage{xcolor}

\title{Finding Nearly-Periodic Components in Digraphs and Markov Chains from the Spectrum of Rotated Laplacian Matrices}

\author{
Salil Vadhan \thanks{\texttt{salil\_vadhan@harvard.edu }  Supported in part by a Simons Investigator Award}\\ Harvard University 
\and 
Jiyu Zhang\thanks{\texttt{jiyu.zhang@phd.unibocconi.it }  Part of the work was done while  visiting EPFL}\\ Bocconi University 
}

\date{}

\begin{document}

\maketitle

\begin{abstract}
     Inspired by recent advances in notions of spectral approximation of digraphs \cite{ahmadinejad2020high}, we study spectral algorithms for finding periodic structures in digraphs via the spectrum of a class of \emph{rotated} Laplacian matrices. This class of Laplacian matrices was previously studied by Lange, Liu, Peyerimhoff, and Post \cite{lange2015frustration}. We consider a notion of \emph{periodicity ratio} that generalizes the bipartiteness ratio of Trevisan \cite{Trevisan2009MaxCut}, and show that it is closely related to the spectrum of rotated Laplacian matrices. In particular, if the digraph is strongly connected and represents a Markov chain, this periodicity ratio for a given $p\in \N$ is a quantitative measure of how close this Markov chain is to having periodicity $p$.   
     
     We propose and analyze a periodicity-ratio variant of the spectral algorithm by Louis, Raghavendra, Tetali and Vempala \cite{LouisRaghavendraTetaliVempala12}. We show that the algorithm runs in randomized polynomial time and can find many nearly periodic components (i.e, components with small periodicity ratio). This also implies a new higher-order Cheeger-type inequality for periodicity in the spirit of that in \cite{LouisRaghavendraTetaliVempala12,lee2014multiway}.  
     
     As part of our analysis, we prove a new theorem that upper bounds the probability that the largest magnitudes of two sequences of coordinate-wise correlated complex Gaussian random variables occur at different indices, which may be of independent interest. Previously, an analogous result was known only for real Gaussian random variables. 
\end{abstract}

\section{Introduction}

Spectral graph theory studies the combinatorial properties of a graph by studying its associated matrices, most commonly the Laplacian matrix. For a graph $G=(V,E)$, the normalized Laplacian matrix associated with $G$ is the matrix $L= I- D^{-1/2} A D^{-1/2}$, where $D$ is the diagonal degree matrix and $A$ is the adjacency matrix. In this introduction, for simplicity we will mostly restrict our discussion to unweighted graphs. However, all results presented in this paper extend to general weighted graphs, and proofs are provided for the weighted case.   

A common theme in spectral graph theory is that the spectrum of the Laplacian matrix encodes the connectivity of the graph, and the eigenvectors constitute a geometric embedding of the vertices. Formally, we define the \emph{edge expansion} of a set $S\subseteq V$ as

\[\phi(S) = \frac{|E(S,V-S)|}{\sum_{v\in S} d_v}\]

where $E(S,V-S)$ is the set of edges between $S$ and $V-S$ and $d_v$ is the degree, i.e, the number of edges connected to the vertex $v$. The fundamental Cheeger's Inequality states that 

\begin{theorem} \cite{alon1985lambda1, alon1986eigenvalues}
    For a graph $G=(V=[n],E)$ , let $\lambda_1= 0 \leq \lambda_2 \leq \cdots \leq \lambda_n\leq 2$ be the eigenvalues of its associated normalized Laplacian matrix $L= I- D^{-1/2} A D^{-1/2}$. We have
    
    \[\lambda_2/2 \leq \min_{\substack{S\subseteq V, 
    |S|\leq |V|/2}} \phi(S) \leq \sqrt{2\lambda_2}\]
\end{theorem}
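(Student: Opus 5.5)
The plan is to prove the two inequalities separately, starting from the Rayleigh quotient characterization. Write $x = D^{1/2} y$. Then
\[
\lambda_2=\min_{y\perp D\mathbf{1},\,y\neq 0}\frac{\sum_{\{u,v\}\in E}(y_u-y_v)^2}{\sum_v d_v y_v^2}.
\]
This holds because the bottom eigenvector of $L$ is $D^{1/2}\mathbf{1}$, and the quadratic form is $x^\top L x=\sum_{\{u,v\}\in E}(y_u-y_v)^2$.

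\textbf{Lower bound $\lambda_2/2\le\phi(S)$.} Take any $S$ with $|S|\le |V|/2$. (With weighted degrees I would really use $\mathrm{vol}(S)\le \mathrm{vol}(V)/2$; if only the cardinality condition is available, I would work with the side of the cut of smaller volume, which is fine since the cut edges are shared.) Use the test vector $y=\mathbf{1}_S-c\mathbf{1}$ with $c=\mathrm{vol}(S)/\mathrm{vol}(V)$, so that $y\perp D\mathbf{1}$. The numerator equals $|E(S,V-S)|$. The denominator is
\[
\mathrm{vol}(S)(1-c)^2+\mathrm{vol}(V-S)c^2=\mathrm{vol}(S)(1-c)\ge \mathrm{vol}(S)/2 .
\]
Hence $\lambda_2\le 2\phi(S)$.

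\textbf{Upper bound $\sqrt{2\lambda_2}$.} I would use sweep-cut rounding of the Fiedler vector.
\begin{enumerate}
\item Take $y$ attaining $\lambda_2$, and shift it by its weighted median: choose $m$ so that $\{y>m\}$ and $\{y<m\}$ each have volume at most $\mathrm{vol}(V)/2$. Shifting keeps the numerator unchanged and, because $y\perp D\mathbf{1}$, can only increase the denominator. So $y-m\mathbf{1}$ still has Rayleigh quotient at most $\lambda_2$.
\item Split into positive and negative parts $y^+,y^-$. Since $(a-b)^2\ge (a^+-b^+)^2+(a^--b^-)^2$, one of the two parts, call it $z\ge 0$, has Rayleigh quotient at most $\lambda_2$. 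Its support has volume at most $\mathrm{vol}(V)/2$. Scale so that $\max z=1$.
\item Pick $t\in(0,1]$ with density proportional to $2t\,dt$ and consider the level set $S_t=\{v: z_v^2\ge t\}$. Each $S_t$ lies inside the support of $z$, so it satisfies the size condition. We have $\mathbb{E}[\mathrm{vol}(S_t)]=\sum_v d_v z_v^2$. Also $\mathbb{E}|E(S_t,V-S_t)|\le\sum_{\{u,v\}\in E}|z_u^2-z_v^2|$.
\item Apply Cauchy--Schwarz to the edge sum:
\[
\sum_{\{u,v\}\in E}|z_u-z_v||z_u+z_v|\le\sqrt{\sum_{\{u,v\}\in E}(z_u-z_v)^2}\sqrt{2\sum_v d_v z_v^2}.
\]
Dividing gives a ratio of expectations that is at most $\sqrt{2\lambda_2}$. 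Since a ratio of expectations bounds the minimum ratio, some $t$ has $\phi(S_t)\le\sqrt{2\lambda_2}$.
\end{enumerate}

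The main obstacle is the bookkeeping in steps 1 and 2. The median shift must preserve the orthogonality gain, and the truncation inequality must hold edgewise, including edges that cross the median. There is also a mismatch between the cardinality constraint $|S|\le|V|/2$ and the volume constraint. I would resolve it by stating the argument with volume, which the standard statement intends, and noting that for regular graphs the two coincide. The lower bound and the Cauchy--Schwarz step are routine by comparison.
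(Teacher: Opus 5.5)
\textbf{Overall.} The paper does not prove this theorem; it quotes it from Alon--Milman. So there is no argument of the paper's to compare against. Your proof is the textbook one, and for the standard \emph{volume} version (minimum over nonempty $S$ with $\mathrm{vol}(S)\le \mathrm{vol}(V)/2$) it is correct. Your upper-bound sweep is essentially the computation the paper itself does in Lemma~\ref{lem:rounding}: a random threshold on $|x_v|^2$, then Cauchy--Schwarz with $\sum(|x_u|+|x_v|)^2\le 2\sum_v d_v|x_v|^2$.

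\textbf{The genuine problem: your parenthetical in the lower bound.} Passing to the side $T$ of smaller volume proves $\lambda_2\le 2\phi(T)$. In fact your test vector gives $\lambda_2\le \phi(S)+\phi(V\setminus S)$. But if the given $S$ is the larger-volume side, then $\phi(S)\le\phi(T)$, so nothing follows for $\phi(S)$.

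This cannot be repaired, because with the cardinality constraint the left inequality is false. Take the star $K_{1,5}$ with center $c$ and leaves $\ell_1,\dots,\ell_5$. Its normalized Laplacian has spectrum $0,1,1,1,1,2$, so $\lambda_2/2=1/2$. Now let $S=\{c,\ell_1,\ell_2\}$. Then $|S|=3=|V|/2$, $|E(S,V-S)|=3$ and $\mathrm{vol}(S)=7$, so $\phi(S)=3/7<1/2$. You should therefore prove the volume version, as your last paragraph suggests. State explicitly that the printed statement's left inequality needs $\mathrm{vol}(S)\le\mathrm{vol}(V)/2$ (or regularity), rather than claiming a reduction from the cardinality version.

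\textbf{Two smaller points.}

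First, the obstacle you flag in step 1 is not one. Since $y\perp D\mathbf{1}$, for every real $m$ we have $\sum_v d_v(y_v-m)^2=\sum_v d_v y_v^2+m^2\,\mathrm{vol}(V)$. So any shift preserves Rayleigh quotient at most $\lambda_2$; nothing about $m$ being a weighted median is used. Your edgewise inequality $(a-b)^2\ge(a^+-b^+)^2+(a^--b^-)^2$ already handles edges crossing $m$. As a consequence, choosing $m$ as a \emph{cardinality} median gives level sets with $|S_t|\le |V|/2$. So the right inequality does hold exactly as printed, even though the left one does not.

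Second, step 3 as written is inconsistent. With density $2t$ on $(0,1]$ and $S_t=\{z_v^2\ge t\}$, you get $\Pr[v\in S_t]=z_v^4$, not $z_v^2$. Use either of the following:
\begin{itemize}
\item $t$ uniform on $(0,1]$ with $S_t=\{z_v^2\ge t\}$, as in Lemma~\ref{lem:rounding};
\item density $2t$ with $S_t=\{z_v\ge t\}$.
\end{itemize}
Either way $\Pr[v\in S_t]=z_v^2$, and the edge $\{u,v\}$ is cut with probability exactly $|z_u^2-z_v^2|$. Every $S_t$ is nonempty because $\max_v z_v=1$, so the ratio-of-expectations step goes through.
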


Cheeger's Inequality shows that the first non-trivial eigenvalue (i.e, $\lambda_2$) of the Laplacian matrix associated with a graph is a quantitative measure of the disconnectivity of the graph. In particular, if $\lambda_2 = 0$, then there exists a subset $S$ such that $\phi(S)=0$, meaning that the vertices in $S$ are disconnected with the rest of the graph. A line of work by Lee, Oveis-Gharan, Trevisan \cite{lee2014multiway} and Louis, Raghavendra, Tetali, Vempala \cite{LouisRaghavendraTetaliVempala12} has established higher-order Cheeger's Inequalities, relating higher eigenvalues of the Laplacian to the existence of partitions of the vertex set into many non-expanding pieces.  This line of work has provided a theoretical justification for the success of spectral clustering techniques \cite{ng2001spectral, von2007tutorial}, in which the spectral embedding of the vertices is computed in the first stage, then a clustering algorithm (e.g, the $k$-means algorithm) is applied to this embedding.   

In this work, we look at a class of \emph{rotated} Laplacian matrices, and show that their spectra enable us to find periodic structures in the graph. Formally, for a digraph $G=(V,E)$, its adjacency matrix is the $n\times n$ matrix $A$ where $A(v,u) = 1$ if there is an edge from $u$ to $v$, otherwise $0$. Let $A^*$ denote the conjugate transpose of $A$. For $p\in \N$ with $2\pi/p\in [0,\pi]$, let $z=e^{2\pi i/p}$ be a $p$-th root of unity, and let $z^*$ denote the complex conjugate of $z$.  We define $A_z = z^*A^* + zA$, which is the Hermitianization of the rotated adjacency matrix $zA$ (the entries in $A$ are rotated in the complex plane). The in-degree matrix of $G$, denoted by $D_{\text{in}}$,  is the diagonal matrix such that $D_{\text{in}}(v,v)$ is the number of edges entering $v$. Similarly, the out-degree matrix of $G$, denoted by $D_{\text{out}}$, is the diagonal matrix such that $D_{\text{out}}(v,v)$ is the number of edges leaving the vertex $v$. We let $D= D_{\text{in}} + D_{\text{out}}$, and $d_v = D(v,v)$. We define the \emph{normalized, $z$-rotated Laplacian matrix} associated with the graph $G$ as

\[L_z :=  D^{-1/2}\left(D-A_z\right)D^{-1/2} =  I- D^{-1/2}A_zD^{-1/2} \]

$L_z$ can be viewed as the generalization of $L$ by generalizing its definition using its incidence matrix. Recall that the incidence matrix of a graph $G=(V,E)$ is the $m\times n$ matrix $B$ where $m= |E|$ and $n=|V|$. The rows are indexed by the edges and the columns are indexed by the vertices. In the undirected case, for an edge $(u,v)$ we pick an arbitrary direction, say $u\to v$, then for the row indexed by $(u,v)$, put $1$ in the $u$-th entry and $-1$ in the $v$-th entry. It is well known that $L = D^{-1/2}B^TBD^{-1/2}$. In the directed case, we define an incidence matrix $B_z$ as follows: for the row indexed by the directed edge $u\to v$, put $1$ in the $u$-th entry and $-z^*$ in the $v$-th entry. It can be easily verified that $L_z = D^{-1/2}B_z^*B_zD^{-1/2}$  (see Proposition \ref{prop:Lz-property}).

This class of rotated Laplacian matrices was previously studied by Lange, Liu, Peyerimhoff, and Post \cite{lange2015frustration} under the name of graph Laplacians with cyclic signatures.  In their definition, each edge is associated with a signature in $\{z\in \C \mid z^p = 1\}$.  Our rotated Laplacian corresponds to the special case in which the same root of unity is assigned to every edge. Lange et al. show that the spectra of these graph Laplacians reveal cyclic  sub-structures of the graph. Li, Sun and Zanetti \cite{li2018hermitian} also studied similar Hermitian Laplacians associated with Max-2-Lin instances, and proved a Cheeger-type inequality for satisfiability of Max-2-Lin instances. Our motivation for studying these rotated Laplacian matrices comes from recent advances in spectral approximation of digraphs and random-walk matrices. In particular, Ahmadinejad et al. \cite{ahmadinejad2020high} studied the matrix of $L_z$ in the special case of regular digraphs (in which case $D^{-1/2}AD^{-1/2}$ is the doubly stochastic random-walk matrix). In their work, they call a regular digraph $\widetilde{G}$ a \emph{unit-circle approximation} of a regular digraph $G$ if for all $z\in \C, |z|=1$, we have $(1-\epsilon)L_z \preceq \widetilde{L}_z \preceq (1+\epsilon)L_z$, where $\preceq$ denotes the Loewner order. The benefit of this notion is that the approximation is preserved under powering the random walk matrix. The intuition is that approximation of $L_z$ for all $z$ preserves the bipartite and periodic structures in $G$. 

We define and work with a notion of \emph{periodicity ratio} which can be viewed as a special case of the Cheeger constant for graph Laplacians with cyclic signatures as in \cite{lange2015frustration}. We choose to work specifically with periodicity ratio because our focus is on the periodicity of the Markov chain associated with the graph. We start with the definition of a $p$-periodic digraph:

\begin{definition}
    For $p\in \N$, a digraph $G=(V,E)$ is \emph{$p$-periodic} if its vertex set $V$ can be partitioned into $p$ disjoint subsets $C_0, \ldots, C_{p-1}$ such that every directed edge $(u,v)$ has $u\in C_{i}$ and $v\in C_{i+1 (\text{mod } p)}$ for some $i$.
\end{definition} 

See Figure \ref{fig:figure1} for an example of a $4$-periodic graph. The connection between the periodic structure and the spectrum of $L_z$ can be seen from the characterization of its eigenvalues using Rayleigh quotient (see Section \ref{sec:prelim}). By the Courant-Fischer theorem (Theorem \ref{thm:CF-eigen}), the first eigenvalue of $L_z$ is characterized by

\[\lambda_1(L_z)  = \min_{u\in \C^n, u\neq \mathbf{0}} R_{L_z}(D^{1/2}u)  = \min_{u\in \C^n, u\neq \mathbf{0}}\frac{u^*B_z^*B_zu}{u^*Du}  = \min_{u\in \C^n, u\neq \mathbf{0}} \frac{\sum_{w\to v} |u_w-z\cdot u_v|^2}{\sum_vd_v|u_v|^2}\]

\begin{figure}[htbp]
    \centering

    \begin{minipage}[b]{0.35\textwidth}
        \centering
        \includegraphics[width=\textwidth]{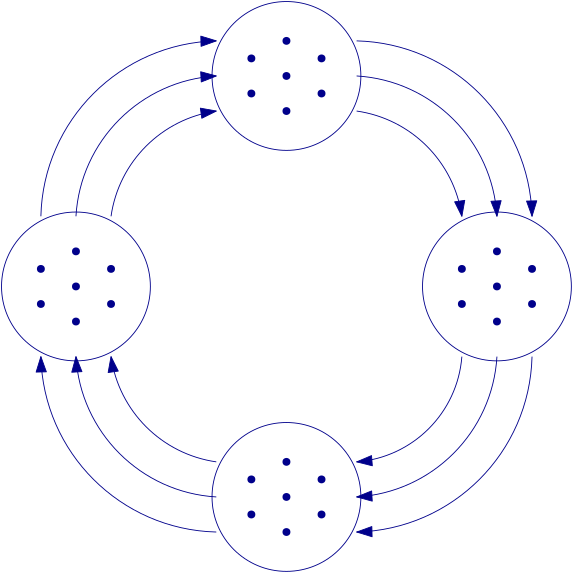}
        \caption{4-periodic component}
        \label{fig:figure1}
    \end{minipage}
    \hfill
    \begin{minipage}[b]{0.42\textwidth}
        \centering
        \includegraphics[width=\textwidth]{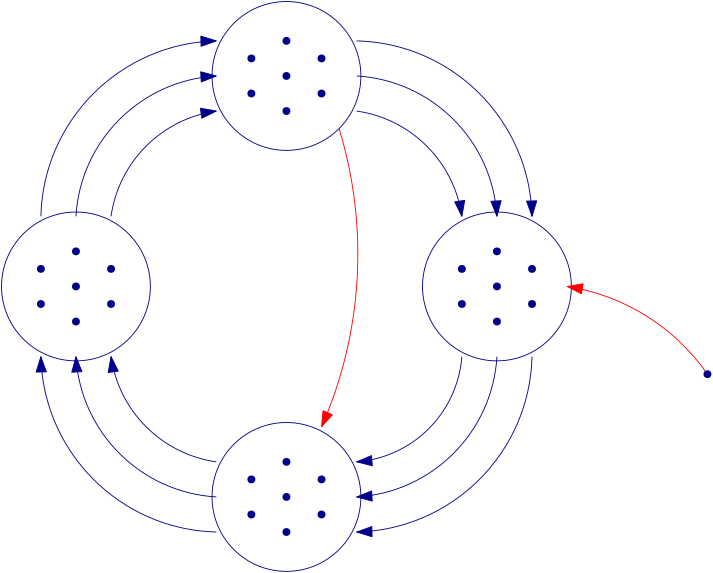}
        \caption{Nearly 4-periodic component with error edges}
        \label{fig:figure2}
    \end{minipage}

\end{figure}

When $p\geq 2$ and the graph $G$ contains a $p$-periodic component, $\lambda_1(L_z)$ is $0$ and the associated test eigenvector $x$ assigns the value $z^{-i}$ to vertices in $C_i$ and the value $0$ to vertices outside  the component. This motivates us to define a notion that captures the approximate periodicity of $G$. Analogous to the bipartiteness ratio introduced by Trevisan \cite{Trevisan2009MaxCut}, we define the \emph{$p$-periodicity ratio} of a graph $G$ as follows:

\begin{definition}
    Let $G=(V,E)$ be a digraph, $p\in \N$ and $z= e^{2\pi i/p}$. For a complex vector $y \in \{0, z^0, z^1 ,\ldots, z^{p-1}\}^n$, the \emph{$p$-periodicity ratio} of $y$ in $G$ is defined as
    \[\mathcal{P}_p(y):= \frac{\sum_{u\to v}|y_u-z\cdot y_v|}{\sum_vd_v|y_v|} \]

    The \emph{$p$-periodicity ratio} of $G$ is

    \[\mathcal{P}_p(G) := \min_{\substack{y \in \{0, z^0, z_1 ,\ldots, z^{p-1}\}^n \\ y\neq \mathbf{0}}} \mathcal{P}_p(y)\]
\end{definition} 

The quantity $\cP_p(G)$ also has a combinatorial interpretation. It can be viewed as an approximation of the minimum number of edges whose removal creates a $p$-periodic component $S$, relative to the number of edges incident to $S$ (i.e,  the volume of $S$). Formally, for  
$S\subseteq V$, define
\[\erredge(S) := \min_{F\subseteq E} \left\{|F|: S\text{ is a } p\text{-periodic component of }G'=(V, E\setminus F) \right\}\]
Let $\vol(S)=\sum_{v\in S} d_v$ be the total number of  edges entering or leaving vertices in $S$, then we have 
\[ \frac{\cP_p(G)}{2}\leq \min_{S\subseteq V}\frac{\erredge(S)}{\vol(S)} \leq \frac{\cP_p(G)}{2\sin(\pi/p)}  \]

More generally, for $\emptyset \subsetneq S\subseteq V$, we have

\[\frac{\left(\min_{\substack{y \in \{0, z^0, z_1 ,\ldots, z^{p-1}\}^n \\ \supp(y)=S}} \mathcal{P}_p(y)\right)}{2}\leq \frac{\erredge(S)}{\vol(S)} \leq \frac{\left(\min_{\substack{y \in \{0, z^0, z_1 ,\ldots, z^{p-1}\}^n \\ \supp(y)=S}} \mathcal{P}_p(y)\right)}{2\sin(\pi/p)}\]

Figure \ref{fig:figure2} shows an example of a nearly $4$-periodic component with error edges. See Section \ref{sec:periodicity} for the statement and proof of the general weighted case. 

Lange et al. \cite{lange2015frustration} proved Cheeger-type inequalities and higher-order Cheeger-type inequalities for graph Laplacians with cyclic signatures. As a result, the following Cheeger-type inequality and higher-order Cheeger-type inequality for periodicity are shown in their work:

\begin{theorem}\label{thm:cheeger-unified}
    (\cite{lange2015frustration}, Theorem 4.1 and Theorem 5.1) For a digraph $G=(V,E)$, let $p\in \N$ and $z = e^{2\pi i/p}$. Let $L_z =  I - D^{-1/2}A_zD^{-1/2}$ be its associated rotated Laplacian matrix with eigenvalues $0\leq \lambda_1(L_z) \leq \cdots \leq \lambda_n(L_z)\leq 2$, then

    \[\lambda_1(L_z)/2 \leq \mathcal{P}_p(G) \leq 2\sqrt{2\lambda_1(L_z)}\]
    and for every $k\in \N$

    \[\lambda_k(L_z)/2\leq \min_{\substack{\text{ disjointly supported} \\y_{1}, \ldots, y_{k}\in  \{0, z^0, \ldots, z^{p-1}\}^n }} \max_{i\in[k]} \mathcal{P}_p(y_{i}) \leq O(k^3) \cdot \sqrt{\lambda_{k}(L_z)} \]
    
\end{theorem}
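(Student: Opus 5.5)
The plan treats the two chains of inequalities separately, following Trevisan's bipartiteness argument for the first and the Louis--Raghavendra--Tetali--Vempala / Lee--Oveis Gharan--Trevisan template for the second.

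\textbf{Lower bound in the first chain.} Take $y\in\{0,z^0,\ldots,z^{p-1}\}^n$ attaining $\mathcal{P}_p(G)$ and plug $u=y$ into the Rayleigh quotient characterization of $\lambda_1(L_z)$. Each term satisfies $|y_w-zy_v|\le 2$, so $|y_w-zy_v|^2\le 2|y_w-zy_v|$. Also $|y_v|^2=|y_v|$ because $|y_v|\in\{0,1\}$. Hence
\[
\lambda_1(L_z)\le \frac{\sum|y_w-zy_v|^2}{\sum d_v|y_v|^2}\le 2\,\mathcal{P}_p(y).
\]
For the higher-order lower bound, take $k$ disjointly supported vectors $y_1,\ldots,y_k$. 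Their span is $k$-dimensional. For $u=\sum c_iy_i$, the numerator is $\sum_{w\to v}|u_w-zu_v|^2$. An edge between the supports of $y_i$ and $y_j$ contributes $|c_iy_{i,w}-zc_jy_{j,v}|^2\le 2|c_i|^2+2|c_j|^2$, while the corresponding terms in $|y_i|$ and $|y_j|$ are each $1$. Summing gives numerator $\le 2\sum_i|c_i|^2\sum|y_{i,w}-zy_{i,v}|$. Since the denominator splits exactly, the Rayleigh quotient is at most $2\max_i\mathcal{P}_p(y_i)$, and Courant--Fischer gives $\lambda_k\le 2\max_i\mathcal{P}_p(y_i)$.

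\textbf{Upper bound $\mathcal{P}_p(G)\le 2\sqrt{2\lambda_1}$.} Let $x=D^{-1/2}$ applied to the bottom eigenvector, normalized so that $\max|x_v|=1$. I would round $x$ randomly. Pick $t$ with $t^2$ uniform on $[0,1]$ and an angle offset $\theta$ uniform on $[0,2\pi)$. Set $y_v=0$ if $|x_v|<t$, and otherwise let $y_v$ be the $p$-th root of unity whose sector contains $e^{i\theta}x_v/|x_v|$. The proof then has three steps.
\begin{itemize}
\item[(1)] $\mathbb{E}[\sum_v d_v|y_v|]=\sum_v d_v|x_v|^2$.
\item[(2)] For each edge $w\to v$, $\mathbb{E}|y_w-zy_v|\le C\,|x_w-zx_v|\,(|x_w|+|x_v|)$.
\item[(3)] Cauchy--Schwarz turns step (2) into
\[
\sqrt{\textstyle\sum|x_w-zx_v|^2}\,\sqrt{\textstyle\sum (|x_w|+|x_v|)^2}\le \sqrt{\lambda_1}\,\sqrt{2\textstyle\sum d_v|x_v|^2}.
\]
\end{itemize}
Taking the ratio of expectations gives some realization $y$ with $\mathcal{P}_p(y)\le C'\sqrt{\lambda_1}$.

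Step (2) is the main obstacle and splits into two contributions. The radial part is at most $\bigl||x_w|^2-|x_v|^2\bigr|$ times the jump size $\le 2$. The angular part is the probability that the random offset separates the phases of $x_w$ and $zx_v$ into different sectors, which is $\le \frac{p}{2\pi}\,|\arg x_w-\arg(zx_v)|$, times the jump $|1-z|=2\sin(\pi/p)$. This product is $O(1)\cdot|\arg\text{ difference}|$, and multiplying by $\min(|x_w|^2,|x_v|^2)$ gives $\lesssim |x_w-zx_v|\,(|x_w|+|x_v|)$. Getting the constant exactly $2\sqrt2$ requires careful bookkeeping here; I expect the cyclic-signature analysis of \cite{lange2015frustration} to handle it. If needed, an alternative is to accept $O(\sqrt{\lambda_1})$ first and then tighten.

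\textbf{Higher-order upper bound.} I would follow the LRTV / Lee--Oveis Gharan--Trevisan template. Take the embedding $F(v)=(x_1(v),\ldots,x_k(v))\in\mathbb{C}^k$ given by the first $k$ eigenvectors. Its energy $\sum_{w\to v}\|F(w)-zF(v)\|^2$ is at most $\lambda_k\sum_v d_v\|F(v)\|^2$. Partition vertices by the direction of $F(v)$ modulo the $U(1)$ phase, which is legitimate because $|\langle F(w),zF(v)\rangle|$ ignores phase. Use a random projection or radial clustering to obtain $k$ disjoint regions, each carrying $\Omega(1/k)$ of the mass, and localize $F$ on each with smooth cutoffs at $O(k^{O(1)})$ energy loss. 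Finally apply the one-vector rounding from the previous part to each localized vector, losing a square root. The polynomial dependence ($k^3$) comes from the spreading and localization lemmas, which I expect to transfer verbatim from \cite{lee2014multiway} once distances are measured phase-invariantly.
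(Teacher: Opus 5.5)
Your two lower bounds are correct. They are the standard test-vector and Courant--Fischer arguments, which the paper does not write out. Your proof of $\mathcal{P}_p(G)\le 2\sqrt{2\lambda_1(L_z)}$ follows the same route as the paper's Appendix~A: a random threshold on $|x_v|^2$ plus a uniformly random sector offset (Lemma~\ref{lem:rounding}), rotation-equivariance $\mathrm{Round}(zx_v)=z\,\mathrm{Round}(x_v)$, a per-edge bound, and Cauchy--Schwarz. The paper does not derive the per-edge constant either; it imports $C=2$ from Lemma~4.2 of Lange et al.

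Your own sketch of step (2) loses constants in two places. You charge the radial event a jump of $2$, but it is exactly $1$, since one endpoint rounds to $0$ and the other to a root of unity. You also bound the angular term by the arc rather than the chord. Carried out as sketched, it gives $(2+\pi/2)\sqrt{2\lambda_1}$ rather than $2\sqrt{2\lambda_1}$.

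You can get $C=2$ yourself. Put $a=x_w$ and $b=zx_v$, and let $\Delta\in[0,\pi]$ be the angle between them. Conditioned on both surviving the threshold, $\E_\theta|y_w-zy_v|$ is the linear interpolation of the concave chord function $\phi\mapsto 2\sin(\phi/2)$ between the two multiples of $2\pi/p$ that bracket $\Delta$. Hence it is at most $2\sin(\Delta/2)=\left|\frac{a}{|a|}-\frac{b}{|b|}\right|$. So the angular contribution is at most $\min(|a|,|b|)^2\left|\frac{a}{|a|}-\frac{b}{|b|}\right|$. For $|a|\le|b|$ this equals $|a|\left|a-|a|\frac{b}{|b|}\right|\le|a|\,(|a-b|+|b|-|a|)\le 2|a|\,|a-b|$. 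Adding the radial term $\left||a|^2-|b|^2\right|\le|a-b|\,(|a|+|b|)$ gives $\E|y_w-zy_v|\le 2|a-b|\,(|a|+|b|)$. Your step (3) then gives exactly $2\sqrt{2\lambda_1}$; its right-hand side is missing a factor $\sqrt{\sum_v d_v|x_v|^2}$, presumably a typo.

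For the higher-order upper bound, the paper gives no argument; it cites Theorem~5.1 of Lange et al. Your outline is therefore at the same level, but it is a plan rather than a proof, and one step does not typecheck as stated. The localized maps $\psi_i=\eta_iF$ are $\mathbb{C}^k$-valued, whereas the one-vector rounding needs a scalar $x\in\mathbb{C}^n$ whose phases it reads off. The repair is easy. $R(\psi_i)$ is a mediant of the Rayleigh quotients of the coordinate functions $v\mapsto\psi_i(v)_j$, so some coordinate has quotient at most $R(\psi_i)$ and support inside $\supp(\psi_i)$. Rounding that coordinate keeps the outputs disjointly supported and costs $2\sqrt{2R(\psi_i)}$.

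Also, the transfer from Lee--Oveis Gharan--Trevisan is not literally verbatim. The relevant space is complex projective space with $d(u,v)=\min_{|w|=1}\|\bar F_u-w\bar F_v\|$. The needed ingredients do hold:
\begin{itemize}
\item $d$ is a pseudometric.
\item $d(u,v)\le\ang(F_u,F_v)$, so Lemma~\ref{lem:angle-bound} gives, for a $[0,1]$-valued $1/\delta$-Lipschitz cutoff $\eta_i$, the estimate $\|\psi_i(u)-z\psi_i(v)\|\le(1+2/\delta)\|F_u-zF_v\|$.
\item Spreading follows from $|\langle\bar F_u,\bar F_v\rangle|=1-d(u,v)^2/2$ together with the isotropy $\sum_v d_v|\langle x,F_v\rangle|^2=\|x\|^2$, which comes from the orthonormal columns of $D^{1/2}F$.
\end{itemize}
However, the random partition has to be built on complex projective space (real dimension $2k-2$) rather than on the sphere in $\mathbb{R}^k$. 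The specific $O(k^3)$ is Lange et al.'s bookkeeping, which you have not reproduced.
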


The first pair of inequalities are a generalization of Trevisan's Cheeger-type inequality for bipartiteness, which corresponds to the case of $p=2$. The right-hand side of the first pair of inequalities is proven by a threshold rounding argument developed in \cite{lange2015frustration}. In particular, this implies a Cheeger-type rounding algorithm for finding a periodic component in the graph, which can be viewed as a generalization of Trevisan's algorithm \cite{Trevisan2009MaxCut}. We give an exposition of the Cheeger-type rounding algorithm (Algorithm \ref{algo:cheeger}) and its analysis in Appendix \ref{sec:cheeger}, as we  use it as a subroutine in our main algorithm (Algorithm \ref{algo:main}).  The second pair of inequalities are a periodicity analogue of the higher-order Cheeger's Inequality of Lee, Oveis-Gharan and Trevisan \cite{lee2014multiway}. Like \cite{lee2014multiway}, this proof is \emph{not} accompanied by an explicit polynomial-time algorithm.

\subsection{A spectral algorithm for finding many periodic components via higher eigenvalues of the rotated Laplacian}

Our main result is a spectral algorithm for finding many disjoint, nearly periodic components (Algorithm \ref{algo:main}). It is a rotated-Laplacian variant of the algorithm of Louis et al.  \cite{LouisRaghavendraTetaliVempala12}. In particular, the algorithm implies a new higher-order Cheeger-type inequality for periodicity in the spirit of that in \cite{LouisRaghavendraTetaliVempala12}.

\begin{theorem}\label{thm:main}
    There exists a constant $c\in (0,1)$ such that the following holds: for a digraph $G=(V,E)$, $p\in \mathbb{N}$ and $z= e^{2\pi i/p}$. Let $L_z =  I - D^{-1/2}A_zD^{-1/2}$ be its associated rotated Laplacian matrix with eigenvalues $0\leq \lambda_1(L_z) \leq \cdots \leq \lambda_n(L_z)\leq 2$, then we have
    \[\min_{\substack{\text{ disjointly supported} \\y_{1}, \ldots, y_{ck}\in  \{0, z^0, \ldots, z^{p-1}\}^n }} \max_{i\in \left[\lceil ck\rceil\right]} \mathcal{P}_p(y_{i}) \leq O(\sqrt{\log k}) \cdot \sqrt{\lambda_{k}(L_z)} \]
\end{theorem}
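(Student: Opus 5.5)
We follow the strategy of Louis, Raghavendra, Tetali and Vempala, carried out in complex space. Let $u_1,\ldots,u_k\in\C^n$ be orthonormal eigenvectors of $L_z$ for $\lambda_1(L_z),\ldots,\lambda_k(L_z)$, and set $x_i=D^{-1/2}u_i$. Embed each vertex $v$ as $F(v)=(x_1(v),\ldots,x_k(v))\in\C^k$. Since $L_z=D^{-1/2}B_z^*B_zD^{-1/2}$, we get the energy bound
\[\sum_{w\to v}\|F(w)-zF(v)\|^2\le \lambda_k\sum_v d_v\|F(v)\|^2=\lambda_k k.\]
Orthonormality gives the isotropy condition $\sum_v d_v F(v)F(v)^*=I_k$. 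From isotropy, the standard "spreading" lemma follows: any set of vertices whose normalized directions $F(v)/\|F(v)\|$ lie in a region of small diameter carries at most an $O(1/k)$-fraction of the total mass $\sum_v d_v\|F(v)\|^2$. Here directions must be taken modulo the phase symmetry $F\mapsto e^{i\theta}F$, which we accommodate by measuring distances projectively.

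The partitioning step proceeds as follows. Draw $m=O(\log k)$ independent standard complex Gaussian vectors $g_1,\ldots,g_m\in\C^k$. Assign each vertex $v$ to the index $j^*(v)=\arg\max_j|\langle g_j,F(v)\rangle|$, giving $m$ cells $S_1,\ldots,S_m$ (with $2^m=\mathrm{poly}(k)$ combinations, we may instead use LRTV's hyperplane-tuple variant). On cell $S_j$, the natural rotated test vector is
\[y^{(j)}(v)=\langle g_j,F(v)\rangle\cdot\mathbf 1[v\in S_j],\]
possibly multiplied by a smooth cutoff in the ratio $|\langle g_j,F(v)\rangle|/\max_{l\ne j}|\langle g_l,F(v)\rangle|$ so that it is Lipschitz across cell boundaries. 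The phase of $\langle g_j,F(v)\rangle$ plays the role of the cyclic labelling.

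To bound the energy of $y^{(j)}$, we split into two kinds of edges. Edges $w\to v$ inside one cell contribute at most $|\langle g_j,F(w)-zF(v)\rangle|^2$, whose expectation is $\|F(w)-zF(v)\|^2$. Edges crossing cells require the new ingredient: a bound on the probability that the maximal-magnitude indices of the correlated complex Gaussian sequences $(\langle g_j,F(w)\rangle)_j$ and $(\langle g_j,zF(v)\rangle)_j$ differ. We aim to show this probability is $O(\sqrt{\log m})\cdot \|\hat F(w)-z\hat F(v)\|$ for normalized vectors. Note that $|\langle g_j,zF(v)\rangle|=|\langle g_j,F(v)\rangle|$, so the rotation does not change the argmax. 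This correlated-complex-Gaussian bound is the main obstacle. We would prove it by reducing to the real case via $\C^k\cong\R^{2k}$, writing $|\cdot|^2$ as a sum of two squares. This yields $\chi^2_2$ (exponential) maxima, and we would redo the real argument for the gap between the top two order statistics of $m$ exponentials, which is $\Omega(1)$ with good probability.

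Summing over edges, and using the mass and spreading bounds to ensure that $\Omega(k)$ cells (after choosing $m$ appropriately and splitting into $ck$ groups via the LRTV random-projection-to-$O(\log k)$-dimensions step) each retain $\Omega(1/k)$ of the mass, Markov's inequality and averaging give $\lceil ck\rceil$ disjointly supported complex vectors, each with Rayleigh quotient $O(\log k\cdot\lambda_k)$. Finally, apply the Cheeger-type threshold rounding of Lange et al. (the right-hand inequality of Theorem~\ref{thm:cheeger-unified}, run on each restricted vector, i.e.\ Algorithm~\ref{algo:cheeger}). Each vector is converted into a $\{0,z^0,\ldots,z^{p-1}\}$-valued vector supported within its cell, with $\mathcal P_p\le O(\sqrt{\log k\cdot\lambda_k})$. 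The supports remain disjoint, which proves the theorem.
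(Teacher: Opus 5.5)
Your overall architecture is the one the paper uses: complex Gaussian projections, cells defined by the index of largest magnitude so that the phase carries the cyclic label, a complex analogue of the Charikar--Makarychev--Makarychev argmax bound, then the rounding of Lange et al. Your plan for the complex argmax bound (redoing the real argument with Rayleigh-distributed magnitudes) is essentially Appendix~\ref{appendix}. However, two steps do not go through as you state them.

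\textbf{Counting and mass.} With $m=O(\log k)$ Gaussians the argmax rule gives only $m$ cells. The later fix (``choosing $m$ appropriately and splitting into $ck$ groups'') is never specified. The paper simply draws $k$ complex Gaussians $g_1,\dots,g_k$, one per cell, with no dimension reduction. The hyperplane-tuple alternative does not transfer: sign patterns of $\mathrm{Re}\,\langle g_j,F(v)\rangle$ are not invariant under $F(v)\mapsto z^jF(v)$, so they would split a periodic component. The spreading lemma does not help either. It controls the mass of small-diameter regions, but argmax cells are wide cones, so it gives no lower bound on the mass a cell retains. What is needed is concentration of $Y^{(i)}=\sum_v d_v|h^{(i)}_v|^2$: its mean lies in $[\frac{\log k}{2},16\log k]$ and its variance is $O((\log k)^2)$. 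The paper proves the variance bound by expanding the even function $f(x)=|x_i|^2\,\mathbb{I}[|x_i|>|x_j|\ \forall j\neq i]$ in complex Hermite polynomials. There are no degree-one terms, hence $\mathbb{E}[f(X)f(Y)]\le\mathbb{E}[f]^2+|\rho|^2\,\mathbb{E}[f^2]$. This is combined with $\sum_{u,v}d_ud_v|\langle F_u,F_v\rangle|^2=k$, which does follow from your isotropy condition, and then one-sided Chebyshev is applied.

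\textbf{The Rayleigh-quotient route loses a square root for sharp cells.} An edge $w\to v$ leaving cell $j$ contributes $|y_w|^2$. The probability that it leaves is linear, not quadratic, in the phase-invariant distance $d(w,v)$ between $F(w)/\|F(w)\|$ and $F(v)/\|F(v)\|$. So the expected crossing contribution is of order $\frac{(\log k)^{3/2}}{k}\|F(w)\|^2d(w,v)=O\bigl(\frac{(\log k)^{3/2}}{k}\|F(w)\|\,\|F(w)-zF(v)\|\bigr)$. By Cauchy--Schwarz this sums to $O((\log k)^{3/2}\sqrt{\lambda_k})$. Against a denominator of order $\log k$, the Rayleigh quotient of a sharp-cut vector is then bounded only by $O(\sqrt{\lambda_k\log k})$, not $O(\lambda_k\log k)$. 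This loss is genuine, since the crossing probability really is of order $d(w,v)$. Consequently $\mathcal{P}_p\le 2\sqrt{2R}$ yields only $O((\lambda_k\log k)^{1/4})$. A smooth cutoff could in principle support a different, LOT-style argument. But you do not analyze it, and your argmax-disagreement bound is not the estimate such an argument would need.

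\textbf{What the paper does instead.} It applies Lemma~\ref{lem:lange} in its $\ell_1$ form directly to the sharp vectors $h^{(i)}$, bounding $\mathbb{E}\bigl[\sum_{u\to v}w(u,v)(|h_u|+|h_v|)\,|h_u-zh_v|\bigr]$. There the crossing term is $\mathbb{E}\bigl[|h_u|^2\,\mathbb{I}[h_u\neq0,h_v=0]\bigr]$. The disagreement probability alone does not control this, because $|h_u|^2$ is correlated with that event. Proposition~\ref{prop:5.13} shows the correlation is nonpositive (via Harris's inequality), giving $O\bigl(\frac{(\log k)^{3/2}}{k}\|F_u\|^2 d(u,v)\bigr)$. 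Lemma~\ref{lem:angle-bound} and Cauchy--Schwarz then bound the numerator by $O((\log k)^{3/2}\sqrt{\lambda_k})$. Together with the concentrated $\Theta(\log k)$ denominator and Markov's inequality, this gives $\mathcal{P}_p(y^{(i)})=O(\sqrt{\lambda_k\log k})$ for each $i$ with constant probability.
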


Compare to the second pair of inequalities in Theorem \ref{thm:cheeger-unified}. Theorem \ref{thm:main} obtains an improved upper bound of $O(\sqrt{\log k})\cdot \sqrt{\lambda_k(L_z)}$ instead of $O(k^3)\cdot \sqrt{\lambda_k(L_z)}$, at the price of identifying a constant factor fewer disjointly supported vectors.

Informally, the algorithm proceeds in three stages: in the first stage, the algorithm computes the spectral embedding of the vertices using the first $k$ eigenvectors of the rotated Laplacian matrix. In the second stage, it constructs $k$ disjointly supported vectors by applying Gaussian projection to the embedding vectors. In the last stage, the Cheeger-type rounding algorithm (Algorithm \ref{algo:cheeger}) is applied to each of these vectors to produce disjoint components. Our analysis shows that each component has periodicity ratio at most $O(\sqrt{\log k})\cdot \sqrt{\lambda_k(L_z)}$ with constant probability, yielding the following algorithmic result of Theorem \ref{thm:main}:

\begin{theorem}\label{thm:main-algo}
    There exists a randomized polynomial-time spectral algorithm and a constant $0<c<1$ such that, given a digraph $G=(V,E)$ and $k, p\in \N $ as input, the algorithm outputs with constant probability a collection of disjointly supported vectors $y_{1}, \ldots, y_{ck} \in  \{0, z^0, \ldots, z^{p-1}\}^n $ with periodicity ratio bounded by  
    \[\max_{i\in [\lceil ck \rceil]} \mathcal{P}_p(y_{i}) \leq O(\sqrt{\log k}) \cdot \sqrt{\lambda_{k}(L_z)}\]
    where $\lambda_k(L_z)$ is defined in Theorem \ref{thm:main}.
\end{theorem}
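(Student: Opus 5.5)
We will follow the three-stage scheme of Louis, Raghavendra, Tetali and Vempala, adapted to complex vectors.

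\textbf{Stage 1: spectral embedding.} Compute orthonormal eigenvectors $f_1,\ldots,f_k$ of $L_z$ for $\lambda_1(L_z),\ldots,\lambda_k(L_z)$. Map each vertex $v$ to $U_v = D^{-1/2}(f_1(v),\ldots,f_k(v))\in\C^k$. By the Rayleigh-quotient expression for $L_z$, the embedding satisfies
\[\sum_{w\to v}\|U_w - zU_v\|^2 \le \lambda_k\sum_v d_v\|U_v\|^2 .\]
It is also isotropic: $\sum_v d_v U_vU_v^* = I_k$. This is the only point where orthonormality is used.

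\textbf{Stage 2: disjointly supported vectors.} This follows the ``orthogonal separator'' step of LRTV. First normalize $\bar U_v = U_v/\|U_v\|$. Then draw $m = \Theta(k)$ independent standard complex Gaussian vectors $g_1,\ldots,g_m\in\C^k$. For each vertex $v$ let $j(v)=\arg\max_j |\langle g_j,\bar U_v\rangle|$, and assign $v$ to bucket $j(v)$. For each bucket $j$, set
\[x_j(v)=\langle g_j,U_v\rangle\cdot \mathbf 1[j(v)=j]\]
after applying a smooth cutoff in the LRTV style. The cutoff multiplies by a Lipschitz function of the ratio between the largest and second-largest projection magnitudes, or of the distance from a set's ``center.'' Keeping the phase of $\langle g_j,U_v\rangle$ is essential: if $U_w\approx zU_v$, then $\langle g,U_w\rangle\approx z\langle g,U_v\rangle$, so the rotated structure is preserved.

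We then need two facts, each with constant probability.

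\emph{(a) Mass.} By isotropy and a spreading argument, each bucket carries $O(1/k)$ of the total mass $\sum_v d_v\|U_v\|^2$ in expectation. Hence a constant fraction of buckets, $\ge ck$ of them, receive $\Omega(1/k)$ of the mass.

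\emph{(b) Energy.} Summed over buckets, the energy $\sum_{w\to v}|x_j(w)-zx_j(v)|^2$ is at most $O(\log k)\cdot\lambda_k\sum_v d_v\|U_v\|^2$. The $\log k$ factor comes from two sources. One is the magnitude of the maximum of $k$ Gaussians, $|\langle g_{j(v)},\bar U_v\rangle|^2\approx \log k$. The other is the event that endpoints $w,v$ with $\bar U_w\approx z\bar U_v$ are cut into different buckets.

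By averaging, at least $ck$ buckets have Rayleigh quotient $O(\log k)\lambda_k$.

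\textbf{Main obstacle.} The hardest part is bounding the probability in (b) that $j(w)\ne j(v)$. For each $j$, the pair $(\langle g_j,\bar U_w\rangle, z\langle g_j,\bar U_v\rangle)$ is a pair of complex Gaussians with correlation $\rho$, where $1-|\rho|\lesssim \|\bar U_w - z\bar U_v\|^2$, and the pairs are independent across $j$. We need that the argmaxes of the magnitudes differ with probability $O(\sqrt{\log k}\cdot\|\bar U_w-z\bar U_v\|)$. Only the real analogue of this is known, so we would prove the complex version. The first approach to try is to write $Y_j=\rho X_j+\sqrt{1-|\rho|^2}\,N_j$ and condition on the $X$'s. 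The gap between the top two $|X_j|$ is typically $\gtrsim 1/\sqrt{\log k}$, since $|X_j|^2$ are exponential random variables, whose order-statistics gaps are explicit. A perturbation of size $\sqrt{1-|\rho|^2}\cdot O(\sqrt{\log k})$ flips the maximum only with correspondingly small probability. A union/anticoncentration bound over the order statistics then finishes. This rotation-invariant structure in fact makes the complex case cleaner than the real one. A further concern is that $\|U\|$ varies across vertices, which is handled as in LRTV through the normalized-versus-unnormalized triangle inequality.

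\textbf{Stage 3: rounding.} Apply the Cheeger-type rounding algorithm (Algorithm \ref{algo:cheeger}, i.e.\ the upper bound in Theorem \ref{thm:cheeger-unified} applied to a vector rather than an eigenvector) to each good $x_j$. This yields $y_j\in\{0,z^0,\ldots,z^{p-1}\}^n$ with $\supp(y_j)\subseteq\supp(x_j)$ and $\mathcal P_p(y_j)\le O(\sqrt{R(x_j)})=O(\sqrt{\log k})\sqrt{\lambda_k(L_z)}$. The $y_j$ are disjointly supported because the buckets are disjoint. Every step runs in polynomial time, which gives Theorem \ref{thm:main-algo} and hence Theorem \ref{thm:main}.
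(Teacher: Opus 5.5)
The three-stage pipeline you propose (spectral embedding, complex Gaussian projection that buckets by largest magnitude while keeping the phase, then Lange et al.\ rounding) is the same as the paper's Algorithm~\ref{algo:main}, and you correctly isolate the complex argmax theorem as the key new ingredient. The gap is in Stage~2(b). You route everything through an $\ell_2$ Rayleigh quotient $R(x_j)=O(\log k)\lambda_k$ and take a square root in Stage~3. But your only tool for an edge whose endpoints fall in different buckets is the split probability, which is linear in $\delta_{wv}:=\|\bar U_w-z\bar U_v\|$.

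With hard buckets, such an edge costs about $\log k\,(\|U_w\|^2+\|U_v\|^2)$ in squared energy, with probability about $\sqrt{\log k}\,\delta_{wv}$. Its expected contribution is therefore of order $(\log k)^{3/2}(\|U_w\|^2+\|U_v\|^2)\delta_{wv}$, which is linear rather than quadratic in $\delta_{wv}$. After Cauchy--Schwarz this is only $O((\log k)^{3/2})\sqrt{\lambda_k}$ times the mass. So $R(x_j)=O(\sqrt{\log k\,\lambda_k})$, and the $\sqrt{R}$ rounding certifies only $O((\log k\,\lambda_k)^{1/4})$.

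The smooth cutoff you mention is what would make the cross-bucket cost quadratic. But then Theorem~\ref{fact:correlation} is no longer the relevant estimate, and you give no analysis of the cutoff. The top-two magnitude gap is $\Theta(1/\sqrt{\log k})$, so the cutoff needs slope about $\sqrt{\log k}$, and the sup-norm change across an edge is about $\sqrt{\log k}\,\delta_{wv}$. The obvious Lipschitz estimate built from these only gives $R=O((\log k)^2)\lambda_k$, i.e.\ $O(\log k)\sqrt{\lambda_k}$.

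The paper never bounds the Rayleigh quotient of the bucket vectors. It keeps hard buckets and uses the first-moment form of the rounding (Lemma~\ref{lem:lange}), $\mathcal{P}_p(y)\le 2\sum_{u\to v}w(u,v)|x_u-zx_v|(|x_u|+|x_v|)/\sum_v d_v|x_v|^2$, whose numerator is linear in the split probability. It bounds the expected numerator by $O((\log k)^{3/2})\sqrt{\lambda_k}$ and applies Cauchy--Schwarz only at the end. Even this needs $\mathbb{E}\bigl[|h_u|^2\,\mathbb{I}[h_u\neq 0,h_v=0]\bigr]$ to be controlled at the level of $\mathbb{E}\bigl[\max_l|\langle F_u,g_l\rangle|^2\bigr]\cdot\Pr[h_u\neq 0,h_v=0]$. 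That is a decorrelation between the magnitude and the split event, which the paper proves as Proposition~\ref{prop:5.13}. Your plan does not address this, and a plain H\"older bound would lose the linear dependence on $\delta_{wv}$.

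Step~(a) also does not follow as written. By symmetry every bucket gets exactly a $1/m$ share of the mass in expectation, so the expectation statement carries no information. What you need is that a few buckets cannot hog the mass, and isotropy alone does not give this for argmax buckets. It only gives $\sum_v d_v|\langle g_j,U_v\rangle|^2=\|g_j\|^2\approx k$ per bucket, against a mean of $\Theta(\log k)$. The buckets are wide cones, not small-diameter sets, so a spreading argument does not apply.

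The paper supplies the missing step with a second-moment bound $\mathrm{Var}\bigl[\sum_v d_v|h^{(i)}_v|^2\bigr]=O((\log k)^2)$. It proves this via a complex Hermite (noise-stability) expansion in which the degree-one part vanishes, combined with $\sum_{u,v}d_ud_v|\langle F_u,F_v\rangle|^2=k$. One-sided Chebyshev then makes each bucket heavy with constant probability, and Markov controls the numerator. This gives $\Omega(k)$ good buckets in expectation, hence with constant probability.

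Finally, your heuristic for the argmax theorem is off by a $\sqrt{\log k}$ factor, which would propagate to the final bound. A perturbation of size $\sqrt{\epsilon\log k}$ against a gap of $1/\sqrt{\log k}$ suggests flip probability $\sqrt{\epsilon}\log k$. The correct accounting compares the top index with each competitor separately, with noise $O(\sqrt{\epsilon})$ per pair, conditions on $|X_1|=t$, and integrates over $t$. This is what Lemmas~\ref{lem:APdifference-condition} and~\ref{lem:APtwovar} do.
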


At a high level, our algorithm is guided by the following intuition: let $\{F_v \}_{v\in V}\subseteq \C^k$ be the spectral embedding vectors of the vertices using the first $k$ eigenvectors of $L_z$. For vertices from the same cluster $S$, the corresponding embedding vectors are expected to be close to a common one-dimensional subspace $\operatorname{span}(g)$ for $g\in \C^k, \|g\| =1$. Moreover, the label information within $S$ is encoded by the 
\emph{phase} of the complex projection $\ip{F_v}{g}$ (see Section \ref{sec:algo-main}).

In the process of proving Theorems \ref{thm:main} and \ref{thm:main-algo}, we will need to analyze the probability that the largest magnitudes of two sequences of coordinate-wise correlated complex Gaussian random variables occur at different indices. In particular, we prove the following upper bound on this probability, which generalizes a result for real Gaussian random variables by Charikar, Makarychev and Makarychev (\cite{charikar2006near}, Theorem 4.1), and may be of independent interest.

\begin{theorem}

    (Restatement of Theorem \ref{fact:correlation})
    Let $(X_1,Y_1), \ldots, (X_k, Y_k)$ be $k$  independent pairs of complex random variables, where $(X_i, Y_i)$ are $\rho_i $-correlated complex Gaussians.  If the average absolute covariance of $\{X_i\}$ and $\{Y_i\}$ is at least $1-\epsilon$, that is,

    \[\frac{\sum_i |\rho_i|}{k} \geq 1-\epsilon\]

    then 
    
    \[\pr[\arg\max_i |X_i| \neq \arg\max_i |Y_i|] \leq c_1 \sqrt{\epsilon\log k}\]
    for some absolute constant $c_1>0$.
\end{theorem}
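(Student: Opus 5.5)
The plan is to reduce to the structure of the real-valued argument of Charikar, Makarychev and Makarychev, using the rotation invariance of complex Gaussians to strip off the phases of the correlations.

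\textbf{Step 1: normalization.} Write $\rho_i = |\rho_i| e^{i\theta_i}$. Replacing $Y_i$ by $e^{-i\theta_i}Y_i$ does not change $|Y_i|$, so it does not change the event $\arg\max_i |X_i| \neq \arg\max_i |Y_i|$. Hence we may assume each $\rho_i = r_i\in[0,1]$ is real with $\frac1k\sum_i r_i \ge 1-\epsilon$.

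Next we represent each pair as $Y_i = r_i X_i + \sqrt{1-r_i^2}\, W_i$, where the $X_i, W_i$ are i.i.d.\ standard complex Gaussians. Let $\delta_i := 1-r_i \ge 0$, so that $\frac1k\sum_i\delta_i\le\epsilon$ and $1-r_i^2\le 2\delta_i$. We may also assume $\epsilon \le c/\log k$ for a small constant $c$, since otherwise the bound is trivial once $c_1$ is large.

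\textbf{Step 2: the key bound.} Let $j=\arg\max_i|X_i|$ and $M=|X_j|$. Let $M_2$ be the second-largest magnitude among the $|X_i|$, and set $\Delta := M - M_2$. If
\[
\bigl||Y_i| - |X_i|\bigr| < \Delta/2 \quad \text{for all } i,
\]
then the argmax is preserved. We have
\[
\bigl||Y_i|-|X_i|\bigr| \le \delta_i|X_i| + \sqrt{2\delta_i}\,|W_i|.
\]
Rather than taking a union bound over all $i$ (which would lose a factor of $k$), we condition on $X$ and use a weighted tail bound.

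\textbf{Step 3: summing over competitors.} A competitor $i\neq j$ can overtake only if $|Y_i|\ge |Y_j|$. Since $Y_j\approx X_j$, this roughly requires $|X_i| + \sqrt{2\delta_i}|W_i| \gtrsim M - \delta_j M - \sqrt{2\delta_j}|W_j|$. The magnitudes $|X_i|^2$ are $\mathrm{Exp}(1)$, so the top value satisfies $M\approx\sqrt{\log k}$ and the gap behaves like $\Delta \approx 1/(2\sqrt{\log k})$ in distribution, because $M^2 - M_2^2$ is exactly $\mathrm{Exp}(1)$ by memorylessness. This exponential structure makes the complex case cleaner than the real one.

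The failure probability then splits into two parts.

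(a) The perturbation at the top index, $\delta_j M + \sqrt{2\delta_j}|W_j|$, exceeds $\Delta/4$. By symmetry $j$ is uniform, so $\E[\delta_j]=\frac1k\sum\delta_i\le\epsilon$. The gap of $M^2$, namely $M^2 - M_2^2 \sim \mathrm{Exp}(1)$, is independent of the index $j$. We have
\[
\Pr\bigl[\Delta \le t\bigr] \approx \Pr\bigl[M^2-M_2^2 \le 2Mt\bigr] \lesssim t\sqrt{\log k},
\]
with the large-deviation event $M \gg \sqrt{\log k}$ handled separately. Taking $t \approx \sqrt{\delta_j}$ and applying Jensen's inequality, $\E\sqrt{\delta_j}\le\sqrt\epsilon$, gives a contribution $O(\sqrt{\epsilon\log k})$. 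The term $\delta_j M$ contributes $\epsilon\log k$, which is at most $\sqrt{\epsilon\log k}$ in the relevant regime.

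(b) Some other index $i$ with $|X_i|$ near the top gets pushed above. For this we use a ``level'' argument: the expected number of $i$ with $M^2 - |X_i|^2\in[s,s+ds]$ is about $e^{s}\,ds$ near the top. Each such $i$ succeeds with probability about $\Pr[\sqrt{2\delta_i}\,|W_i|\,M \gtrsim s]$. Summing over $i$ weighted by $\delta_i$, and using the exchangeability of the indices given their ranks, gives
\[
\sum_i \frac1k\cdot O\!\bigl(\sqrt{\delta_i\log k}\bigr) \le O\bigl(\sqrt{\epsilon\log k}\bigr),
\]
again by concavity of $\sqrt{\cdot}$.

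The main obstacle is making (b) rigorous. The rank of $i$ and the value $\delta_i$ are coupled only through the symmetric law of $X$, but the bound must be uniform over non-uniform $\delta_i$. I would first try to write the failure event as
\[
\bigcup_{i\neq j}\Bigl\{|X_i|^2 \ge M^2 - O\bigl(M\sqrt{\delta_i}|W_i| + M\sqrt{\delta_j}|W_j|\bigr)\Bigr\}
\]
and compute the probability for each fixed $i$ by integrating over $|X_i|^2$ against the distribution of the maximum of the remaining $k-1$ exponentials. That distribution has an explicit CDF, $(1-e^{-x})^{k-1}$. This reduces each term to a one-dimensional integral of order $\frac1k\sqrt{\delta_i\log k}$, and the proof closes by summing and applying Jensen.
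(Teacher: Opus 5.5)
Your route is essentially the paper's, i.e.\ the Charikar--Makarychev--Makarychev scheme. You condition on which index attains $\max_i|X_i|$ and bound the chance that a fixed competitor overtakes it by a one-dimensional integral against the law of the maximum of the i.i.d.\ $\mathrm{Exp}(1)$ variables $|X_l|^2$. You split off the large-maximum tail, union-bound over competitors, and finish with Jensen on $\sqrt{\delta_i}$. The phase normalization and the exact $\mathrm{Exp}(1)$ top spacing are fine.

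The gap is in the reduction to the small-noise regime. You only assume the \emph{average} $\epsilon=\frac1k\sum_i\delta_i\le c/\log k$. Your per-index bound $O(\frac1k\sqrt{\delta_i\log k})$, however, comes from linearizing $e^{\eta}-1=O(\eta)$ with $\eta$ of order $M\sqrt{\delta_i}\,|W_i|+\delta_jM^2$. That needs $\delta_i\log k=O(1)$ for \emph{every} index entering the union bound, and individual $\delta_i$ can be as large as $1$.

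Concretely, take $\delta_1=1$ and $\delta_i=0$ for $i\ge 2$. Then $\epsilon=1/k$, the target is $O(\sqrt{\log k/k})$, and the true failure probability is about $2/k$.
\begin{itemize}
\item When index $1$ is the top (probability $1/k$), the right-hand side $M-\delta_1M-\sqrt2|W_1|$ of your overtaking condition is nonpositive. So all $k-1$ competitors qualify, and the union over competitors contributes about $1$.
\item When index $1$ is a competitor, your sufficient condition $|X_1|+\sqrt2|W_1|\ge M$ has probability $k^{-1/3+o(1)}$, polynomially above the target.
\end{itemize}
Your part (a) survives a bad top index, because there it is a single event capped at probability $1$. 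But your final plan puts $\sqrt{\delta_j}|W_j|$ back inside the sum over competitors, and nothing in (b) handles competitors with large $\delta_i$.

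The missing step is the truncation the paper performs. Let $I=\{i:\delta_i<1/(4\log k)\}$. By Markov, $k-|I|\le 4k\epsilon\log k$, so $|I|\ge k/2$ and the average of $\delta_i$ over $I$ is at most $2\epsilon$.

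Run your argument on the coordinates in $I$ only. There every $\delta_i\log k<1/4$, so the exponential moments of $M\sqrt{\delta_i}|W_i|$ (with $M\le 2\sqrt{\log k}$ after removing the tail) are $1+O(\sqrt{\delta_i\log k})$, and the linearization is legitimate.

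Then add $\Pr[\arg\max_i|X_i|\notin I]+\Pr[\arg\max_i|Y_i|\notin I]=2(k-|I|)/k\le 8\epsilon\log k=O(\sqrt{\epsilon\log k})$. This uses that $|Y_1|,\dots,|Y_k|$ are also i.i.d., so their argmax is uniform as well. Outside these two events the full argmaxes coincide with the argmaxes over $I$. With this added, your plan goes through and essentially coincides with the paper's proof.
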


The result for real Gaussian random variables was used in analyzing their algorithm for Unique Games. This result was later used in the analysis of the algorithm of  Louis et al. \cite{LouisRaghavendraTetaliVempala12}. Our proof is inspired by \cite{charikar2006near} and follows a similar strategy.

Finally, as part of the analysis of our algorithm, we also fill in a gap \footnote{The gap was a missing proof of a non-trivial inequality. The author acknowledged that they do not have a proof of it \cite{louis-private}. However, in Louis' thesis \cite{louis2014complexity}, an alternative algorithm is provided, whose analysis does not rely on this inequality.} in the original proof of Louis et al.  \cite{LouisRaghavendraTetaliVempala12}, with 7 pages of proof in Appendix \ref{ap:prop5.13}, including ChatGPT-assisted calculations.

\subsection{Organization}

In Section \ref{sec:prelim} we introduce necessary notations and background. In Section \ref{sec:periodicity} we give a combinatorial interpretation of the periodicity ratio. In Section \ref{sec:algo-main} we present our algorithm for finding many nearly periodic components and in Section \ref{sec:analysis-main} we present its analysis. In Appendix \ref{sec:cheeger}, we provide an exposition of the Cheeger-type rounding algorithm for finding periodic component and its analysis. Deferred proofs are provided in Appendices \ref{ap:prelim}, \ref{ap:prop5.13}, \ref{appendix}.

\section{Preliminaries}\label{sec:prelim}

We will work with vector spaces over the field $\C$ of complex numbers. For a complex number $a$, we denote the real part of $a$ by $\text{Re}(a)$ and the imaginary part by $\text{Im}(a)$. The magnitude of $a$ is $|a|:=\sqrt{a^*a}$. For complex vectors $x, y\in \C$, their inner product is $\ip{x}{y}:= x^*y$.   The $\ell_2$ norm or the length of a vector $x\in \C^n$ is $\|x\| := \sqrt{\ip{x}{x}}$.  We denote $\bar{x} = x/\|x\|$ as the normalized unit vector of $x$. For $n\in \N$ we use $[n]$ to denote the set $\{1,\ldots, n\}$. We use $\I(\mathcal{A})$ to denote the indicator random variable for the event $\mathcal{A}$.

In an undirected graph we use $u\sim v$ to represent the relation that $u$ is adjacent to $v$. In a directed graph we use $u\to v$ to indicate there is an edge from $u$ to $v$. We use an ordered pair $(u,v)$ to represent the directed edge $u\to v$.  For a weighted graph $G=(V,E)$, a weight function $w:V\times V \to \R_{\geq 0}$ is associated such that each edge $(u,v)\in E\subseteq V\times V$ has weight $w(u,v)$. If the graph is undirected, we have $w(u,v) = w(v,u)$.  

For a weighted graph $G=(V,E)$, the adjacency matrix $A$ has $A(v,u) =w(u,v)$ if there is an edge $u\to v$ with weight $w(u,v)$. The in-degree matrix $D_{\text{in}}$ is a diagonal matrix with $D_{\text{in}}(v,v) = \sum_{u\to v} w(u,v)$. Similarly, the out-degree matrix $D_{\text{out}}$ is a diagonal matrix with $D_{\text{out}}(v,v) = \sum_{v\to u} w(v,u)$. The incidence matrix $B_z$ is the $m\times n$ matrix with $m=|E|, n=|V|$, such that the row indexed by $(u,v)$ has $\sqrt{w(u,v)}$ in the $u$-th entry and $(-z^*)\cdot \sqrt{w(u,v)}$ in the $v$-th entry.

The following are basic properties of rotated Laplacian matrices. Their proofs are deferred to Appendix \ref{ap:prelim}.
\begin{proposition}\label{prop:Lz-property}  
    For $z=e^{2\pi i/d}$ for some $ d\in \mathbb{N}$, the normalized rotated Laplacian matrix is defined as $L_z = I-D^{-1/2}A_zD^{-1/2}$ and has the following properties:
    \begin{enumerate}
        \item $L_z = D^{-1/2}B_z^*B_zD^{-1/2}$
        \item $L_z$ is Hermitian, positive semidefinite, and has eigenvalues $0\leq \lambda_1(L_z) \leq \cdots \leq \lambda_n(L_z)\leq 2$.
    \end{enumerate}
\end{proposition}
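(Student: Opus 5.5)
The plan is to verify the factorization in item 1 by a direct entrywise computation, and then read off item 2 from it together with a Gershgorin/Rayleigh-quotient argument. For item 1, I would expand $B_z^*B_z = \sum_{(u,v)\in E} b_{(u,v)}^* b_{(u,v)}$ as a sum of rank-one terms, where $b_{(u,v)}$ is the row of $B_z$ indexed by the edge $u\to v$. This row has entries $\sqrt{w(u,v)}$ at $u$ and $-z^*\sqrt{w(u,v)}$ at $v$, so $b_{(u,v)}^*b_{(u,v)}$ has
\begin{itemize}
\item $w(u,v)$ in position $(u,u)$;
\item $|z|^2 w(u,v)=w(u,v)$ in position $(v,v)$;
\item $-z^*w(u,v)$ in position $(u,v)$, coming from $\overline{b_u}\,b_v$;
\item $-z\,w(u,v)$ in position $(v,u)$.
\end{itemize}

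Summing over edges, the diagonal entry at $v$ collects out-weights and in-weights, which gives $D_{\text{out}}(v,v)+D_{\text{in}}(v,v)=D(v,v)$. The off-diagonal part is $-(zA + z^*A^*)$, since $A(v,u)=w(u,v)$. Hence $B_z^*B_z = D - A_z$, and conjugating by $D^{-1/2}$ gives item 1. We assume every vertex has positive degree, so that $D^{-1/2}$ is defined. The only care needed is bookkeeping of which of the $(u,v)$ and $(v,u)$ entries carries $z$ versus $z^*$, matched against the convention $A(v,u)=w(u,v)$.

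For item 2, Hermitian-ness and positive semidefiniteness are immediate from the form $M^*M$ with $M = B_zD^{-1/2}$. Indeed, $x^*L_zx=\|B_zD^{-1/2}x\|^2\ge 0$, so all eigenvalues are real and satisfy $\lambda_1\ge 0$.

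For the upper bound $\lambda_n\le 2$, I would use the Rayleigh quotient with $u=D^{-1/2}x$:
\[
\frac{x^*L_zx}{x^*x}=\frac{\sum_{u\to v} w(u,v)\,|u_u-z u_v|^2}{\sum_v d_v|u_v|^2}.
\]
Applying $|a-b|^2\le 2|a|^2+2|b|^2$ and $|z|=1$, the numerator is at most $2\sum_{u\to v}w(u,v)(|u_u|^2+|u_v|^2) = 2\sum_v d_v|u_v|^2$. The last equality holds because each vertex's weight is counted once per incident out-edge and once per incident in-edge, which is exactly $d_v$. Courant–Fischer then gives $\lambda_n\le 2$.

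None of these steps is a real obstacle; the only point to double-check is the degree identity in the final sum, which matches the definition $D=D_{\text{in}}+D_{\text{out}}$.
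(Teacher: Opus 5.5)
Your proposal is correct and follows essentially the same route as the paper: you verify $B_z^*B_z = D - A_z$ entrywise (as a sum of per-edge rank-one terms rather than the paper's column inner products, which is the same computation). You then get Hermitian and positive semidefinite from the Gram form, and $\lambda_n \le 2$ from $|a-b|^2 \le 2|a|^2+2|b|^2$ together with the degree identity and Courant--Fischer. The only blemish is notational: you use $u$ both for a vertex and for the vector $D^{-1/2}x$ (writing $u_u$), which you should rename.
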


The \emph{Rayleigh quotient} of $x\in \C^n$ with respect to $L_z$ is 

\[R_{L_z}(x): = \frac{x^*L_zx}{x^*x}\]

Substituting $x$ using $D^{1/2}x$, the Rayleigh quotient of $D^{1/2}x$ with respect to $L_z$ has the form  

\[R_{L_z}(D^{1/2}x) = \frac{\sum_{u\to v} w(u,v)\cdot|x_u- z\cdot x_v|^2}{\sum_v d_v |x_v|^2}\]

The Courant-Fischer theorem provides a min-max characterization of the eigenvalues of $L_z$ using the Rayleigh quotient as an objective function: 
\begin{theorem}\label{thm:CF-eigen}
    (Courant-Fischer) For $1\leq k\leq n$, the eigenvalue $\lambda_k(L_z)$ of $L_z$ has the characterization
    \begin{equation}\label{eq:rayleigh}
        \lambda_k(L_z) = \min_{\substack{k-dim\\ \text{subspace } \mathcal{V}}}\max_{x \in \mathcal{V}-\{\mathbf{0}\}} R_{L_z}(x) = \min_{\substack{k-dim\\ \text{subspace } \mathcal{V}}} \max_{x \in \mathcal{V}-\{\mathbf{0}\}} R_{L_z}(D^{1/2}x)
    \end{equation}
\end{theorem}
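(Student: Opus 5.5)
The statement to prove is the Courant--Fischer characterization (Theorem \ref{thm:CF-eigen}) for $L_z$. Since Proposition \ref{prop:Lz-property} shows $L_z$ is Hermitian, the plan is to reduce to the standard spectral theorem over $\C$ and then handle the change of variables $x\mapsto D^{1/2}x$.

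\textbf{First equality.} By the spectral theorem, $L_z$ has an orthonormal eigenbasis $v_1,\ldots,v_n\in\C^n$ with real eigenvalues $\lambda_1\le\cdots\le\lambda_n$. For $x=\sum_j c_j v_j$ we have
\[R_{L_z}(x)=\frac{\sum_j\lambda_j|c_j|^2}{\sum_j|c_j|^2}.\]
Take $\mathcal V=\operatorname{span}(v_1,\ldots,v_k)$. Every $x\in\mathcal V$ is then a convex-weighted average of $\lambda_1,\ldots,\lambda_k$, so $R_{L_z}(x)\le\lambda_k$. Hence the min--max value is at most $\lambda_k$.

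Conversely, let $\mathcal V$ be any $k$-dimensional subspace. Since $\dim\mathcal V+\dim\operatorname{span}(v_k,\ldots,v_n)=k+(n-k+1)>n$, the two subspaces intersect in some nonzero $x$. For this $x$, $R_{L_z}(x)$ is an average of $\lambda_k,\ldots,\lambda_n$, so $R_{L_z}(x)\ge\lambda_k$. Thus $\max_{x\in\mathcal V}R_{L_z}(x)\ge\lambda_k$ for every $\mathcal V$, and equality holds. Two routine points: the maximum over $\mathcal V-\{\mathbf 0\}$ is attained because $R_{L_z}$ is scale invariant and continuous on the compact unit sphere of $\mathcal V$, and the minimum over subspaces is attained by the eigenspan.

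\textbf{Second equality.} Assume, as the paper implicitly does, that $d_v>0$ for all $v$, so $D^{1/2}$ is invertible. Then $\mathcal V\mapsto D^{1/2}\mathcal V$ is a bijection on $k$-dimensional subspaces of $\C^n$, and $x\mapsto D^{1/2}x$ is a bijection from $\mathcal V-\{\mathbf 0\}$ onto $D^{1/2}\mathcal V-\{\mathbf 0\}$. Therefore
\[\max_{x\in\mathcal V-\{\mathbf 0\}}R_{L_z}(D^{1/2}x)=\max_{y\in D^{1/2}\mathcal V-\{\mathbf 0\}}R_{L_z}(y),\]
and minimizing over $\mathcal V$ on both sides gives the second equality. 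The explicit edge-sum form of $R_{L_z}(D^{1/2}x)$ then follows from item 1 of Proposition \ref{prop:Lz-property}: $x^*D^{1/2}L_zD^{1/2}x=\|B_zx\|^2$, and the squared norm of the row of $B_z x$ indexed by $(u,v)$ is $w(u,v)|x_u-z^*x_v|^2$, which matches the displayed form up to the paper's conjugation convention.

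\textbf{Main obstacle.} There is no real obstacle. The only care needed is to check that the dimension-counting and eigenbasis arguments work verbatim over $\C$ with Hermitian inner products, which they do because the eigenvalues of a Hermitian matrix are real.
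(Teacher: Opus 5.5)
Your proof is correct. The paper gives no proof of this statement and simply invokes the standard Courant--Fischer theorem for the Hermitian matrix $L_z$ (Proposition~\ref{prop:Lz-property}); your eigenbasis and dimension-counting argument, together with the invertible change of variables $\mathcal V\mapsto D^{1/2}\mathcal V$ (which needs $d_v>0$, already implicit in defining $D^{-1/2}$), is exactly the standard justification behind that citation. Your side remark is also accurate: with the paper's $B_z$ the edge sum has $|x_u-z^*x_v|^2$, not $|x_u-z\,x_v|^2$, but this is harmless here because $x\mapsto\bar x$ maps $k$-dimensional subspaces to $k$-dimensional subspaces, so the min--max value is the same under either form.
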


We also recall the one-sided Chebyshev's inequality

\begin{fact}\label{fact:cherbyshev} (\cite{boucheron2003concentration}])
    Let $X$ be a real random variable, for any $t>0$, 

    \[\pr\left[X<\E[X] - t\cdot \sqrt{\mathrm{Var}[X]} \right] \leq 
    \frac{1}{1+ t^2}\]
\end{fact}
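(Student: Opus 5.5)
The statement to prove is the one-sided Chebyshev (Cantelli) inequality, Fact~\ref{fact:cherbyshev}. I will derive it from Markov's inequality applied to a shifted square, optimizing over the shift.

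\textbf{Reductions.} Let $\mu=\E[X]$ and $\sigma^2=\mathrm{Var}[X]$.
\begin{itemize}
\item If $\sigma=0$, then $X=\mu$ almost surely. The event $X<\mu$ then has probability $0$, so the bound holds trivially.
\item If $\sigma>0$, set $Y=\mu-X$ and $s=t\sigma>0$. Then $\E[Y]=0$, $\E[Y^2]=\sigma^2$, and the event in question is $\{Y>s\}$.
\end{itemize}
It therefore suffices to show
\[\pr[Y>s]\le \frac{\sigma^2}{\sigma^2+s^2},\]
since the right-hand side equals $1/(1+t^2)$.

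\textbf{Markov step.} Fix a parameter $a\ge 0$. On the event $Y>s$ we have $Y+a>s+a>0$, and hence $(Y+a)^2>(s+a)^2$. Markov's inequality applied to the nonnegative variable $(Y+a)^2$ gives
\[\pr[Y>s]\le \frac{\E[(Y+a)^2]}{(s+a)^2}=\frac{\sigma^2+a^2}{(s+a)^2}.\]
Here we used $\E[Y]=0$ to drop the cross term $2a\E[Y]$.

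\textbf{Optimizing the shift.} Choose $a=\sigma^2/s$. Then
\[\sigma^2+a^2=\frac{\sigma^2(s^2+\sigma^2)}{s^2}, \qquad (s+a)^2=\frac{(s^2+\sigma^2)^2}{s^2},\]
so the ratio simplifies to $\sigma^2/(s^2+\sigma^2)$. Substituting $s=t\sigma$ gives $\pr[X<\mu-t\sigma]\le 1/(1+t^2)$.

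The only real choice in the argument is the shift $a$, obtained by minimizing $(\sigma^2+a^2)/(s+a)^2$ over $a$. Everything else is routine, and the strict versus non-strict inequality is harmless because Markov's inequality bounds the larger event $\{(Y+a)^2\ge (s+a)^2\}$.
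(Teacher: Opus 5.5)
Your argument is correct and complete. It is the standard proof of Cantelli's inequality: apply Markov's inequality to $(Y+a)^2$ and take the optimal shift $a=\sigma^2/s$. Your separate treatment of $\sigma=0$ correctly avoids dividing by $s=0$.

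The paper gives no proof here; it simply cites the fact from \cite{boucheron2003concentration}. So the only thing to note is that, like the statement itself, you implicitly assume $\mathrm{Var}[X]<\infty$.
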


\section{Periodicity ratio and its combinatorial interpretation}\label{sec:periodicity}

In this section, we define periodicity ratio for weighted graphs, and show that there is a combinatorial interpretation of the periodicity ratio. 

\begin{definition}
    Let $G=(V,E)$ be a digraph with weight function $w:V\times V\to \R_{\geq0}$, $p\in \N$ and $z= e^{2\pi i/p}$. For a complex vector $y \in \{0, z^0, z^1 ,\ldots, z^{p-1}\}^n$, the \emph{$p$-periodicity ratio} of $y$ in $G$ is defined as
    \[\mathcal{P}_p(y):= \frac{\sum_{u\to v} w(u,v)\cdot |y_u-z\cdot y_v|}{\sum_vd_v|y_v|} \]

    the $p$-periodicity ratio of $G$ is

    \[\mathcal{P}_p(G) = \min_{\substack{y \in \{0, z^0, z_1 ,\ldots, z^{p-1}\}^n \\ y\neq \mathbf{0}}} \mathcal{P}_p(y)\]
\end{definition} 

Here, we show $p$-periodicity ratio approximately captures the minimum total weight of an edge set whose removal creates a $p$-periodic component $S$, relative to the total weight of the edges incident to $S$ (i.e, the volume of $S$). Formally, for  
$S\subseteq V$, define $\erredge(S)$ as
\[\erredge(S) := \min_{F\subseteq E} \left\{\sum_{(u,v)\in F}w(u,v): S\text{ is a } p\text{-periodic component of }G'=(V, E\setminus F) \right\}\]
Let $\vol(S)=\sum_{v\in S} d_v$ be the total weight of the edges entering or leaving vertices in $S$, then the following holds:

\begin{proposition} 
    For $\emptyset \subsetneq S\subseteq V$, we have

\[\frac{\left(\min_{\substack{y \in \{0, z^0, z_1 ,\ldots, z^{p-1}\}^n \\ \supp(y)=S}} \mathcal{P}_p(y)\right)}{2}\leq\frac{\erredge(S)}{\vol(S)} \leq \frac{\left(\min_{\substack{y \in \{0, z^0, z_1 ,\ldots, z^{p-1}\}^n \\ \supp(y)=S}} \mathcal{P}_p(y)\right)}{2\sin(\pi/p)}\]

In particular,

    \[ \frac{\cP_p(G)}{2}\leq \min_{S\subseteq V}\frac{\erredge(S)}{\vol(S)} \leq \frac{\cP_p(G)}{2\sin(\pi/p)}  \]
    
\end{proposition}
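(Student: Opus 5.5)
The plan is to compare, for a fixed $S$, the numerator of $\mathcal{P}_p(y)$ for vectors $y$ with $\supp(y)=S$ against the weight of edge sets $F$ whose removal makes $S$ a $p$-periodic component. Two facts drive everything.

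\begin{itemize}
\item A labeling $S=C_0\cup\cdots\cup C_{p-1}$ corresponds to the vector $y$ with $y_v=z^{-i}$ on $C_i$ and $y_v=0$ off $S$. Conversely, every $y\in\{0,z^0,\ldots,z^{p-1}\}^n$ with $\supp(y)=S$ arises from such a labeling.
\item Under this correspondence, an edge $u\to v$ with both endpoints in $S$ satisfies $y_u=z\,y_v$ if and only if it goes from some $C_i$ to $C_{i+1 \pmod p}$.
\end{itemize}

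Both denominators equal $\sum_{v\in S}d_v=\vol(S)$, since $|y_v|=1$ on $S$. So only numerators need comparing. I classify each edge with at least one endpoint in $S$ as \emph{crossing} (exactly one endpoint in $S$), \emph{consistent} (internal with $y_u=zy_v$), or \emph{violating} (internal otherwise). A crossing edge contributes exactly $w(u,v)\cdot 1$. A consistent edge contributes $0$. A violating edge contributes $w(u,v)|1-z^{j}|$ for some $j\not\equiv 0 \pmod p$, and $|1-z^j|\in[2\sin(\pi/p),2]$. Edges disjoint from $S$ contribute nothing.

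\textbf{Lower bound.} Take an optimal $F$ for $\erredge(S)$ and let $C_0,\ldots,C_{p-1}$ be the periodic partition of $S$ in $G'=(V,E\setminus F)$. Since $S$ must be a component of $G'$, $F$ contains every crossing edge. Since the internal edges of $E\setminus F$ respect the partition, every violating edge of the associated $y$ lies in $F$. Hence the numerator of $\mathcal{P}_p(y)$ is at most $w(\text{crossing})+2\,w(\text{violating})\le 2\,w(F)=2\erredge(S)$. Dividing by $\vol(S)$ and minimizing over $y$ gives the left inequality.

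\textbf{Upper bound.} Take the minimizing $y$ with $\supp(y)=S$, and let $F$ be the crossing edges together with the violating edges. Removing $F$ disconnects $S$ and leaves only consistent internal edges, so $S$ is a $p$-periodic component of $G'$ and $\erredge(S)\le w(F)$. Each violating edge pays at least $2\sin(\pi/p)$ times its weight in the numerator, and each crossing edge pays exactly $1$ times its weight. Therefore $w(F)\le \text{numerator}/\min\{1,2\sin(\pi/p)\}$.

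\textbf{Main obstacle: crossing edges.} For $p\ge 6$ we have $\min\{1,2\sin(\pi/p)\}=2\sin(\pi/p)$, giving exactly the stated bound. For $p<6$ the crossing edges are a genuine problem, not just a gap in the argument. Take $p=2$, $S=\{v\}$, and a single edge leaving $v$. Then $\mathcal{P}_2=1$ and $\vol(S)=1$, but $\erredge(S)=1>1/2$. So I would either prove the right-hand inequality with $\min\{1,2\sin(\pi/p)\}$ in place of $2\sin(\pi/p)$, or restrict to $p\ge 6$. The version with $\min\{1,2\sin(\pi/p)\}$ still yields the qualitative equivalence, up to a constant factor, for every fixed $p$.

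The ``in particular'' statement follows by minimizing both sides over nonempty $S$. Every nonzero $y$ has a nonempty support $S$, and each such $S$ is realized by some $y$, so $\min_S\min_{\supp(y)=S}\mathcal{P}_p(y)=\mathcal{P}_p(G)$.
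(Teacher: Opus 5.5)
Your argument is the paper's own. For the lower bound you build the same vector $y_v=z^{-i}$ on $C_i$ from an optimal $F$, and charge cut edges $w(u,v)$ and jumping edges at most $2w(u,v)$. For the upper bound, your crossing-plus-violating edges are exactly the paper's set $F=\{(u,v)\in E: y_u\neq z\,y_v\}$.

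Your objection is correct, and it locates a genuine error in the paper. The paper asserts $|y_u-z\,y_v|\ge 2\sin(\pi/p)$ for every $(u,v)\in F$ by writing $|y_u-z\,y_v|=|1-z^j|$. That identity only holds when both endpoints lie in $S$. A cut edge contributes exactly $1$, and $1<2\sin(\pi/p)$ for $2\le p\le 5$. So the stated right-hand inequality, and the paper's proof of it, is valid only for $p\ge 6$. Your one-edge example refutes it for every $2\le p\le 5$, not just $p=2$. The correct general constant is $\min\{1,2\sin(\pi/p)\}$, as you propose.

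The ``in particular'' version with $2\sin(\pi/p)$ also fails for small $p$. Your example does not show this, since there $\mathcal{P}_2(G)=0$. For $p=2$, take the unweighted directed $4$-cycle $a\to b\to c\to d\to a$, plus an edge $d\to x$ into a directed triangle $x\to y\to w\to x$.
\begin{itemize}
\item Alternating signs on the $4$-cycle, and $0$ elsewhere, give $\mathcal{P}_2(G)\le 1/9$.
\item The odd triangle forces $\erredge(V)=1$, and removing $w\to x$ suffices; also $\vol(V)=16$.
\item Every nonempty proper $S$ has at least one cut edge and $\vol(S)\le 14$, so its ratio is at least $1/14$.
\end{itemize}
Hence $\min_S \erredge(S)/\vol(S)=1/16>1/18\ge \mathcal{P}_2(G)/(2\sin(\pi/2))$. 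Your corrected global bound, $\min_S\erredge(S)/\vol(S)\le \mathcal{P}_p(G)/\min\{1,2\sin(\pi/p)\}$, is the right statement.
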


\begin{proof}

We first focus on proving the first pair of inequalities.

Fix a set $S$ such that $\emptyset \not\subseteq S \subseteq V$.  For the lower bound, let $F$ be a set of error edges which witnesses $\erredge(S)$, i.e, $\erredge(S) = \sum_{(u,v)\in F}w(u,v)$, and $S$ is a $p$-periodic component in $G'=(V, E\setminus F)$.   Partition $S$ into disjoint subsets $C_0, \ldots, C_{p-1}$ such that there only exist edges from $C_i$ to $C_{(i+1) \text{ mod } p}$ in $G'$. Let $y\in \{0, z^0, z^1, \ldots, z^{p-1}\}^n$ be the vector with $y_v = z^{-i}$ if $v\in C_i$ for some $i\in [p]$, otherwise $y_v=0$. Note that if $(u,v)\in E\setminus F$ then $|y_u - z\cdot y_v| = 0$.   

$F$ consists of two types of error edges.  A jumping edge $u\to v$ for $u,v\in S$ is an edge that leaves $u\in C_i$ and enters $v\in C_j$ for $j\neq i+1  ~(\text{mod } p)$. The value of  $w(u,v)\cdot |y_u-z\cdot y_v|$ is at most $2w(u,v)$. A cut edge $u\to v$ is an edge such that one end point of $(u,v)$ is in $S$ while the other is not. For such an edge $w(u,v)\cdot |y_u-z\cdot y_v| = w(u,v)$. Adding together, and by definition $\cP_p(G) = \min \cP_p(y)$  for all $y\in \{0, z^0, z^1, \ldots, z^{p-1}\}^n\setminus \{\mathbf{0}\}$, we have

\[\cP_p(G) \leq \min_{\substack{y \in \{0, z^0, z_1 ,\ldots, z^{p-1}\}^n \\ \supp(y)=S}} \mathcal{P}_p(y)\leq  \cP_p(y)\leq \frac{2\erredge(S)}{\mathrm{Vol}(S)}\]

For the upper bound, consider the vector $y\in \{0, z^0, z^1, \ldots, z^{p-1}\}^n$ with $\supp(y) = S$ such that 
\[\cP_p(y)=\min_{\substack{y \in \{0, z^0, z_1 ,\ldots, z^{p-1}\}^n \\ \supp(y)=S}} \mathcal{P}_p(y)\]

We will show that  $\erredge(S) \leq \mathrm{Vol}(S) \cdot \cP_p(y)/2\sin(\pi/p)$. Define 

\[F:=\{(u,v)\in E : y_u \neq z\cdot y_v\}\]

Note that by definition $S$ is a $p$-periodic component in $G' = (V,E\setminus F)$, witnessed by the sets $C_i= \{v: y_v=z^{-i}\}$ for $i\in [p]$.
We argue that for every $(u,v)\in F$, $|y_u-z\cdot y_v| \geq 2\sin (\pi/p) $.  Note that $|1-e^{i\alpha}| = 2\sin(\alpha/2)$, which is monotonically increasing for $\alpha\in [0,\pi]$ and decreasing for $\alpha\in [\pi,2\pi]$. Therefore $|y_u - z\cdot y_v| = |1-z^j|$ for some $j\in \{1,\ldots, p-1\}$ is minimized at $|1-z| = 2\sin (\pi/p)$, so $ |y_u-z\cdot y_v| \geq 2\sin(\pi/p)$. Thus, we have

\begin{align*}
    \erredge(S)&\leq \sum_{(u,v)\in F} w(u,v)\\
    &\leq  \frac{\sum_{(u,v)\in F}w(u,v)\cdot |y_u-z\cdot y_v|}{2\sin (\pi/p)}\\
    &\leq \frac{\vol(S)\cdot \cP_p(y)}{2\sin(\pi/p)}
\end{align*}

Now we show the second pair of inequalities can be proved using the above arguments. For the lower bound we take a set $S$ that minimizes the ratio $\erredge(S)/\vol(S)$ over all $\emptyset \not\subseteq S\subseteq V$. Our lower bound argument above shows that there exists a vector $y\in \{0, z^0, z^1, \ldots, z^{p-1}\}^n$  with $\supp(y) =S$  such that 

\[\cP_p(G)\leq  \cP_p(y)\leq \frac{2\erredge(S)}{\mathrm{Vol}(S)}\]

For the upper bound, we consider the vector $y\in \{0, z^0, z^1, \ldots, z^{p-1}\}^n$ that minimizes the periodicity ratio, i.e, $\cP_p(G)= \cP_p(y)$. Let $Q = \supp(y)$. Our upper bound argument above  shows that 

\[\min_{S\subseteq V}\frac{\erredge(S)}{\vol(S)} \leq \frac{\erredge(Q)}{\vol(Q)} = \frac{ \cP_p(G)}{2\sin(\pi/p)}\]

\end{proof}

\section{A spectral algorithm for finding many nearly-periodic components}\label{sec:algo-main}

\begin{theorem}
    There exists a randomized polynomial-time spectral algorithm and a constant $0<c<1$ such that, given a digraph $G=(V,E)$ and $k, p\in \N $ as input, the algorithm outputs with constant probability a collection of disjointly supported vectors $y_{1}, \ldots, y_{\lceil ck\rceil} \in  \{0, z^0, \ldots, z^{p-1}\}^n $ with periodicity ratio bounded by  
    \[\max_{i\in [\lceil ck\rceil]} \mathcal{P}_p(y_{i}) \leq O(\sqrt{\log k}) \cdot \sqrt{\lambda_{k}(L_z)}\]
    where $\lambda_k(L_z)$ is defined in Theorem \ref{thm:main}.
    
\end{theorem}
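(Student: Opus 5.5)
We follow the three-stage scheme of Louis, Raghavendra, Tetali and Vempala, adapted to complex phases.

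\emph{Stage 1 (spectral embedding).} Compute orthonormal eigenvectors $v_1,\ldots,v_k$ of $L_z$ for $\lambda_1(L_z),\ldots,\lambda_k(L_z)$. Set $F_u := d_u^{-1/2}(v_1(u),\ldots,v_k(u))\in\C^k$. By Theorem \ref{thm:CF-eigen} and orthonormality we get
\[\sum_{u\to v} w(u,v)\|F_u - zF_v\|^2 \le \lambda_k \sum_u d_u\|F_u\|^2 = k\lambda_k .\]
Orthonormality also gives the spreading property: for every unit $g\in\C^k$ we have $\sum_u d_u|\ip{g}{F_u}|^2 = 1$. Consequently no small-diameter region of the normalized embedding $\bar F_u$ carries more than an $O(1/k)$ fraction of the mass $\sum_u d_u\|F_u\|^2$.

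\emph{Stage 2 (Gaussian partition).} Draw $h=\Theta(k)$ i.i.d.\ standard complex Gaussian vectors $g_1,\ldots,g_h\in\C^k$. Assign each vertex $u$ to $j(u)=\arg\max_j |\ip{g_j}{\bar F_u}|$. For each $j$, define the vector $x^{(j)}_u := \ip{g_j}{F_u}\cdot \I(j(u)=j)$, possibly after a smooth truncation that suppresses vertices whose top two magnitudes are close. These vectors are disjointly supported. The phase of $\ip{g_j}{F_u}$ carries the label, and $\ip{g_j}{F_u}-z\ip{g_j}{F_v} = \ip{g_j}{F_u - zF_v}$ because $z$ is a scalar.

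For an edge $u\to v$, the pairs $(\ip{g_j}{\bar F_u},\ip{g_j}{z\bar F_v})$ over $j\in[h]$ are i.i.d.\ $\rho$-correlated complex Gaussians with $|\rho| = |\ip{\bar F_u}{\bar F_v}|\ge 1-\|\bar F_u - z\bar F_v\|^2/2$. Since multiplying by $z$ does not change magnitudes, Theorem \ref{fact:correlation} gives
\[\pr[j(u)\ne j(v)]\le O\bigl(\sqrt{\log k}\,\|\bar F_u-z\bar F_v\|\bigr).\]
Summing over edges yields the expected numerator $\E\sum_j\sum_{u\to v} w(u,v)\,|x^{(j)}_u - z x^{(j)}_v|^2$. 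The same-part edges contribute at most $O(1)\cdot k\lambda_k$. For separated edges we combine the probability bound with $\|F_u\|\cdot\|\bar F_u - z\bar F_v\|\lesssim \|F_u - zF_v\| + \bigl|\|F_u\|-\|F_v\|\bigr|$ and Cauchy--Schwarz. This gives an $\ell_1$-type numerator bound of order $\sqrt{\log k}\cdot\sqrt{\lambda_k}\cdot\sum_u d_u\|F_u\|^2$, after the usual square-root change of variables that converts the subsequent Cheeger rounding into an $\ell_1$ analysis. For the denominators, the mass $\sum_u d_u|x^{(j)}_u|^2$ should be $\Omega(1/k)\cdot\sum_u d_u\|F_u\|^2$ in expectation for each $j$. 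Here we use the spreading property together with Fact \ref{fact:cherbyshev} to show that, with constant probability, $\Omega(k)$ of the indices $j$ receive mass $\Omega(1/k)$ while having numerator at most a constant multiple of average. This is a Markov-plus-counting argument.

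\emph{Stage 3 (rounding).} Apply the Cheeger-type rounding of Algorithm \ref{algo:cheeger} (Appendix \ref{sec:cheeger}) to each good $x^{(j)}$. This rounds the phase to the nearest power of $z$ and thresholds the magnitude, producing $y_j$ with $\supp(y_j)\subseteq\supp(x^{(j)})$ and $\cP_p(y_j)\le O(1)\cdot\sqrt{R(x^{(j)})}$. Hence $\cP_p(y_j) = O(\sqrt{\log k\cdot \lambda_k})$ for $ck$ of the indices.

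\emph{Main obstacle.} The hard step is the per-part accounting in Stage 2 that combines the $\sqrt{\epsilon\log k}$ separation probability with the magnitudes. Naively this loses a factor, and bounding the cross term requires a delicate inequality; this is the analogue of the gap in Louis et al.\ that the paper fills in Appendix \ref{ap:prop5.13}. We would first try to bound
\[\E\bigl[|\ip{g_j}{F_u}|^2\,\I(j(u)=j\ne j(v))\bigr]\]
directly via Gaussian conditioning on the maximum. This should give $O(\sqrt{\log k}\,\|\bar F_u-z\bar F_v\|)\cdot\|F_u\|^2/k$, using that the maximum of $h$ complex Gaussian magnitudes squared concentrates around $\log k$.
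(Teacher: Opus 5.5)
Your overall plan is the paper's: the bottom-$k$ eigenvectors of $L_z$, an argmax-magnitude partition by complex Gaussian projections, Theorem~\ref{fact:correlation} for $\Pr[j(u)\neq j(v)]$, Lemma~\ref{lem:angle-bound} with Cauchy--Schwarz, and Lange et al.\ rounding. But the proof does not close as written.

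\emph{The Stage 3 rounding bound.} You invoke $\mathcal{P}_p(y_j)\le O(1)\sqrt{R(x^{(j)})}$, and this cannot give $O(\sqrt{\log k\,\lambda_k})$. A separated edge contributes its full $|x_u|^2\approx \log k\,\|F_u\|^2$ to the $\ell_2$ numerator, with probability only \emph{linear} in $\|\bar F_u-z\bar F_v\|$. So your estimates yield at best $R(x^{(j)})=O(\sqrt{\log k\,\lambda_k})$, hence only $\mathcal{P}_p(y_j)=O((\log k\,\lambda_k)^{1/4})$. Your remark about an $\ell_1$ analysis points the right way, but it must replace the $\sqrt{R}$ bound, not precede it. Use the guarantee of Lemma~\ref{lem:lange}, $\mathcal{P}_p(y)\le 2\sum_{u\to v} w(u,v)|x_u-zx_v|(|x_u|+|x_v|)/\sum_v d_v|x_v|^2$, and bound every edge in that form. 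Same-part edges then cost $O(\log k\sqrt{\lambda_k})$ per part, via Cauchy--Schwarz and Proposition~\ref{prop:condition-one}, not $O(\lambda_k)$.

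\emph{The logarithm factors and the cross term.} On $\{j(u)=j\}$, $|\langle g_j,F_u\rangle|^2$ is a maximum of $k$ magnitudes, of order $\log k\,\|F_u\|^2$. So the cross term is $O((\log k)^{3/2}\|F_u\|^2\|\bar F_u-z\bar F_v\|/k)$, a factor $\log k$ larger than your stated target. The final ratio still comes out right only because each part's denominator is $\Theta(\log k)$ in expectation, not merely $\Omega(1)$, and your argument must use that. You also leave the cross-term inequality itself unproved. The paper (Proposition~\ref{prop:5.13}) gets it from Harris's inequality, which reduces it to showing that $\Pr[h_v\neq 0 \mid |h_u|^2=x]$ is nondecreasing in $x$. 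That monotonicity is checked with Rice densities and the monotonicity of $I_1/I_0$.

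\emph{The denominator step is missing its key ingredient.} One-sided Chebyshev needs $\mathbb{E}[(Y^{(j)})^2]=O((\log k)^2)$ for $Y^{(j)}=\sum_u d_u|x^{(j)}_u|^2$, and the spreading property does not supply this. The parts $\{u: j(u)=j\}$ are not small-diameter sets. What orthonormality gives directly, $Y^{(j)}\le \sum_u d_u|\langle g_j,F_u\rangle|^2=\|g_j\|^2\approx k$, forces only $\Omega(\log k)$ heavy parts, not $\Omega(k)$.

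What is needed is a pairwise bound $\mathbb{E}[f(X)f(Y)]\le O((\log k)^2/k)\,(|\langle\bar F_u,\bar F_v\rangle|^2+1/k)$ for the max-magnitude functional $f$ on the two correlated projection sequences. The dependence on the correlation must be \emph{quadratic}. The paper proves this (Lemma~\ref{lem:correlation}) by expanding $f$ in complex Hermite polynomials; evenness of $f$ kills the degree-one coefficients. It then sums the bound against $\sum_{u,v}d_ud_v|\langle F_u,F_v\rangle|^2=k$. A bound merely linear in $|\langle\bar F_u,\bar F_v\rangle|$ would, after Cauchy--Schwarz, give only $\mathbb{E}[(Y^{(j)})^2]=O((\log k)^2\sqrt{k})$, which is useless here.
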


To describe our main algorithm, we first introduce spectral embedding and Gaussian random variables. We then present a result of Lange et al. \cite{lange2015frustration} which establishes a Cheeger-type rounding algorithm for finding a nearly periodic component in a graph using the first eigenvector of its rotated Laplacian matrix. Building on these ingredients, we formally describe our spectral algorithm for finding many periodic components in Algorithm \ref{algo:main}. 

\subsection*{Spectral embedding}

Let $u^{(1)}, \ldots, u^{(k)} \in \C^n$ be the $k$ orthonormal eigenvectors associated with the $k$ smallest eigenvalues of $L_z$. Set $x^{(i)} := D^{-1/2}u^{(i)}$ such that $\|D^{1/2}x^{(i)}\| = 1$ for each $i\in[k]$. We define the spectral embedding of $V$ as the collection of $n$ vectors $\{F_v\}_{v\in V}$ defined as

\[F_v := (x_v^{(1)},  \ldots,  x_v^{(k)})\]

We will sometimes refer to $F$ as the matrix with $F_v$'s as its rows, or equivalently the matrix with $x^{(i)}$'s as its columns. Define the higher-order Rayleigh quotient of $D^{1/2}F$ with respect to $L_z$  asa

\[R_{L_z}(D^{1/2}F) :=  \frac{\sum_{u\to v} w(u,v)\cdot \|F_u -z\cdot  F_v\|^2}{\sum_v d_v \cdot \|F_v\|^2}\]

\subsection*{Gaussian random variables}
A standard Gaussian random variable is denoted by $Z\sim \gau(0,1)$, and a standard complex Gaussian random variable can be viewed as a two dimensional real Gaussian random variable, denoted by $ Z \sim \cC \gau (0,1)$, so that $Z\sim \cC\gau(0,1)$ is equivalent to $Z = \frac{1}{\sqrt{2}}(X + iY)$ for independent $X,Y \sim \gau(0,1)$. We use $\re(Z)=X$ and $\im(Z)=Y$ to denote its real part and imaginary part. A standard complex Gaussian vectors, denoted by $g:=(g_1,\ldots, g_k)\sim \cC\gau(0,I_k)$ is obtained by taking $k$ independent copies of standard complex Gaussian random variables, i.e, $\forall i\in [k], g_i \sim \cC\gau(0,1)$ .

\subsection*{Cheeger-type rounding algorithm of Lange et al.}

As mentioned in the introduction, a Cheeger-type rounding algorithm can be extracted from the work of Lange et al. \cite{lange2015frustration}. We provide an exposition of this algorithm and its analysis in Appendix \ref{sec:cheeger}. Here we will state the following lemma for the purpose of presenting our main algorithm and its analysis.

\begin{lemma}\label{lem:lange}
    (\cite{lange2015frustration}) For a digraph $G=(V,E)$, $p\in \N, z = e^{2\pi i/ p}$, given a complex vector $x\in \C^n$, 
    there exists a polynomial-time algorithm (Algorithm \ref{algo:cheeger}) that rounds $x$ into a vector $y\in \{0,z^0,z^1,\ldots, z^{p-1}\}^n$ with
    \[\cP_p(y) \leq 2\cdot \frac{\sum_{u\to v} w(u,v)\cdot |x_u-z\cdot x_v|\cdot (|x_u| + |x_v|)}{\sum_v d_v|x_v|^2} \]
    Moreover, $\supp(y)\subseteq \supp(x)$.
\end{lemma}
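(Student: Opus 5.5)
We prove Lemma \ref{lem:lange} by a threshold rounding in the spirit of Trevisan, using a random threshold on the magnitude and a random rotation of the phase. Normalize $x$ so that $\max_v |x_v| = 1$; the right-hand side is scale invariant, so this is harmless. Sample $t$ with $t^2$ uniform in $[0,1]$, and independently sample a uniformly random angle $\theta\in[0,2\pi)$. Define $y_v = 0$ if $|x_v|^2 < t^2$. Otherwise set $y_v = z^{j}$, where $j$ is chosen so that $e^{i\theta}x_v/|x_v|$ lies in the arc $\{e^{i\alpha}: \alpha \in [2\pi j/p, 2\pi(j+1)/p)\}$ (up to the sign convention matching $y_u = z y_v$ along edges). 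Clearly $\supp(y)\subseteq\supp(x)$. The algorithm derandomizes by trying all $O(n)$ relevant thresholds and all $O(np)$ relevant rotations; at least one choice achieves the ratio of expectations.

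For the denominator, $\E[d_v|y_v|] = d_v\Pr[t\le |x_v|] = d_v|x_v|^2$. Summing gives exactly $\sum_v d_v |x_v|^2$.

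For the numerator, we bound $\E|y_u - z y_v|$ for an edge $u\to v$. Assume WLOG $|x_u|\ge |x_v|$. There are three cases.
\begin{itemize}
\item If both endpoints are rounded to $0$, the contribution is $0$.
\item If exactly one endpoint is zero, the contribution is $1$. This happens with probability $\bigl||x_u|^2-|x_v|^2\bigr| = \bigl||x_u|-|x_v|\bigr|(|x_u|+|x_v|) \le |x_u - z x_v|(|x_u|+|x_v|)$, using $\bigl||x_u|-|z x_v|\bigr| \le |x_u - z x_v|$.
\item If both endpoints are nonzero (probability $|x_v|^2$), then $|y_u - z y_v|\le 2$. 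It is nonzero only when the random rotation separates the phases of $x_u$ and $z x_v$ into different arcs. With the arc convention chosen so that $y_u=zy_v$ whenever $x_u/|x_u|$ and $zx_v/|x_v|$ fall in corresponding arcs, this has probability at most $\frac{p}{2\pi}\cdot \angle(x_u, z x_v)$.
\end{itemize}

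The main obstacle is relating the angle to $|x_u - zx_v|$. For $|x_v|\le |x_u|$ we have $|x_v|\cdot\angle(x_u,zx_v) \le \frac{\pi}{2}|x_u - z x_v|$, since the distance from $zx_v$ to the ray through $x_u$ is at least $|x_v|\sin\angle$, and one handles obtuse angles separately. So the third case contributes at most $|x_v|^2\cdot 2\cdot\frac{p}{2\pi}\cdot\frac{\pi}{2}\frac{|x_u-zx_v|}{|x_v|}$.

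This bound carries a factor of $p$, which is not allowed. To avoid it, replace the uniform rotation by a coarser scheme: when both endpoints are nonzero, $|y_u - z y_v|$ should be comparable to the phase difference itself rather than an indicator. Taking $y_v = z^{j}$ with $j$ the nearest-arc rounding after the random rotation, we get $\E|y_u - zy_v| \le \E[\text{arc jump}]\cdot|1-z| \approx \frac{p}{2\pi}\angle\cdot\frac{2\pi}{p} = \angle$. Indeed, a phase difference $\alpha$ produces jumps of total expected size $\alpha/(2\pi/p)$ steps, each step costing at most $|1-z|\le 2\pi/p$, and $|y_u-zy_v|$ is at most the arc length of the jump. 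The $p$ therefore cancels, and the third case is at most $|x_v|^2\cdot\angle \le \frac{\pi}{2}|x_v|\,|x_u-zx_v| \le \frac{\pi}{2}|x_u - z x_v|(|x_u|+|x_v|)$.

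Combining the cases gives an expected numerator of at most $c\sum_{u\to v} w(u,v)|x_u - z x_v|(|x_u|+|x_v|)$. Tightening the geometric estimate (using $\min(\angle,\pi)$ together with chord-versus-arc bounds) should give the stated constant $2$. Taking a choice of $(t,\theta)$ whose ratio is at most the ratio of expectations completes the proof.
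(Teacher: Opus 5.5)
Your argument is correct in substance, but you prove the key per-edge estimate yourself where the paper cites it. The paper uses the same random threshold and uniform rotation and observes the equivariance $\textnormal{Round}_{t,\alpha}(z x_v)=z\cdot\textnormal{Round}_{t,\alpha}(x_v)$. It then imports the two-point bound $\E\,|\textnormal{Round}_{t,\alpha}(s_1)-\textnormal{Round}_{t,\alpha}(s_2)|\le 2|s_1-s_2|(|s_1|+|s_2|)$ from Lange et al.\ (their Lemma 4.2, quoted without proof), with $s_1=x_u$ and $s_2=zx_v$. The denominator identity, the ratio-of-expectations derandomization over level sets, and the support claim are the same as yours. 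You instead derive the edge bound directly. The threshold step costs at most $|x_u|^2-|x_v|^2\le|x_u-zx_v|(|x_u|+|x_v|)$. When both endpoints survive, the cost is $|1-z^m|\le m|1-z|\le 2\pi m/p$, where $m$ is the number of arc boundaries between the phases of $x_u$ and $zx_v$. Since $\E[m]=p\alpha/(2\pi)$ with $\alpha\in[0,\pi]$ the angle between them, $p$ cancels. Your route makes the lemma self-contained and shows why nothing depends on $p$; the paper's route is shorter but rests entirely on the external lemma.

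Three loose ends should be tidied.

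First, you do not need to ``replace'' the rounding scheme. The spurious factor $p$ came only from bounding $|y_u-zy_v|$ by $2$, and the boundary-counting argument applies verbatim to the uniform rotation you started with. This matters because the lemma is about Algorithm~\ref{algo:cheeger}, whose enumeration covers exactly the level sets of that scheme.

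Second, the constant $2$ already follows from your estimates. For $|x_v|\le|x_u|$ and $\alpha\in[0,\pi]$ you have $|x_u-zx_v|\ge 2|x_v|\sin(\alpha/2)\ge 2|x_v|\alpha/\pi$, since the minimum over $|x_u|\ge|x_v|$ is at $|x_u|=|x_v|$. This gives $|x_v|\alpha\le\frac{\pi}{2}|x_u-zx_v|$ with no separate obtuse case. A crude obtuse bound such as $|x_u-zx_v|\ge\sqrt{2}|x_v|$ would only give $\pi/\sqrt{2}$ and push the constant above $2$. Then, instead of enlarging $\frac{\pi}{2}|x_v|$ to $\frac{\pi}{2}(|x_u|+|x_v|)$, sum the two cases. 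The expected edge cost is at most $|x_u-zx_v|\bigl(|x_u|+(1+\frac{\pi}{2})|x_v|\bigr)\le(1+\frac{\pi}{4})|x_u-zx_v|(|x_u|+|x_v|)$, which is better than the required factor.

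Third, $n$ rotations suffice rather than $O(np)$. Shifting the rotation by $2\pi/p$ multiplies $y$ by a power of $z$ and leaves $\cP_p(y)$ unchanged. This keeps the running time polynomial even for large $p$.
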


We are now ready to present our main algorithm.

\subsection{Main algorithm}

\begin{algorithm}[H]
\caption{Finding many periodic components}
\label{algo:main}
\begin{algorithmic} 
\State Input: graph $G=(V,E)$, integers $k, p\in \N$.

\begin{enumerate}
    \item  \textbf{Spectral embedding:} 
    \begin{itemize}
        \item compute the $k$ orthonormal eigenvectors $u^{(1)} ,\ldots, u^{(k)}$ corresponding to the $k$ smallest eigenvalues of the normalized rotated Laplacian $L_z =  I - D^{-1/2}A_zD^{-1/2}$ for $z=e^{2\pi/p}$.
        \item rescale each eigenvector by setting $x^{(i)} \leftarrow D^{-1/2}u^{(i)}$.
        \item set $F_v := (x_v^{(1)}, \ldots, x_v^{(k)})$. 
    \end{itemize}
    
    \item \textbf{Gaussian projection:} 
    \begin{itemize}
        \item pick $k$ independent complex Gaussian vectors $g_1, \ldots, g_k \sim \mathcal{C}\gau(0,I_k)$.
        \item construct $h^{(i)}$ for each direction $g_i, i\in [k]$ as
        \[ h_v^{(i)} = 
            \begin{cases}
            \ip{F_v}{g_i} & \text{if } |\ip{F_v}{g_i}| > |\ip{F_v}{g_j}| ~ \forall j\neq i \\
            0  & \text{otherwise}
            \end{cases}
        \]
    \end{itemize}
    in other words, $h_v^{(i)}$ is non-zero if and only if the projection of $F_v$ in the direction of $g_i$ has the most magnitude.
    \item  \textbf{Cheeger-type rounding:} 
    \begin{itemize}
        \item for each $h^{(i)}$, apply the Cheeger-type rounding algorithm of Lange et al. (Lemma \ref{lem:lange}) to obtain $y^{(i)}$
    \item output all $y\in \{y^{(i)}\}_{i\in [k]}$ satisfying $\mathcal{P}_p(y) = O\left(\sqrt{\log k }\right)\cdot \sqrt{\lambda_k(L_z)}$
    \end{itemize}
\end{enumerate}
\end{algorithmic}
\end{algorithm}

We summarize the changes in our algorithm from the algorithm of Louis et al. \cite{LouisRaghavendraTetaliVempala12}:
\begin{itemize}
    \item In the spectral embedding stage, we compute the embedding vectors using the eigenvectors of the \emph{rotated} Laplacian matrix, rather than those of the standard Laplacian matrix.
    \item In the Gaussian projection stage, Louis et al. \cite{LouisRaghavendraTetaliVempala12} project the embedding vectors onto independent real Gaussian vectors. For each $v\in V$, they set $h_v^{(i)}= \ip{F_v}{g_i}$ if and only if the projection $\ip{F_v}{g_i}$ has the largest \emph{value} among all $\ip{F_v}{g_j}, j\in [k]$. 
    
    In contrast, we project the embedding vectors onto independent \emph{complex} Gaussian vectors $g_1,\ldots, g_n$. For each $v\in V$, we set $h_v^{(i)}= \ip{F_v}{g_i}$ if and only if  the projection of $F_v$ onto the direction of $g_i$ has the largest \emph{magnitude}.   
    
    Our change follows the intuition we sketched in the introduction: for vertices from the same cluster $S$, their embedding vectors $\{F_v\}_{v\in S} \subseteq \C^k$ are expected to be  close to a common one-dimensional subspace $\operatorname{span}(g)$ for some $g\in \C^k, \|g\|=1$. Equivalently, the projection magnitude $|\ip{F_v}{g}|$ should be large for $v\in S$. The label information within the cluster $S$ is encoded by the phase of the projection $\ip{F_v}{g}$. In particular, if $S$ is an independent periodic component in $G$, then the vectors $\{F_v\}_{v\in S}$ live exactly in the one-dimensional subspace $\operatorname{span}(F_u)$ for any $u\in S$, differing only by complex phases. 
    \item In the Cheeger-type rounding stage, we use the rounding algorithm of Lange et al. \cite{lange2015frustration}, rather than the Fiedler's algorithm \cite{fiedler1973algebraic}.
\end{itemize}

Note that by definition the set of $\{h^{(i)}\}_{i\in [k]}$ have disjoint support, since for each $v\in V$ there exists at most one $i$ with the maximum $|\ip{F_v}{g_i}|$, and thus Lemma \ref{lem:lange} guarantees that the $y^{(i)}$'s are disjointly supported.

\section{Analysis of Algorithm \ref{algo:main}}\label{sec:analysis-main}
We will analyze the quality of $\{y^{(i)}\}_{i\in [k]}$ obtained from rounding $\{h^{(i)}\}_{i\in [k]}$ in the third stage. We will eventually show that each $ y^{(i)}$ has a small periodicity ratio with  constant probability over the complex Gaussians. For each $h^{(i)}, i\in [k]$, we apply the Lange et al. rounding algorithm (Lemma \ref{lem:lange}) to obtain a vector $y^{(i)}\in\{0,1,z^1, \ldots. z^{p-1}\}^n$ with performance

\[\mathcal{P}_p(y^{(i)}) \leq 2\cdot \frac{\sum_{u\to v} w(u,v)\cdot (|h^{(i)}_u|+|h^{(i)}_v|)\cdot |h^{(i)}_u-z\cdot h^{(i)}_v|}{\sum_v d_v|h^{(i)}_v|^2}\]

We will show the following bounds for the numerator and the denominator. 

\begin{lemma}\label{lem:all-bounds}
    For each $i\in [k]$, $h = h^{(i)}$ as constructed in Algorithm \ref{algo:main}, we have
    \begin{enumerate}
        \item \[\E_{g_1 \ldots, g_k \sim \cC\gau(0,1)^k}\left[\sum_{u\to v}   w(u,v)\cdot (|h_u|+|h_v|)\cdot |h_u-z\cdot h_v| \right] \leq 16\left(4c_1+\sqrt{2}\right)\cdot (\log k)^{3/2} \cdot \sqrt{\lambda_k(L_z)}\]
        where $c_1>0$ is an absolute constant (see Theorem \ref{fact:correlation}) 
        \item \[\frac{\log k}{2} \leq \E_{g_1 \ldots, g_k \sim \cC\gau(0,1)^k}\left[\sum_v d_v |h_v|^2\right] \leq 16 \log k\]
        \item \[\mathrm{Var}_{g_1 \ldots, g_k \sim \cC\gau(0,1)^k} \left[\sum_vd_v|h_v|^2\right] \leq 512(\log k)^2\]
    \end{enumerate}
\end{lemma}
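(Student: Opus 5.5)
The plan follows the analysis of Louis et al.\ \cite{LouisRaghavendraTetaliVempala12}, adapted to complex projections. Everything rests on two facts about the embedding. First, the columns $x^{(j)}$ are $D$-orthonormal, so $F^*DF=I_k$, hence $\sum_v d_v\|F_v\|^2=k$. Second, by Courant--Fischer and the eigenvector choice, $\sum_{u\to v}w(u,v)\|F_u-zF_v\|^2\le k\lambda_k(L_z)$.

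For fixed $v$ the projections $\ip{F_v}{g_j}$, $j\in[k]$, are i.i.d.\ $\cC\gau(0,\|F_v\|^2)$. So $|\ip{F_v}{g_j}|^2/\|F_v\|^2$ are i.i.d.\ $\mathrm{Exp}(1)$ variables. The events ``$j$ attains the max'' are exchangeable, so averaging over $j$ gives
\[
\E|h^{(i)}_v|^2=\frac1k\,\E\max_j|\ip{F_v}{g_j}|^2=\frac{H_k}{k}\|F_v\|^2,
\]
where $H_k$ is the harmonic number (the expected maximum of $k$ exponentials). Likewise $\E|h^{(i)}_v|^4\le\frac1k\E\max_j|\cdot|^4=O((\log k)^2)\|F_v\|^4/k$. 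This is the basic symmetrization step: each $h^{(i)}$ carries exactly a $1/k$ share of a symmetric total. Summing with weights $d_v$ gives $\E\sum_v d_v|h_v|^2=H_k$, and $H_k\in[\tfrac12\log k,16\log k]$ proves item 2.

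For item 1, fix an edge $u\to v$ and $i$, and split into two events. On the event $E_{uv}$ that $u$ and $v$ have the same argmax index, we have $h_u-zh_v=\ip{F_u-\tilde F_v}{g_i}\,\I(\cdot)$, where $\tilde F_v$ is the appropriately rotated copy of $F_v$ matching the Rayleigh quotient convention. Cauchy--Schwarz in expectation, with the $1/k$ symmetrization above, bounds this contribution by $O(\log k)\cdot\frac1k\,(\|F_u\|+\|F_v\|)\,\|F_u-zF_v\|$.

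On the complement $E_{uv}^c$, we bound $|h_u-zh_v|\le|h_u|+|h_v|$. Then we need $\E[(|h_u|+|h_v|)^2\I(E_{uv}^c)]$. Normalizing, the pairs $(\ip{\bar F_u}{g_j},\ip{\tilde{\bar F}_v}{g_j})$ are independent $\rho$-correlated complex Gaussians with $1-|\rho|\lesssim\|F_u-zF_v\|^2/(\|F_u\|^2+\|F_v\|^2)$ (if the two norms are very unequal we instead use the trivial bound, which is still $\lesssim\|F_u-zF_v\|\cdot\max$ norm). Theorem \ref{fact:correlation} gives $\pr[E_{uv}^c]\le c_1\sqrt{\epsilon\log k}$. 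Hölder with the fourth-moment bound $O((\log k)^2)/k$ gives a contribution $O((\log k)^{3/2})\cdot\frac1k(\|F_u\|+\|F_v\|)\|F_u-zF_v\|$.

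Summing over edges and applying Cauchy--Schwarz,
\[
\sum w(u,v)\,(\|F_u\|+\|F_v\|)\,\|F_u-zF_v\|\le\sqrt{2\sum_v d_v\|F_v\|^2}\,\sqrt{k\lambda_k}=O(k\sqrt{\lambda_k}).
\]
Combined with the $1/k$ factor, this yields $O((\log k)^{3/2})\sqrt{\lambda_k}$, which is item 1 with the stated constants.

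For item 3, the naive Cauchy--Schwarz bound on $\E|h_u|^2|h_v|^2$ loses a factor $k$, giving $k(\log k)^2$ instead of $(\log k)^2$. I expect this to be the main obstacle. The plan is to expand $\E[(\sum_v d_v|h_v|^2)^2]=\sum_{u,v}d_ud_v\,\E[|\ip{F_u}{g_i}|^2|\ip{F_v}{g_i}|^2\I(A^i_u\cap A^i_v)]$, where $A^i_v$ is the event that $i$ attains the max for $v$. Then:
\begin{enumerate}
\item Truncate at $\tau=C\log k$: on $A^i_u\cap A^i_v$, either some projection of $F_u$ or $F_v$ onto $g_i$ exceeds $\tau$ times its squared norm, which has probability $k^{-C}$ and is handled by crude moments; or both squared projections are at most $\tau\|F\|^2$.
\item In the second case, bound the product by $\tau^2\|F_u\|^2\|F_v\|^2\,\pr[A^i_u\cap A^i_v]$ and sum over $i$. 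Since $\sum_i\pr[A^i_u\cap A^i_v]\le1$, the double sum is at most $\tau^2(\sum_v d_v\|F_v\|^2)^2/k=O(k(\log k)^2)$, which is still too weak by a factor $k$.
\item So this case needs a finer argument. I would try to show $\pr[A^i_u\cap A^i_v]\lesssim\frac1{k^2}+\frac1k|\ip{\bar F_u}{\bar F_v}|^2$, and then use $\sum_{u,v}d_ud_v|\ip{F_u}{F_v}|^2=\|F^*DF\|_F^2=k$.
\end{enumerate}
With that refined estimate, the variance collapses to $O((\log k)^2)$.
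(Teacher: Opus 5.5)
Your item 2 and the same-argmax part of item 1 match the paper; your exact value $H_k$ is a sharper form of Proposition~\ref{prop:gaussian-fact}. There are, however, two genuine gaps.

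\textbf{Item 1, disagreement term.} On $\{h_u\neq 0, h_v=0\}$ the summand is exactly $|h_u|^2$. Cauchy--Schwarz with your fourth-moment bound and Theorem~\ref{fact:correlation} only gives
\[
\E\left[|h_u|^2\,\I[h_u\neq 0,h_v=0]\right]\le \left(\E|h_u|^4\right)^{1/2}\pr[h_u\neq 0,h_v=0]^{1/2} = O\left(\frac{(\log k)^{5/4}}{k}\right)\|F_u\|^2\,\ang(F_u,F_v)^{1/2}.
\]
This scales like $\|F_u-zF_v\|^{1/2}$, not like $\|F_u-zF_v\|$ as you claim.

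Summing over edges (H\"older with exponents $4/3$ and $4$, plus $\sum_v d_v\|F_v\|^2=k$) then yields only $O((\log k)^{5/4})\cdot\lambda_k(L_z)^{1/4}$. That is much weaker than $\sqrt{\lambda_k(L_z)}$ when $\lambda_k(L_z)$ is small. H\"older with any finite moment $q$ gives $\pr[\cdot]^{1-1/q}$, never the linear dependence you need.

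What is required is that the large value $\max_l|\ip{F_u}{g_l}|^2$ and the disagreement event are non-positively correlated:
\[
\E\left[\max_l|\ip{F_u}{g_l}|^2\,\I[h_u\neq 0,h_v=0]\right]\le \E\left[\max_l|\ip{F_u}{g_l}|^2\right]\cdot\pr[h_u\neq 0,h_v=0].
\]
This is Proposition~\ref{prop:5.13}. The paper proves it in Appendix~\ref{ap:prop5.13} via Harris's inequality. It reduces the claim to monotonicity of $x\mapsto\pr[h_v\neq 0\mid |h_u|^2=x]$ and verifies that by a Rice/Bessel-function computation. This is exactly the inequality whose proof was missing in Louis et al., and your plan has nothing in its place.

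\textbf{Item 3.} You locate the obstacle correctly. However, the decisive estimate $\pr[A^i_u\cap A^i_v]\le 1/k^2+|\ip{\bar F_u}{\bar F_v}|^2/k$ is only something you ``would try to show'', so the variance bound is not established.

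The missing tool is the complex Hermite expansion with the noise operator (Lemma~\ref{lem:correlation}). Suppose $X,Y$ are coordinatewise $\rho$-correlated and $f$ is real-valued with vanishing degree-one Hermite coefficients, which holds for even $f$. The noise operator scales a degree-$d$ basis element by a factor of modulus $|\rho|^d$, so $\E[f(X)f(Y)]\le(\E f)^2+|\rho|^2\,\E[f^2]$.

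Applied to your indicator, this gives exactly your conjectured bound and would close your truncation route. The paper instead applies it directly to $f(x)=|x_i|^2\,\I[|x_i|>|x_j|\ \forall j\neq i]$, which avoids truncation altogether. Combined with $\sum_{u,v}d_ud_v|\ip{F_u}{F_v}|^2=k$ and $\sum_v d_v\|F_v\|^2=k$, it yields the bound $512(\log k)^2$.
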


These bounds are analogous to the corresponding bounds in \cite{LouisRaghavendraTetaliVempala12}, where Louis et al. worked with standard Laplacian matrix and real Gaussians. In below, we first use these bounds to complete the analysis of the algorithm, and then provide their proofs.

Since from now on it is clear that the expectation is over the randomness of the complex Gaussian variables, we will omit the subscript $g_1 \ldots, g_k \in \cC\gau(0,1)^k$ for convenience. 

\begin{proof}
    (of Theorem \ref{thm:main-algo}) 
    
    For each $i\in [k]$, $h=h^{(i)}$, let the random variables $X^{(i)}$ and $Y^{(i)}$ be
\[X^{(i)}=\sum_{u\to v}   w(u,v)\cdot (|h_u|+|h_v|)\cdot |h_u-z\cdot h_v|\] 
\[Y^{(i)}=  \sum_v d_v |h_v|^2 \]

By Lemma \ref{lem:lange}, we have $\cP_p(y^{(i)}) \leq 2\cdot X^{(i)}/Y^{(i)}$, so our task is to show that we can bound the latter with constant probability. By Lemma \ref{lem:all-bounds}, we have the expectation of $Y^{(i)}$ is $(\log k)/2\leq \E\left[Y^{(i)}\right] \leq 16\log k$, and the variance is at most $ 512 (\log k)^2$. By Chebyshev's inequality (Fact \ref{fact:cherbyshev}), using $t= \E[Y^{(i)}] / 2\sqrt{\mathrm{Var} \left[Y^{(i)}\right]}$:

\[\pr\left[Y^{(i)} \geq \frac{\E \left[Y^{(i)}\right]}{2}\right]\geq \frac{(\E\left[Y^{(i)}\right] )^2}{(\E\left[Y^{(i)} \right])^2 + 4\mathrm{Var}\left[Y^{(i)}]\right]} \geq \frac{1}{8193}=:c'  \]

Now apply the Markov inequality to our bound on the expectation of $X^{(i)}$, we have with probability at least $1-c'/2$,  $X^{(i)}$ is at most 
\[\frac{2}{c'}\cdot\E \left[X^{(i)}\right]  \leq  \frac{2}{c'}\cdot 16(4c_1+\sqrt{2})(\log k)^{3/2} \sqrt{\lambda_k(L_z)}\] 

By a union bound, we have with probability at least $c'/2$

\[\cP_p(y^{(i)}) \leq \frac{8}{c'}\cdot \frac{\E[X^{(i)}]}{\E[Y^{(i)}]} =  O\left(\sqrt{\lambda_k(L_z)\cdot \log k}\right)\]

Thus, the expected number of periodic components with this quality is at least $\Omega(k)$. The algorithm will produce with constant probability a collection of disjointly supported vectors $y^{(1)}, \ldots, y^{(\lceil ck\rceil)}\in \{0,z^0, z^1,\ldots, z^{p-1}\}^n$ for some $0<c<1$ such that 

\[\max_{i\in [\lceil ck\rceil]} \cP_p(y^{(i)}) =O\left(\sqrt{\lambda_k(L_z)\cdot \log k}\right) \]
\end{proof}

\hfill

The rest of this section is devoted to the proof of the bounds in Lemma \ref{lem:all-bounds}. We first need to present a few results that will be useful throughout this section.

\subsection{Preparation for the proof of Lemma \ref{lem:all-bounds}}
Recall the higher-order Rayleigh quotient of $D^{1/2}F$ with respect to $L_z$  is 
\[R_{L_z}(D^{1/2}F) :=  \frac{\sum_{u\to v} w(u,v)\cdot \|F_u -z\cdot  F_v\|^2}{\sum_v d_v \cdot \|F_v\|^2}\]
The following facts about spectral embedding are previously known for real symmetric graph Laplacian $L$ (see for example, \cite{spielman2019sagt, trevisan2017lecture}), here we extend them to the Hermitian, rotated graph Laplacian $L_z$. We defer the proofs to Appendix \ref{ap:prelim}.
\begin{proposition}\label{fact:embedding}
    For a graph $G=(V,E)$, $p\in \N$ and $z= e^{2\pi/p}$, let $L_z$ be its rotated Laplacian and let $F$ be the spectral embedding of $G$ using the first $k$ eigenvectors of $L_z$. We have
    \begin{enumerate}
        \item \[R_{L_z}(D^{1/2}F) \leq \lambda_k(L_z)\]
        \item \[\sum_{v\in V}d_v\|F_v\|^2 = k\]
        \item \[\sum_{u,v\in V}d_ud_v |\ip{F_u}{F_v}|^2 = k\]
    \end{enumerate}
\end{proposition}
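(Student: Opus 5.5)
We will prove all three items by rewriting each quantity as an expression in the matrix $F$, the diagonal matrix $D$, and the orthonormality of the eigenvectors $u^{(1)},\ldots,u^{(k)}$. Write $U$ for the $n\times k$ matrix whose columns are the $u^{(i)}$, so that $F = D^{-1/2}U$ and $U^*U = I_k$. The row of $F$ indexed by $v$ is $F_v$, and $(D^{1/2}F)_v = \sqrt{d_v}\,F_v$. Since $U^*U=I_k$, the columns $x^{(i)}$ of $F$ satisfy $\ip{D^{1/2}x^{(i)}}{D^{1/2}x^{(j)}} = \delta_{ij}$. We will use this identity repeatedly.

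For item 2, we use $\sum_v d_v\|F_v\|^2 = \|D^{1/2}F\|_F^2 = \|U\|_F^2 = \operatorname{tr}(U^*U) = k$.

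For item 1, we first expand the numerator coordinatewise:
\[\sum_{u\to v} w(u,v)\|F_u - zF_v\|^2 = \sum_{i=1}^k \sum_{u\to v} w(u,v)\,|x^{(i)}_u - z x^{(i)}_v|^2 .\]
We then apply the Rayleigh quotient identity from Section~\ref{sec:prelim}, i.e.\ $R_{L_z}(D^{1/2}x)$ equals the edge sum divided by $\sum_v d_v|x_v|^2$, which itself follows from Proposition~\ref{prop:Lz-property}. It gives
\[\sum_{u\to v} w(u,v)\,|x^{(i)}_u - z x^{(i)}_v|^2 = (u^{(i)})^*L_z u^{(i)} = \lambda_i(L_z).\]
Hence the numerator equals $\sum_{i\le k}\lambda_i(L_z) \le k\,\lambda_k(L_z)$, because the eigenvalues are sorted. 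Dividing by the denominator, which equals $k$ by item 2, gives $R_{L_z}(D^{1/2}F)\le \lambda_k(L_z)$.

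For item 3, we note that $\ip{F_u}{F_v} = F_u^* F_v$, where $F_u$ is viewed as a column vector. Then
\[\sum_{u,v} d_u d_v |\ip{F_u}{F_v}|^2 = \sum_{u,v} \bigl|(\sqrt{d_u}F_u)^*(\sqrt{d_v}F_v)\bigr|^2 = \|(D^{1/2}F)(D^{1/2}F)^*\|_F^2 = \|UU^*\|_F^2 .\]
Here one has to be careful that the conjugation convention matches, since $\ip{F_u}{F_v}$ is the entry of $\overline{U}U^{T}$ rather than of $UU^*$. Both matrices have the same Frobenius norm, so the choice does not matter. Finally, $\|UU^*\|_F^2 = \operatorname{tr}(UU^*UU^*) = \operatorname{tr}(U^*U\,U^*U) = \operatorname{tr}(I_k) = k$.

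None of these steps is a real obstacle. The only points needing care are the complex-conjugation bookkeeping in item 3 and checking that the edge-sum form of the quadratic form (Proposition~\ref{prop:Lz-property}) applies to each column separately with the weights $w(u,v)$.
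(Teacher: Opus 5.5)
Your proof is correct and follows essentially the same route as the paper: items 2 and 3 are the same trace and Frobenius-norm computations using that $D^{1/2}F$ has orthonormal columns, and your note on $\overline{U}U^{T}$ versus $UU^*$ is care the paper skips. The one minor difference is in item 1: you compute the numerator exactly as $\sum_{i\le k}\lambda_i(L_z)$ and divide by $k$ from item 2, which gives the slightly sharper identity $R_{L_z}(D^{1/2}F)=\frac{1}{k}\sum_{i\le k}\lambda_i(L_z)$, whereas the paper bounds the ratio of sums by the largest per-column Rayleigh quotient.
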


For $F_u, F_v$, define the distance function 
\[\ang(F_u,F_v):= \min_{w\in\{1,z,\ldots, z^{p-1}\}} \|\bar{F_u} - w\cdot \bar{F}_v\|, \qquad \text{where }z=2\pi i/p\]

Recall that $\bar{x} = x/\|x\|$ is the normalized unit vector of $x$. Comparing to the \textit{radial projection distance} function $\text{dist}_r(F_u,F_v):= \|\bar{F}_u -\bar{F}_v \|$ as defined in \cite{lee2014multiway}, we extend their definition to  complex spaces, and incorporate a rotation $w$. We prove the following useful fact in the Appendix \ref{ap:prelim}:
\begin{lemma}\label{lem:angle-bound}
    For $z = e^{2\pi i/p}, p\in \N$, and $\{F_v\}_{v\in V}$ a collection of vectors in $\R^k$, then
    \[\sqrt{||F_u\|^2 + \|F_v\|^2} \cdot \ang(F_u,F_v) \leq 2\sqrt{2}\cdot \|F_u - z\cdot  F_v\|\]
\end{lemma}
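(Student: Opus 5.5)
The plan is to give up the minimum over rotations by choosing the particular rotation $w=z$, which is always admissible since $z\in\{1,z,\ldots,z^{p-1}\}$. This gives
\[\ang(F_u,F_v)\leq \|\bar F_u - z\cdot \bar F_v\|.\]
It then suffices to prove the standard radial-projection estimate
\[\sqrt{\|F_u\|^2+\|F_v\|^2}\,\|\bar F_u - z\bar F_v\| \leq 2\sqrt{2}\,\|F_u - zF_v\|.\]
This is the complex, rotated analogue of the corresponding fact for $\text{dist}_r$ in \cite{lee2014multiway}. The only property of $z$ the argument uses is $|z|=1$, so the proof works verbatim in $\C^k$ (which is the setting where the lemma is applied, even though it is stated for $\R^k$).

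First I dispose of degenerate cases. If both vectors are zero, both sides vanish. If exactly one is zero, I take the convention that the normalized zero vector is $\mathbf{0}$. Then $\ang \leq 1$, and since $\|F_u - zF_v\|$ equals the norm of the nonzero vector, the bound is immediate.

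Otherwise, by symmetry I assume $\|F_u\|\geq \|F_v\|>0$. The roles are genuinely swappable because $\|F_v - z^*F_u\| = \|F_u - zF_v\|$ and $\|\bar F_v - z^*\bar F_u\| = \|\bar F_u - z\bar F_v\|$. I then insert the intermediate vector $zF_v/\|F_u\|$ and apply the triangle inequality:
\[\|\bar F_u - z\bar F_v\| \leq \frac{\|F_u - zF_v\|}{\|F_u\|} + \|F_v\|\left|\frac{1}{\|F_u\|}-\frac{1}{\|F_v\|}\right| = \frac{\|F_u - zF_v\|}{\|F_u\|} + \frac{\|F_u\|-\|F_v\|}{\|F_u\|}.\]
The second term is handled by the reverse triangle inequality together with $|z|=1$: $\|F_u\|-\|F_v\| = \|F_u\| - \|zF_v\| \leq \|F_u - zF_v\|$. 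Hence $\|\bar F_u - z\bar F_v\|\leq 2\|F_u-zF_v\|/\|F_u\|$.

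Finally, since $\|F_u\|$ is the larger norm, $\sqrt{\|F_u\|^2+\|F_v\|^2}\leq \sqrt2\,\|F_u\|$. Multiplying the two bounds cancels $\|F_u\|$ and yields the constant $2\sqrt2$.

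There is no real obstacle here. The only points needing care are the zero-vector convention and checking that the symmetry reduction is legitimate in the rotated setting, which is the conjugation identity noted above.
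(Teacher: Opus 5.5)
Your proof is correct and follows the paper's argument essentially step for step. Like the paper, you fix the rotation $w=z$, assume $\|F_u\|\ge\|F_v\|$, split off $zF_v$ with the triangle inequality, bound $\|F_u\|-\|F_v\|$ by the reverse triangle inequality, and absorb $\sqrt{\|F_u\|^2+\|F_v\|^2}\le\sqrt2\,\|F_u\|$. Your conjugation identity justifying the swap of $u$ and $v$, and your handling of zero vectors, are details the paper leaves implicit, and you handle them correctly.
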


We next present facts about Gaussian random variables and Gaussian projections.

\begin{fact}
    \cite{Gallager2008circularly} Let $f\in \C^k$ and $g\sim \cC\gau(0,I_k)$, then $\ip{f}{g}\sim \cC \gau(0,\|f\|^2)$
\end{fact}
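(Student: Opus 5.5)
The statement is the standard fact that a linear functional of a circularly-symmetric complex Gaussian vector is again circularly-symmetric complex Gaussian, with variance $\|f\|^2$. I would prove it by reducing to real Gaussians and checking the second-order structure.

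\emph{Step 1: write it as a real Gaussian vector.} Write $g = \frac{1}{\sqrt{2}}(X + iY)$ with $X, Y \sim \gau(0, I_k)$ independent real, and $f = a + ib$ with $a, b \in \R^k$. Expanding $\ip{f}{g} = f^* g = \sum_j f_j^* g_j$ gives
\[\re\ip{f}{g} = \tfrac{1}{\sqrt{2}}\left(a^T X + b^T Y\right), \qquad \im\ip{f}{g} = \tfrac{1}{\sqrt{2}}\left(a^T Y - b^T X\right).\]
Both coordinates are real linear forms in the jointly Gaussian vector $(X,Y)\in\R^{2k}$. Hence $(\re\ip{f}{g}, \im\ip{f}{g})$ is a centered real Gaussian vector in $\R^2$, and its law is determined by its $2\times 2$ covariance.

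\emph{Step 2: compute the covariance.} Using the independence and unit variance of the coordinates of $X$ and $Y$:
\[\mathrm{Var}(\re) = \tfrac12\left(\|a\|^2 + \|b\|^2\right) = \tfrac12\|f\|^2, \qquad \mathrm{Var}(\im) = \tfrac12\|f\|^2,\]
\[\mathrm{Cov}(\re, \im) = \tfrac12\,\E\!\left[(a^TX + b^TY)(a^TY - b^TX)\right] = \tfrac12\left(-a^T b + b^T a\right) = 0.\]
So $\ip{f}{g} = \frac{\|f\|}{\sqrt{2}}(X' + iY')$ with $X', Y'$ independent standard real Gaussians. This is exactly $\cC\gau(0,\|f\|^2)$. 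When $f = 0$ the variable is degenerate at $0$, which is consistent with the notation.

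\emph{Alternative route.} One could instead use $\E[gg^*] = I_k$ and $\E[gg^T] = 0$. These give $\E|\ip{f}{g}|^2 = \|f\|^2$ and the vanishing pseudo-covariance $\E[\ip{f}{g}^2] = f^* \E[gg^T] \bar f = 0$, which characterize a circular complex Gaussian. I expect no genuine obstacle here. The only point needing care is the conjugation convention $\ip{x}{y} = x^* y$, which determines the signs in the imaginary part; those signs do not affect the final variances or the vanishing cross term.
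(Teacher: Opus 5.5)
Your proof is correct. The paper gives no proof of its own here; it only cites Gallager's notes, and the standard treatment there is essentially your alternative route. For any complex matrix $A$, $Ag$ is a zero-mean complex Gaussian with covariance $AA^*$ and pseudo-covariance $A\,\E[gg^T]A^T=0$. A zero-mean complex Gaussian with vanishing pseudo-covariance is circularly symmetric and is determined by its covariance, so taking $A=f^*$ gives the fact.

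Your main route is the hands-on real-coordinate version of the same computation. Its advantage is that it shows explicitly that $\re(\ip{f}{g})$ and $\im(\ip{f}{g})$ are independent real Gaussians, each of variance $\|f\|^2/2$. That is exactly the form in which the paper later uses the fact, when it invokes Rayleigh and Rice densities with $\sigma^2$ equal to half the squared norm of the projecting vector.

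The covariance/pseudo-covariance route has a different advantage: it extends verbatim to several projections at once. That yields the joint statements the paper also uses without separate proof. For unit $u,v$, the projections $\ip{u}{G_j}$ and $\ip{v}{G_j}$ are $\ip{u}{v}$-correlated, and projections onto orthogonal directions are independent. In real coordinates, each of these would require redoing your computation with a larger covariance matrix.
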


In Appendix \ref{ap:prelim}, we prove the following facts about the expected maximum of $k$ independent complex Gaussian variables. 

\begin{proposition}\label{prop:gaussian-fact}
    Let $X_1, \ldots, X_k$ be $k$ independent standard complex Gaussian random variables. Let $Y = \max\{|X_i|\}$,
    \begin{enumerate}
        \item $\E[Y] \leq 4\sqrt{\log k}$, $\E[Y^2] \leq 16\log k$ and $\E[Y^4] \leq 64e(\log k)^2$
        \item $\E[Y] \geq  \sqrt{(\log k)/2 }$ and $\E[Y^2] \geq (\log k)/2$
    \end{enumerate}
\end{proposition}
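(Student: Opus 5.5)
Write $X_i = \frac{1}{\sqrt 2}(A_i + iB_i)$ with $A_i,B_i\sim\gau(0,1)$ all independent. Then $|X_i|^2 = (A_i^2+B_i^2)/2$ is exponential with mean $1$, so $\pr[|X_i|\ge t] = e^{-t^2}$ exactly. Both parts reduce to statements about the maximum $Y^2 = \max_i |X_i|^2$ of $k$ i.i.d.\ $\mathrm{Exp}(1)$ variables. Its distribution function is $\pr[Y^2\le s] = (1-e^{-s})^k$. Throughout I assume $k\ge 2$, or $k$ at least a small absolute constant, so that $\log k$ is bounded away from $0$; the degenerate case $k=1$ must be excluded, or handled separately, since the upper bounds fail there.

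\textbf{Upper bounds.} I would use the tail-integration identity $\E[Y^{2m}] = \int_0^\infty \pr[Y^{2m}\ge s]\,ds$ for $m=1,2$, together with the union bound $\pr[Y^2\ge s]\le \min(1, k e^{-s})$.

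For $m=1$, split at $s_0=\log k$:
\[\E[Y^2]\le \log k + \int_{\log k}^\infty k e^{-s}\,ds = \log k + 1 \le 16\log k.\]
For $m=2$, substitute $s=r^2$:
\[\E[Y^4] = \int_0^\infty 2r\,\pr[Y^2\ge r]\,dr \le (\log k)^2 + \int_{\log k}^\infty 2r k e^{-r}\,dr = (\log k)^2 + 2(\log k + 1).\]
This is at most $64e(\log k)^2$ once $\log k$ exceeds a small constant. The generous constants leave plenty of room for small $k$, provided $k\ge 2$.

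The first moment then follows from Jensen: $\E[Y]\le \sqrt{\E[Y^2]}\le\sqrt{\log k+1}\le 4\sqrt{\log k}$.

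\textbf{Lower bounds.} By Jensen, $\E[Y^2]\ge (\E Y)^2$, so it suffices to show $\E[Y]\ge\sqrt{(\log k)/2}$.

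Set $t=\sqrt{(\log k)/2}$. Since $Y\ge 0$, we have $\E[Y]\ge \sqrt{\log k}\cdot\pr[Y\ge\sqrt{\log k}]$. By independence,
\[\pr\bigl[Y\ge\sqrt{\log k}\bigr] = 1-(1-1/k)^k \ge 1-1/e.\]
Hence $\E[Y]\ge (1-1/e)\sqrt{\log k}\ge 0.63\sqrt{\log k}\ge \sqrt{(\log k)/2}\approx 0.707\sqrt{\log k}$ — wait, $0.63<0.707$, so this threshold is too crude.

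Instead I would use the exact formula for the maximum of i.i.d.\ exponentials:
\[\E[Y^2] = H_k = \sum_{j=1}^k \frac1j \ge \log k.\]
This already gives $\E[Y^2]\ge(\log k)/2$ directly. For $\E[Y]$, apply the same threshold argument at level $\sqrt{c\log k}$:
\[\E[Y]\ge \sqrt{c\log k}\,\bigl(1-(1-k^{-c})^k\bigr)\ge \sqrt{c\log k}\,\bigl(1-e^{-k^{1-c}}\bigr).\]
Taking $c=3/4$ gives $\sqrt{0.75}\,(1-e^{-k^{1/4}})$. This exceeds $\sqrt{1/2}$ once $k^{1/4}\ge 2$, i.e.\ for $k\ge 16$.

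\textbf{Main obstacle.} The delicate step is the constant $1/\sqrt2$ in $\E[Y]\ge\sqrt{(\log k)/2}$ for small $k$, where $\log k$ is small and the threshold argument above is lossy. For $2\le k<16$ I would use the monotonicity $\E[Y]\ge \E|X_1| = \sqrt{\pi}/2\approx 0.886$. This exceeds $\sqrt{(\log 15)/2}\approx 1.16$? No, it does not. So for the middle range I would instead compute $\E[Y]=\int_0^\infty\bigl(1-(1-e^{-t^2})^k\bigr)\,dt$ numerically or bound it directly, or use the finer threshold at $c=1/2$ combined with a two-level bound. This finite check is the part I expect to require care; everything else is routine.
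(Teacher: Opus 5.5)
Your upper bounds (tail integration with $\pr[Y^2\ge s]\le\min(1,ke^{-s})$, giving $\E[Y^2]\le\log k+1$ and $\E[Y^4]\le(\log k)^2+2\log k+2$, then Jensen) are correct for $k\ge 2$, and so is your lower bound $\E[Y^2]=H_k\ge\log k$. They take a different route from the paper. The paper's proof does not use the exact law $\pr[Y^2\le s]=(1-e^{-s})^k$. Instead it reduces to the real-Gaussian facts of \cite{LouisRaghavendraTetaliVempala12}. For the upper bounds it uses $|X_i|^4\le 4\max\{\re(X_i)^4,\im(X_i)^4\}$ and applies the real bound to $2k$ variables, with $\log 2k\le 2\log k$. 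For the lower bound it uses $\E[Y]\ge\E[\max_i|\re(X_i)|]$, and Jensen finishes in both cases. Your computation is self-contained and gives sharper constants. It also avoids relying on the cited real lower bound $\E[\max_i A_i]\ge\sqrt{\log k}$, which fails for small $k$: at $k=2$ the left side is $1/\sqrt{\pi}\approx 0.56<\sqrt{\log 2}$.

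The genuine gap is the one you flagged. As written, $\E[Y]\ge\sqrt{(\log k)/2}$ is proved only for $k\le 4$ (from $\E|X_1|=\sqrt{\pi}/2$) and for $k\ge 16$. A promised numerical evaluation for $5\le k\le 15$ is not yet an argument. You can close it uniformly with tools you already invoked. The exact formula $\E[Y^2]=H_k$ comes from the R\'enyi representation: $Y^2$ has the law of $\sum_{j=1}^k E_j/j$ with $E_j$ i.i.d.\ $\mathrm{Exp}(1)$. Hence also $\E[Y^4]=H_k^2+\sum_{j\le k}j^{-2}\le 2H_k^2$. H\"older gives $\E[Y^2]=\E[Y^{2/3}\cdot Y^{4/3}]\le\E[Y]^{2/3}\,\E[Y^4]^{1/3}$. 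Therefore $\E[Y]\ge \E[Y^2]^{3/2}/\E[Y^4]^{1/2}\ge H_k^{3/2}/(\sqrt{2}\,H_k)=\sqrt{H_k/2}\ge\sqrt{(\log k)/2}$ for every $k\ge 1$, with no case analysis. Alternatively, your threshold bound at $c=3/4$ already works for all $k\ge 9$, since $1-e^{-\sqrt{3}}>\sqrt{2/3}$. With the exact value $1-(1-k^{-3/4})^k\ge 0.83$ in place of $1-e^{-k^{1/4}}$, it also works for $5\le k\le 8$. With either fix your proof is complete for $k\ge 2$, which, as you note, is needed anyway for the upper bounds.
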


Having introduced the facts above, we can show the following result about Gaussian projections of spectral embedding vectors. The proof is deferred to the end of this section (Subsection \ref{pf:prop-defer}).
\begin{proposition}\label{prop:condition-one}
    Let $\{F_v\}_{v\in V}$ be embedding vectors for $v\in V$, $g_1, \ldots, g_k \sim \cC\gau(0,I_k)$ be independent complex Gaussian vectors and $i\in [k]$. Let $h^{(i)}\in \C^n$ be defined by
    \[h^{(i)}_v:= \begin{cases}
        \ip{F_v}{g_i} & \text{if } |\ip{F_v}{g_i}|> |\ip{F_v}{g_j}|, \forall j\in [k]\setminus\{i\} \\
    0 &\text{otherwise}
    \end{cases}\]
     Consider $g=g_i$ and $h = h^{(i)}$, and  let $f\in \C^k$ be a fixed vector independent of $g$, we have 
    \begin{enumerate}
        \item \[\E\left[|\ip{f}{g}|^2 \cdot  \mathbb{I}[h_v \neq 0]\right]\leq \frac{16}{k}\cdot \|f\|^2 \cdot \log k\]
        and 
        \[\E\left[|\ip{f}{g}|^2 \cdot  \mathbb{I}[h_v = 0]\right] \leq 16\cdot \left(1-\frac{1}{k}\right)\cdot \|f\|^2 \cdot \log k\]
        \item for all pairs of vertices $u,v \in V$
    \[\pr\left[h_u\neq 0 \text{ and } h_v = 0\right] \leq c_1  \frac{\sqrt{\log k}}{k} \cdot \ang(F_u, F_v) \]
        for some absolute constant $c_1>0$.
    \end{enumerate}
\end{proposition}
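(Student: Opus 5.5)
Part 1 is a symmetry-plus-domination argument. Since $\mathbb{I}[h_v\neq 0] \leq 1$ and $|\ip{f}{g}| \leq \|f\|\cdot|\ip{\bar f}{g}|$, the first step is to replace the indicator by something symmetric in the index $i$.

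\emph{Step 1: reduce to the case $f$ parallel to $F_v$.} Complete $\bar F_v$ to an orthonormal basis of $\C^k$. By rotation invariance of $\cC\gau(0,I_k)$, the coordinates of each $g_j$ in this basis are independent standard complex Gaussians. The event $h_v\neq 0$ depends only on the first coordinates $X_j := \ip{\bar F_v}{g_j}$, $j\in[k]$. Write $\bar f = \alpha \bar F_v + \beta e$, where $e\perp F_v$ and $|\alpha|^2+|\beta|^2=1$. Then $\ip{f}{g_i}$ has a component along $X_i$ plus an independent Gaussian $W$ of variance $|\beta|^2\|f\|^2$. Expanding $|\cdot|^2$, the cross term vanishes in expectation because $W$ is centered and independent of the event and of $X_i$. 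This gives
\[\E\left[|\ip{f}{g}|^2\mathbb{I}[h_v\neq 0]\right] = \|f\|^2\left(|\alpha|^2\,\E\left[|X_i|^2\mathbb{I}[|X_i| = \max_j |X_j|]\right] + |\beta|^2 \pr[h_v\neq 0]\right).\]
By symmetry over $j$, $\pr[h_v\neq 0] = 1/k$. Also
\[\E\left[|X_i|^2\mathbb{I}[\text{$i$ is the argmax}]\right] = \frac{1}{k}\E\left[\max_j |X_j|^2\right] \leq \frac{16\log k}{k}\]
by Proposition~\ref{prop:gaussian-fact}. Since $1/k\leq 16\log k/k$ for $k\geq 2$, the first bound follows. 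The case $k=1$ is degenerate and would be noted separately. If $F_v=0$, then $h_v=0$ and the bound is trivial. The second bound is proved the same way: the complement event contributes $\frac{k-1}{k}\E[\max_j|X_j|^2]$ at most, bounded via the same symmetric sum, namely $\sum_{j\neq \arg\max}|X_j|^2 \leq (k-1)\max_j|X_j|^2$ averaged over $i$. Alternatively, write $\E[|\ip{f}{g}|^2]=\|f\|^2$ and subtract; the precise constant is not delicate.

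\emph{Step 2: the separation probability.} By symmetry in $i$,
\[\pr[h_u\neq 0,\ h_v=0] = \frac{1}{k}\pr\left[\arg\max_j|\ip{F_u}{g_j}|\neq \arg\max_j|\ip{F_v}{g_j}|\right].\]
The argmax is invariant under rescaling $F_v$ by any nonzero complex scalar, in particular by $w\in\{1,z,\ldots,z^{p-1}\}$. So I may replace $F_u,F_v$ by the unit vectors $\bar F_u$ and $w\bar F_v$, choosing $w$ to attain $\ang(F_u,F_v)$. Set $X_j=\ip{\bar F_u}{g_j}$ and $Y_j=\ip{w\bar F_v}{g_j}$. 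The pairs $(X_j,Y_j)$ are independent across $j$, and each is a pair of standard complex Gaussians with correlation $\rho=\ip{\bar F_u}{w\bar F_v}$, the same for all $j$. Theorem~\ref{fact:correlation} with $\epsilon = 1-|\rho|$ bounds the probability by $c_1\sqrt{\epsilon\log k}$. Finally, $\|\bar F_u-w\bar F_v\|^2 = 2-2\re\rho \geq 2(1-|\rho|)=2\epsilon$, so $\sqrt{\epsilon}\leq \ang(F_u,F_v)/\sqrt2$, which yields the claim.

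The main obstacle is conceptual rather than computational: making the independence decomposition in Step 1 rigorous, and recognizing that the rotation-invariant argmax lets Theorem~\ref{fact:correlation} absorb the minimization over $w$. In Step 2 the factor $1/k$ comes purely from symmetry, and the real work is delegated to Theorem~\ref{fact:correlation}.
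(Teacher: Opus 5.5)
Your proof is correct, but for item 1 it takes a different route from the paper.

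\textbf{Item 2.} This is essentially the paper's argument. The paper makes the same symmetry reduction $\pr[h_u\neq 0,\, h_v=0]=\frac1k\pr[\arg\max_j|X_j|\neq\arg\max_j|Y_j|]$. It also uses the same inequality $|\ip{\bar F_u}{\bar F_v}|\ge \re(w\ip{\bar F_u}{\bar F_v})=1-\frac12\|\bar F_u-w\bar F_v\|^2$ before invoking Theorem~\ref{fact:correlation}. Your rescaling of $F_v$ by $w$ is cosmetic, since $|\rho|$ is phase-invariant. Your final bookkeeping $\sqrt{\epsilon}\le \ang(F_u,F_v)/\sqrt2$ is actually cleaner than the paper's last display. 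That display writes the squared distance, whereas what follows from the theorem is its square root.

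\textbf{Item 1.} The paper never decomposes $f$. It bounds $|\ip{f}{g_i}|^2\le \max_l|\ip{f}{g_l}|^2$, which is invariant under permuting the $g_l$. By exchangeability, conditional on this maximum the index $\arg\max_j|\ip{F_v}{g_j}|$ is uniform on $[k]$. This gives $\frac1k\E[\max_l|\ip{f}{g_l}|^2]$, and $(1-\frac1k)\E[\max_l|\ip{f}{g_l}|^2]$ for the complement, directly from Proposition~\ref{prop:gaussian-fact}.

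\emph{What each route buys.} The paper's argument is two lines. It needs no independence of orthogonal Gaussian projections and no relation between $f$ and $F_v$. Your orthogonal decomposition costs a little more: rotation invariance, and checking that the cross term vanishes. In return it gives an exact formula. It shows that only the component of $f$ along $F_v$ pays the $\log k$ factor, while the orthogonal part contributes just $|\beta|^2\|f\|^2/k$.

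\textbf{The case $k=1$.} Your caveat is warranted, and in fact the statement is false there. For $k=1$ and $F_v\neq 0$, the left-hand side of the first bound equals $\|f\|^2$, while the right-hand side is $0$. So the proposition, like Proposition~\ref{prop:gaussian-fact}, genuinely requires $k\ge 2$. The paper does not flag this.
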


\subsection{Analysis of the denominator}\label{sec:denominator}

\begin{lemma}
    (Item 2 of Lemma \ref{lem:all-bounds}.) For $i\in [k]$ and $h=h^{(i)}$ as in Algorithm \ref{algo:main}, we have 

    \[\frac{\log k}{2} \leq \E\left[\sum_v d_v |h_v|^2\right] \leq 16 \log k\]
\end{lemma}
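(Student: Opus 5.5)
By linearity of expectation, $\E\left[\sum_v d_v|h_v|^2\right] = \sum_v d_v \E[|h_v|^2]$. It therefore suffices to compute $\E[|h_v|^2]$ for a fixed vertex $v$ and then sum using Item 2 of Proposition \ref{fact:embedding}, namely $\sum_v d_v\|F_v\|^2 = k$.

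Fix $v$ with $F_v\neq \mathbf{0}$; vertices with $F_v=\mathbf{0}$ contribute nothing. By the Gallager fact, the projections $X_j := \ip{F_v}{g_j}$ for $j\in[k]$ are i.i.d.\ $\cC\gau(0,\|F_v\|^2)$, since the $g_j$ are independent. Write $X_j = \|F_v\| Z_j$ with $Z_j$ i.i.d.\ standard complex Gaussians. Then $h_v = X_i$ exactly when $|Z_i|$ is the strict maximum of $|Z_1|,\ldots,|Z_k|$, and otherwise $h_v=0$. Ties occur with probability zero because the $|Z_j|$ have a continuous distribution. This gives
\[
\E[|h_v|^2] = \|F_v\|^2\,\E\big[|Z_i|^2\,\I[i=\arg\max_j |Z_j|]\big].
\]

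Next I will use exchangeability: the joint law of $(Z_1,\ldots,Z_k)$ is invariant under permutations, so the quantity $\E[|Z_j|^2\I[j=\arg\max]]$ is the same for every $j$. Summing over $j$ gives $\E[\max_j |Z_j|^2] = \E[Y^2]$, where $Y=\max_j|Z_j|$. Hence
\[
\E[|h_v|^2] = \frac{\|F_v\|^2}{k}\,\E[Y^2].
\]
Multiplying by $d_v$ and summing, then applying $\sum_v d_v\|F_v\|^2 = k$, yields $\E\left[\sum_v d_v|h_v|^2\right] = \E[Y^2]$.

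Proposition \ref{prop:gaussian-fact} gives $(\log k)/2 \le \E[Y^2]\le 16\log k$, which is exactly the claimed two-sided bound. The upper bound also follows from the first inequality of Item 1 of Proposition \ref{prop:condition-one} with $f=F_v$, though that proposition is not needed here. The only point that needs care is the exchangeability step. It requires that the event $\{h_v\neq 0\}$ be defined by the strict argmax over the same collection $\{|\ip{F_v}{g_j}|\}_j$, so that the per-index contributions sum to the expected maximum with no loss from ties. I do not expect any other obstacle.
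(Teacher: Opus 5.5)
Your proof is correct and follows the same route as the paper. Exchangeability of the i.i.d.\ projections $\ip{F_v}{g_j}$ gives $\E[|h_v|^2] = \|F_v\|^2\,\E[\max_j |Z_j|^2]/k$, and then Proposition~\ref{prop:gaussian-fact} together with $\sum_v d_v\|F_v\|^2 = k$ yields the two-sided bound; your explicit treatment of ties and of vertices with $F_v=\mathbf{0}$ only spells out what the paper leaves implicit.
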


\begin{proof}
    Since $g_1,\ldots, g_k$ are chosen independently at random, the random variables $\{\ip{F_v}{ g_j}\}_{j\in [k]}$ are  i.i.d.. Thus by symmetry,
    \[\pr\left[|\ip{F_v}{g_i}|>|\ip{F_v}{g_j}|, \forall j\neq i\right] =\frac{1}{k}\]
    and
    \[\E\left[\left|h_v^{(i)}\right|^2\right] = \frac{\E\left[\max_j |\ip{F_v}{g_j}|^2\right]}{k}\]
    By Proposition \ref{prop:gaussian-fact} we have 
    
    \[\frac{ \|F_v\|^2\log k}{2k}\leq \E\left[\left|h_v^{(i)}\right|^2\right] \leq \frac{16 \|F_v\|^2\log k}{k}\]
    Using item 2 of Proposition \ref{fact:embedding}, we have 
    \[\frac{\log k}{2} \leq \E\left[\sum_v d_v\left|h_v\right|^2\right] \leq 16\log k\]
\end{proof}

To bound the variance of the denominator, unlike in the real case where Louis et al. \cite{LouisRaghavendraTetaliVempala12} use real Hermite polynomials, here we need to extend the argument using complex Hermite polynomials (Lemma \ref{lem:correlation}). We first introduce necessary background on complex Hermite polynomials below.

\textbf{Complex Hermite polynomials and their properties}  

Denote the complex Gaussian space by $\cg = L^2(\C, \cC\gau(0,1))$, which is the Hilbert space of square-integrable functions with respect to the standard complex Gaussian measure $\cC\gau(0,1)$. The (normalized) complex Hermite polynomials $\{H_{m,l}\}_{m,l\in \Z_{\geq 0}}$ form an orthonormal basis for complex-valued functions over $\cg$ \cite{ali2014deformed, intissar2006spectral, van1990new}. In other words, we have $\E_{g\sim \cC\gau(0,1)} H_{m,l}(g)H_{m',l'}(g)^* = 1$ if $(m,l) = (m',l')$ otherwise $0$. Furthermore, the tensor product of these basis polynomials defined by

\[H_{m,l}(x_1,\ldots, x_k) = \prod_{i=1}^k H_{m_i,l_i 
}(x_i) \]

for $m,l \in \Z_{\geq 0}^k$ form an orthonormal basis for functions over $L^2(\C^k, \cC\gau(0,I_k))$. The degree of the polynomial $H_{m,l}(x)$ is $\text{deg}(H_{m,l}) = \sum_i (m_i+l_i)$. Consider the noise operator (\cite{ismail2016analytic}, 3.11) $T_\rho$ for $\rho \in \C, |\rho| \in [0,1]$, that takes a function $H:\C^k\to \C$ and returns the function $T_\rho H:\C^k\to \C$, defined as

\[(T_\rho H)(x)= \E_{Z\sim \cC \gau(0,1)}\left[H\left(\rho x+ \sqrt{1-|\rho|^2} Z\right)\right]\]

It is known that the complex Hermite polynomials form an eigenbasis for $T_\rho$ (\cite{ismail2016analytic}, 3.12) . Specifically, for every $m,l\in \Z \geq 0$,

\[T_\rho H_{m,l} = \rho^{\text{deg}(H_{m,l})} H_{m,l}\]

Thus, the following holds:
\begin{fact}\label{fact:hermite-eigen}
    For $\rho \in \C , |\rho|\in [0,1]$,
    \[\E_{g\sim \cC\gau(0,I_k)} \left[H_{m,l}(g)\left(T_{\rho^*} H_{m',l'}(g)\right)^*\right] =\begin{cases}
        \rho^{\text{deg}(H_{m,l})} & \text{if }(m,l) = (m',l')\\
        0 & \text{otherwise}
    \end{cases} \]
\end{fact}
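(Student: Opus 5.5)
The plan is a direct two-line computation from the two properties just recalled: the eigenbasis relation $T_\rho H_{m,l} = \rho^{\text{deg}(H_{m,l})} H_{m,l}$ and the orthonormality $\E_{g\sim \cC\gau(0,I_k)}\left[H_{m,l}(g)H_{m',l'}(g)^*\right] = \I[(m,l)=(m',l')]$ of the tensorized complex Hermite basis.

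\textbf{Step 1: extend the eigen relation to $k$ variables.} I first record that the eigen relation holds for the multivariate polynomials $H_{m,l}(x)=\prod_{i=1}^k H_{m_i,l_i}(x_i)$. Here the noise operator on $\C^k$ is understood coordinatewise, with $Z\sim\cC\gau(0,I_k)$, so that $\rho x + \sqrt{1-|\rho|^2}\,Z$ has independent coordinates $\rho x_i + \sqrt{1-|\rho|^2}\,Z_i$. Because the coordinates $Z_i$ are independent, the expectation of the product factorizes:
\[(T_\rho H_{m,l})(x) = \prod_{i=1}^k (T_\rho H_{m_i,l_i})(x_i).\]
Applying the one-variable relation to each factor gives the eigenvalue $\prod_i \rho^{m_i+l_i} = \rho^{\text{deg}(H_{m,l})}$.

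\textbf{Step 2: apply it with parameter $\rho^*$.} Since $|\rho^*| = |\rho|\in[0,1]$, the operator $T_{\rho^*}$ is well defined. Step 1 gives
\[T_{\rho^*}H_{m',l'} = (\rho^*)^{\text{deg}(H_{m',l'})} H_{m',l'}.\]
Taking complex conjugates,
\[\left(T_{\rho^*}H_{m',l'}(g)\right)^* = \rho^{\text{deg}(H_{m',l'})}\, H_{m',l'}(g)^*.\]

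\textbf{Step 3: pull out the scalar and use orthonormality.} Substituting into the expectation and pulling the scalar out,
\[\E_g\left[H_{m,l}(g)\left(T_{\rho^*}H_{m',l'}(g)\right)^*\right] = \rho^{\text{deg}(H_{m',l'})}\,\E_g\left[H_{m,l}(g)H_{m',l'}(g)^*\right].\]
By orthonormality the right-hand side vanishes unless $(m,l)=(m',l')$. In that case the degrees coincide and the value is $\rho^{\text{deg}(H_{m,l})}$, which is the claim.

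\textbf{Main obstacle.} The only step needing care is the eigenvalue convention in the complex setting. With some normalizations one gets $T_\rho H_{m,l} = \rho^m(\rho^*)^l H_{m,l}$ rather than $\rho^{m+l}H_{m,l}$. I will simply use the relation exactly as quoted from \cite{ismail2016analytic}. The argument above goes through verbatim for either convention: only the scalar eigenvalue changes, and the conjugation in Step 2 is precisely what turns the eigenvalue of $T_{\rho^*}$ into the stated power of $\rho$.
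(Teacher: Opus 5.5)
Your Steps 2--3 are exactly the paper's argument: the paper quotes $T_\rho H_{m,l}=\rho^{\text{deg}(H_{m,l})}H_{m,l}$ from \cite{ismail2016analytic} and says the Fact follows, and your Step 1 just makes the tensorization explicit. The flaw is in your ``main obstacle'' paragraph, because the argument does \emph{not} go through verbatim under the other convention. Write $|m|=\sum_i m_i$, with $m$ counting powers of $x$ and $l$ counting powers of $\bar x$. If $T_\rho H_{m,l}=\rho^{|m|}(\rho^*)^{|l|}H_{m,l}$, then $T_{\rho^*}H_{m',l'}=(\rho^*)^{|m'|}\rho^{|l'|}H_{m',l'}$, and conjugation produces $\rho^{|m'|}(\rho^*)^{|l'|}$. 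Your Step 3 then returns $\rho^{|m|}(\rho^*)^{|l|}$. This differs from $\rho^{\text{deg}(H_{m,l})}$ as soon as, for example, $|l|=1$ and $\rho\notin\R$, $\rho\neq 0$.

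This is not a normalization issue. The relation $T_\rho H_{m,l}=\rho^m(\rho^*)^lH_{m,l}$ is the one that actually holds for $T_\rho$ as the paper defines it, so the quoted eigen relation, and with it the Fact as stated, fail for non-real $\rho$.

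The simplest check is $k=1$ with the normalized degree-one basis element $\bar x$. From the definition, $T_{\rho^*}\bar x=\E_Z\bigl[\overline{\rho^*x+\sqrt{1-|\rho|^2}\,Z}\bigr]=\rho\bar x$. Hence $\E_g\bigl[\bar g\,(\rho\bar g)^*\bigr]=\rho^*\,\E[|g|^2]=\rho^*$, not $\rho$. No choice of basis avoids this: for non-real $\rho$ the degree-one eigenfunctions of $T_\rho$ are multiples of $x$ (eigenvalue $\rho$) and of $\bar x$ (eigenvalue $\rho^*$).

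In general, the unnormalized complex Hermite polynomials have generating function $\sum_{m,l}\frac{s^mt^l}{m!\,l!}H_{m,l}(x)=e^{sx+t\bar x-st}$. Combined with $\E[e^{aZ+b\bar Z}]=e^{ab}$ for $Z\sim\cC\gau(0,1)$, this gives $T_\rho H_{m,l}=\rho^m(\rho^*)^lH_{m,l}$.

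So what your computation actually proves is the identity with $\rho^{|m|}(\rho^*)^{|l|}$ on the diagonal. You should state that corrected version rather than claim the convention is immaterial.

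The correction is harmless for the paper. Lemma~\ref{lem:correlation} only uses $\bigl|\rho^{|m|}(\rho^*)^{|l|}\bigr|=|\rho|^{\text{deg}(H_{m,l})}$. Moreover, its $f$ depends only on $|x_1|,\dots,|x_k|$, so only coefficients with $m=l$ are nonzero, and there the weight is $|\rho|^{2|m|}\ge 0$.
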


Let $X$ and $Y$ be $\rho$-correlated Gaussians, i.e, $X\sim \cC\gau (0,1)$, $Y= \rho^* X + \sqrt{1-|\rho|^2}\cdot Z$ for an independent $Z\sim \cC\gau(0,1)$. Thus Fact. \ref{fact:hermite-eigen} implies:

\begin{corollary}\label{cor:noise}
    Let $(X_1, Y_1) ,\ldots, (X_k, Y_k) $ be independent pairs of  $\rho$-correlated complex Gaussians. We have for $m,m',l,l'\in \Z_{\geq 0}^k$, 

    \[\E \left[ H_{m,l}(X_1,\ldots, X_k)H_{m',l'}(Y_1,\ldots, Y_k)^*\right] =\begin{cases}
        \rho^{\text{deg}(H_{m,l})}  & \text{if }(m,l) = (m',l')\\
        0 & \text{otherwise}
    \end{cases} \]
\end{corollary}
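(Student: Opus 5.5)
The statement to prove is Corollary~\ref{cor:noise}. I will reduce it to Fact~\ref{fact:hermite-eigen} using two observations. First, the basis polynomials factor as $H_{m,l}(x)=\prod_i H_{m_i,l_i}(x_i)$. Second, the pairs $(X_i,Y_i)$ are independent across $i$. Together these give
\[
\E\left[H_{m,l}(X)H_{m',l'}(Y)^*\right]=\prod_{i=1}^k \E\left[H_{m_i,l_i}(X_i)\,H_{m'_i,l'_i}(Y_i)^*\right].
\]
So it suffices to treat a single pair. In the one-variable case the claim is that $\E[H_{a,b}(X)H_{a',b'}(Y)^*]$ equals $\rho^{a+b}$ when $(a,b)=(a',b')$ and $0$ otherwise. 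The $k$-variable statement then follows at once. If $(m,l)\neq(m',l')$, some factor vanishes. Otherwise the product equals $\prod_i\rho^{m_i+l_i}=\rho^{\deg H_{m,l}}$.

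For a single pair I condition on $X$. By definition $Y=\rho^*X+\sqrt{1-|\rho|^2}\,Z$, where $Z\sim\cC\gau(0,1)$ is independent of $X$. Hence
\[
\E\left[H_{a',b'}(Y)^*\mid X\right]=\left(\E_Z\left[H_{a',b'}\left(\rho^*X+\sqrt{1-|\rho^*|^2}Z\right)\right]\right)^*=\left((T_{\rho^*}H_{a',b'})(X)\right)^*,
\]
using $|\rho^*|=|\rho|$. The tower property then gives
\[
\E\left[H_{a,b}(X)H_{a',b'}(Y)^*\right]=\E_X\left[H_{a,b}(X)\left(T_{\rho^*}H_{a',b'}(X)\right)^*\right].
\]
This is exactly the quantity in Fact~\ref{fact:hermite-eigen} with $k=1$, which yields $\rho^{a+b}$ on the diagonal and $0$ off it.

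The only delicate point is the conjugation and phase bookkeeping. The eigenvalue relation $T_{\rho^*}H=(\rho^*)^{\deg}H$, followed by conjugation, must produce $\rho^{\deg}$ and not $(\rho^*)^{\deg}$. Equivalently, the definition of $\rho$-correlation, with $Y=\rho^*X+\dots$, must be matched to the convention used in Fact~\ref{fact:hermite-eigen}. I will take that fact as given, so the argument is consistent with it. I will also remark that Fubini and the $L^2$ integrability of polynomials under Gaussian measure justify interchanging the expectations.
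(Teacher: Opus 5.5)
Your reduction is the paper's own argument. The paper writes $Y=\rho^*X+\sqrt{1-|\rho|^2}\,Z$, observes that conditioning on $X$ turns $\E[H_{m',l'}(Y)^*\mid X]$ into $\bigl(T_{\rho^*}H_{m',l'}(X)\bigr)^*$, and invokes Fact~\ref{fact:hermite-eigen}. Your tensorization to one coordinate is harmless but unnecessary, since that fact is already stated for $\cC\gau(0,I_k)$. The off-diagonal vanishing you derive is correct.

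The problem is the conjugation point you flagged and then deferred. It is not bookkeeping: it is exactly where the argument, and the statement itself, fail for non-real $\rho$. The relation $T_\rho H_{m,l}=\rho^{\text{deg}(H_{m,l})}H_{m,l}$ behind Fact~\ref{fact:hermite-eigen} is false, because substituting $x\mapsto\rho x$ multiplies $x$ by $\rho$ but $\overline{x}$ by $\overline{\rho}$. Concretely, take $k=1$ and the degree-one basis polynomial $H(x)=\overline{x}$:
\[
\E\bigl[H(X)\,H(Y)^*\bigr]=\E\bigl[\overline{X}\,Y\bigr]=\E\bigl[\overline{X}\,(\rho^*X+\sqrt{1-|\rho|^2}\,Z)\bigr]=\rho^*,
\]
whereas the corollary claims $\rho^{1}$. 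The other degree-one polynomial $x$ does give $\E[X\overline{Y}]=\rho$. So the corollary as stated is false whenever $\rho\notin\R$. Taking Fact~\ref{fact:hermite-eigen} as given only imports the same error; the direct check you announced would have exposed it.

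What your argument proves once the eigenvalue is corrected is the following. Take the convention in which $H_{m,l}$ has leading term $\prod_i x_i^{m_i}\overline{x_i}^{\,l_i}$, and let $|m|=\sum_i m_i$. Write $\rho=re^{i\theta}$. Since $Z$ and $e^{i\theta}Z$ have the same law, $T_\rho$ equals $T_r$ composed with rotation by $e^{i\theta}$. Each monomial of $H_{m,l}$ picks up the factor $e^{i\theta(|m|-|l|)}$ under that rotation. Hence $T_\rho H_{m,l}=\rho^{|m|}\,\overline{\rho}^{\,|l|}H_{m,l}$, and your conditioning argument gives the diagonal value $\rho^{|m|}(\rho^*)^{|l|}$ instead of $\rho^{\text{deg}(H_{m,l})}$.

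This correction does not hurt the paper's application. Step $(*)$ of Lemma~\ref{lem:correlation} uses only the modulus $|\rho|^{\text{deg}(H_{m,l})}$. Moreover, the function $f$ there depends only on $(|x_1|,\ldots,|x_k|)$, so its expansion contains only terms with $m=l$, where the value is $|\rho|^{2|m|}\ge 0$. The variance bound is therefore unaffected, but the corollary itself should be restated with $\rho^{|m|}(\rho^*)^{|l|}$.
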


The following is the main lemma we need for our analysis:

\begin{lemma}\label{lem:correlation}
    Let $u,v \in \C^k$ be unit vectors and $G_1,\ldots, G_k \sim \cC\gau(0,I_k)$ be i.i.d. complex Gaussian vectors. Let $X, Y\in \C^k$ be such that $X_j= \ip{u}{G_j}, Y_j = \ip{v}{G_j}$, so that for each $j\in [k]$, $X_j$ and $Y_j$ are $\rho$-correlated Gaussians for $\rho = \ip{u}{v}$. Fix $i\in [n]$, consider the function $f:\C^k \to \R$ defined by 
    \[f(x) =\begin{cases}
    |x_i|^2 & (\text{if } |x_i|^2> |x_j|^2 ~\forall j\neq i)\\
    0 & \text{otherwise}
    \end{cases}\]
    Then we have 
    \[\E \left[f(X)f(Y)\right] \leq \frac{256(\log k)^2}{k} \left(|\ip{u}{v}|^2 + \frac{1}{k}\right)\]
\end{lemma}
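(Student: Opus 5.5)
We expand $f$ in the complex Hermite basis and use Corollary \ref{cor:noise}. Since $f\in L^2(\C^k,\cC\gau(0,I_k))$, write
\[f = \sum_{m,l\in\Z_{\geq 0}^k} \hat f(m,l)\, H_{m,l}.\]
Because $(X_j,Y_j)$ are independent pairs of $\rho$-correlated complex Gaussians with $\rho=\ip{u}{v}$, and $f$ is real-valued (so $f(Y)=f(Y)^*$), Corollary \ref{cor:noise} gives
\[\E[f(X)f(Y)] = \sum_{m,l} |\hat f(m,l)|^2 \rho^{\deg(H_{m,l})}.\]
We take absolute values and bound $|\rho|^{d}\le |\rho|^2$ for $d\ge 2$. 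This gives
\[\E[f(X)f(Y)]\le |\hat f(0,0)|^2 + |\rho|\sum_{\deg=1}|\hat f(m,l)|^2 + |\rho|^2\sum_{\deg\ge 2}|\hat f(m,l)|^2.\]
The $\deg\ge 2$ part is at most $|\rho|^2\E[f(X)^2]$. So the plan has three steps: compute the degree-0 term, show the degree-1 term vanishes, and bound the second moment.

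\textbf{Degree 0.} We have $\hat f(0,0)=\E[f(X)] = \E[\max_j|X_j|^2]/k \le 16\log k/k$ by symmetry and Proposition \ref{prop:gaussian-fact}. Hence $|\hat f(0,0)|^2\le 256(\log k)^2/k^2$, which is the $\frac1k\cdot\frac{256(\log k)^2}{k}$ term of the statement.

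\textbf{Degree 1.} The degree-one Hermite polynomials are proportional to $x_j$ and $x_j^*$. The function $f$ is invariant under the phase rotation $x_j\mapsto e^{i\theta}x_j$ of any single coordinate, since it depends only on the magnitudes $|x_j|$. The complex Gaussian measure is also invariant under this rotation. Therefore $\E[f(X)X_j^*]=e^{-i\theta}\E[f(X)X_j^*]$ for all $\theta$, which forces $\E[f(X)X_j^*]=0$, and likewise for $X_j$. So the degree-1 term vanishes. This circular symmetry is what makes the complex case cleaner than the real case, where one instead uses that $f$ is even.

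\textbf{Second moment.} By symmetry,
\[\E[f(X)^2] = \frac{\E[\max_j|X_j|^4]}{k} \le \frac{64e(\log k)^2}{k}.\]
This constant exceeds $256$ only slightly: $64e\approx 174<256$. So it suffices, and the bound is $\frac{256(\log k)^2}{k}|\rho|^2$. Summing the three parts gives the claim.

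\textbf{Main obstacle.} The delicate point is justifying the Hermite expansion and the use of Corollary \ref{cor:noise} for this non-polynomial $f$. I would do this by $L^2$ convergence together with Cauchy–Schwarz, since $\sum|\hat f|^2=\E f^2<\infty$. I would also confirm the exact correlation convention: $X_j=\ip{u}{G_j}$ and $Y_j=\ip{v}{G_j}$ have $\E[X_jY_j^*]=u^*v$ up to conjugation, which does not matter because only $|\rho|$ enters the bound.
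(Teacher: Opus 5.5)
Your proposal is correct and follows the paper's proof essentially step for step. You expand $f$ in the complex Hermite basis, apply Corollary~\ref{cor:noise}, note that the degree-one coefficients vanish, and bound the remainder by $\hat f_{0,0}^2+|\rho|^2\,\mathbb{E}[f(X)^2]$ using Proposition~\ref{prop:gaussian-fact}. The differences from the paper are minor:
\begin{itemize}
\item You kill the degree-one terms using invariance of $f$ under a phase rotation of a single coordinate, rather than the paper's evenness $f(x)=f(-x)$; either argument works.
\item Because you pass to absolute values, your bound does not depend on whether the eigenvalue is exactly $\rho^{\deg}$ or $\rho^{\sum_j m_j}\bar\rho^{\sum_j l_j}$.
\item You spell out the $L^2$-continuity justification and the check $64e<256$, which the paper leaves implicit. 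Your sentence saying the constant ``exceeds'' $256$ is just a wording slip.
\end{itemize}
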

\begin{proof}
    By Proposition \ref{prop:gaussian-fact} we have $\E[f(X)]\leq  16(\log k)/k$ and $\E[f(X)^2]\leq 256(\log k)^2/k$. The function $f$ has the decomposition $f(x) = \sum_{m,l \in \Z_{\geq 0}^k} \widehat{f}_{m,l} H_{m,l}(x)$, and satisfies $\E[f(X)^2] = \sum_{m,l \in \Z_{\geq 0}^k} (\widehat{f}_{m,l})^2 $. Furthermore, $f$ is an even function since $f(x) = f(-x)$. We claim that the odd-degree coefficients of $f$ vanish, i.e, $\widehat{f}_{m,l}=0$ for $\text{deg}(H_{m,l})$ odd: let $H_{m,l}$ be an odd-degree basis polynomial, by definition (\cite{ismail2016analytic}) $H_{m,l}$ is an odd function if $\text{deg}(H_{m,l})$ is odd. Therefore, $f(x)H_{m,l}(x)^*$ is also an odd function as $f$ is even. Since the complex Gaussian measure $\cC\gau(0,I_k)$ is symmetric, the integral of any odd integrable function is $0$. Thus,
    \[\widehat{f}_{m,l} = \frac{\ip{f}{H_{m,l}}}{\|H_{m,l}\|^2} = \frac{\E_{X\sim \cC\gau(0,I_k)}f(X)H_{m,l}(X)^*}{\|H_{m,l}\|^2} = 0\]
    for $\text{deg}(H_{m,l})$ being odd. We have  
    \begin{align*}
        \E[f(X)f(Y)^*]
        &= \E\left[\left(\sum_{m,l \in \Z_{\geq 0}^k} \widehat{f}_{m,l} H_{m,l}(X)\right) \cdot \left(\sum_{m,l \in \Z_{\geq 0}^k} \widehat{f}_{m,l} H_{m,l}(Y)\right)^*\right]\\
        &= \sum_{\substack{m,l\in \Z_{\geq 0}^k}} \rho^{\text{deg}(H_{m,l})} |\widehat{f}_{m,l}|^2 \qquad \qquad \qquad \qquad  (\text{by Corollary }\ref{cor:noise}) \\
        &\leq \E[f]^2 +  \sum_{\substack{\text{deg}(H_{m,l})>1\\m,l\in \Z_{\geq 0}^k}} \rho^{\text{deg}(H_{m,l})} |\widehat{f}_{m,l}|^2  \qquad  \left(\widehat{f}_{0,0}=\E[f] \text{ and } \widehat{f}_{m,l} = 0 \text{ for } \text{deg}(H_{m.l}) =1\right)\\
        &\leq \E[f]^2 + |\rho|^2\cdot  \E[f^2] \qquad\qquad\qquad\qquad\qquad (*)\\      
        &\leq \frac{256(\log k)^2}{k^2} + |\ip{u}{v}|^2\cdot \frac{256(\log k)^2}{k}\\
        &\leq \frac{256(\log k)^2}{k}\left(|\ip{u}{v}|^2 + \frac{1}{k}\right)
    \end{align*}
    We explain the (*) step in more detail. We use the fact that for $\text{deg}(H_{m.l})\geq 2$, $|\rho^{\text{deg}(H_{m.l})}|\leq |\rho|^2$ since $0\leq |\rho|\leq 1$. Thus, since $f$ is a real-valued function,

\begin{align*}
    \sum_{\substack{\text{deg}(H_{m,l})>1\\m,l\in \Z_{\geq 0}^k}} \rho^{\text{deg}(H_{m,l})} |\widehat{f}_{m,l}|^2  &\leq |\rho|^2\cdot |\widehat{f}_{0,0}|^2 + \sum_{\substack{\text{deg}(H_{m,l})\geq1\\m,l\in \Z_{\geq 0}^k}} \rho^{\text{deg}(H_{m,l})} |\widehat{f}_{m,l}|^2  \\
    &\leq |\rho|^2\cdot |\widehat{f}_{0,0}|^2 + |\rho|^2\cdot \sum_{\substack{\text{deg}(H_{m,l})\geq1\\m,l\in \Z_{\geq 0}^k}}   |\widehat{f}_{m,l}|^2 \qquad (\text{for }\text{deg}(H_{m,l})=1, \widehat{f}_{m,l} = 0 )\\
    &= |\rho|^2\cdot \E\left[f^2\right] 
\end{align*}
\end{proof}

We are now ready to prove the bound on the variance.
\begin{lemma}
    (Item 3 of Lemma \ref{lem:all-bounds}.) For each $i\in[k]$, $h = h^{(i)}$ as constructed in Algorithm \ref{algo:main}, we have
    \[\mathrm{Var}\left[\sum_vd_v|h_v|^2\right] \leq 512 (\log k)^2\]
\end{lemma}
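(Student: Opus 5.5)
We bound the variance by the second moment, $\mathrm{Var}[\sum_v d_v|h_v|^2] \le \E[(\sum_v d_v|h_v|^2)^2] = \sum_{u,v} d_u d_v \E[|h_u|^2|h_v|^2]$, and estimate each pairwise term using Lemma \ref{lem:correlation}.

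For a fixed pair $u,v$ with $F_u,F_v\neq \mathbf{0}$, write $F_u = \|F_u\|\bar F_u$ and $F_v = \|F_v\|\bar F_v$. Set $X_j = \ip{\bar F_u}{g_j}$ and $Y_j = \ip{\bar F_v}{g_j}$ for $j\in[k]$. Then $h_u = \|F_u\| X_i$ exactly when $|X_i|>|X_j|$ for all $j\neq i$, and $h_u=0$ otherwise. With $f$ as in Lemma \ref{lem:correlation}, this gives $|h_u|^2 = \|F_u\|^2 f(X)$ and $|h_v|^2 = \|F_v\|^2 f(Y)$. The pairs $(X_j,Y_j)$ are independent across $j$ and $\rho$-correlated with $\rho=\ip{\bar F_u}{\bar F_v}$, so the lemma applies and yields
\[\E[|h_u|^2|h_v|^2] \le \|F_u\|^2\|F_v\|^2\cdot\frac{256(\log k)^2}{k}\left(\frac{|\ip{F_u}{F_v}|^2}{\|F_u\|^2\|F_v\|^2}+\frac1k\right) = \frac{256(\log k)^2}{k}\left(|\ip{F_u}{F_v}|^2 + \frac{\|F_u\|^2\|F_v\|^2}{k}\right).\]
Pairs with $F_u=\mathbf{0}$ or $F_v=\mathbf{0}$ contribute zero.

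Summing with weights $d_ud_v$, item 3 of Proposition \ref{fact:embedding} gives $\sum_{u,v} d_ud_v|\ip{F_u}{F_v}|^2 = k$. Item 2 gives $\sum_{u,v} d_ud_v\|F_u\|^2\|F_v\|^2 = (\sum_v d_v\|F_v\|^2)^2 = k^2$. Hence
\[\E\Big[\big(\textstyle\sum_v d_v|h_v|^2\big)^2\Big]\le \frac{256(\log k)^2}{k}\left(k + \frac{k^2}{k}\right) = 512(\log k)^2,\]
and the variance is bounded by the same quantity.

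The substantive work is already done in Lemma \ref{lem:correlation}, via the complex Hermite expansion. The remaining points to check are that the diagonal terms $u=v$ are covered (they are: $\rho=1$ and the bound still holds), and that the normalization to unit vectors is handled correctly so that the $1/k$ term produces exactly $\|F_u\|^2\|F_v\|^2/k$. I also need to confirm that the apparent typo ``$i\in[n]$'' in Lemma \ref{lem:correlation} should read $i\in[k]$, which is how I use it here.
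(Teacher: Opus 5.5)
Your proposal is correct and matches the paper's proof step for step. Like the paper, you bound the variance by the second moment, write $|h_u|^2=\|F_u\|^2 f(X)$ using the normalized projections $X_j=\ip{\bar F_u}{g_j}$, apply Lemma \ref{lem:correlation} to each pair, and sum with weights $d_ud_v$ using items 2 and 3 of Proposition \ref{fact:embedding}; your explicit treatment of vertices with $F_u=\mathbf{0}$ and of the $i\in[k]$ typo are minor clean-ups that the paper leaves implicit.
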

\begin{proof}
    \begin{align*}
        \mathrm{Var}\left[\sum_vd_v|h_v|^2\right]  &\leq \sum_{u,v}d_ud_v\|F_u\|^2\|F_v\|^2\cdot \E\left[\frac{|h_u|^2}{\|F_u\|^2}\cdot \frac{|h_v|^2}{\|F_v\|^2}\right]\\
        &\leq \sum_{u,v}d_ud_v\|F_u\|^2\|F_v\|^2\cdot \E\left[f\left(\ip{\bar{F}_u}{g_1},\ldots, \ip{\bar{F}_u}{g_k}\right)\cdot f\left(\ip{\bar{F}_v}{g_1},\ldots, \ip{\bar{F}_v}{g_k}\right)\right]\\
        &\leq \sum_{u,v}d_ud_v\|F_u\|^2\|F_v\|^2\cdot \frac{256(\log k)^2}{k}\left(|\ip{\bar{F}_u}{\bar{F}_v}|^2+\frac{1}{k}\right) \qquad  (\text{by Lemma }\ref{lem:correlation})\\
        &\leq \frac{256(\log k)^2}{k} \cdot \left(\sum_{u,v}d_ud_v\cdot  |\ip{F_u}{F_v}|^2+\frac{1}{k}\cdot \left(\sum_v d_v\|F_v\|^2\right)^2\right)\\
        &\leq 512\cdot (\log k)^2 \qquad\qquad\qquad\qquad\qquad\qquad\qquad\qquad\qquad (\text{by Proposition }\ref{fact:embedding})
    \end{align*}
    
\end{proof}

\subsection{Analysis of the numerator}
 For each edge $u\to v$, the expectation $\E\left[(|h_u|+|h_v|)|h_u-z\cdot h_v|\right]$ can be decomposed as 
\begin{align*}
    \E\left[\left(|h_u|+|h_v|\right)\cdot |h_u-z\cdot h_v|\right]  &=\E\left[(|h_u|+|h_v|)\cdot |h_u-z\cdot h_v| \cdot \I[h_u,h_v \neq 0]\right]  \\
    & \quad +\E\left[(|h_u|+|h_v|)\cdot|h_u-z\cdot h_v| \cdot \I[h_u\neq 0, h_v = 0]\right]  \\
    & \quad + \E\left[(|h_u|+|h_v|)\cdot |h_u-z\cdot h_v| \cdot \I[h_u= 0, h_v \neq 0]\right] 
\end{align*}

In the following we will bound each of the terms on the RHS separately.

\begin{proposition}\label{prop: condition-both-nonzero}
    \[\E\left[(|h_u|+|h_v|)\cdot |h_u-z\cdot h_v| \cdot  \I[h_u,h_v\neq 0]\right] \leq \frac{16\log k}{k}\cdot  (\|F_u\|+\|F_v\|)\cdot \|F_u - z\cdot F_v\|\]
\end{proposition}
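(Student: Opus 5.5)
On the event $\{h_u\neq 0,\ h_v\neq 0\}$ we have $h_u=\ip{F_u}{g}$ and $h_v=\ip{F_v}{g}$ with the same $g=g_i$. Linearity of the inner product in the second argument (with conjugation in the first, so that $z$ passes through as a scalar on $F_v$) then gives
\[h_u - z\cdot h_v = \ip{F_u}{g} - z\ip{F_v}{g} = \ip{F_u - z^* F_v}{g}.\]
I will check the conjugation convention carefully. If $\ip{x}{y}=x^*y$, then $z\ip{F_v}{g}=\ip{z^*F_v}{g}$, so the relevant vector is $F_u-z^*F_v$. This is harmless because one can equivalently write $\ip{g}{\cdot}$ or simply note that the bound only needs a fixed vector $f$ whose norm matches $\|F_u-zF_v\|$. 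If a mismatch appears, I will reinterpret $h_v$ through the conjugate projection, which has the same distribution. Hence on this event
\[(|h_u|+|h_v|)\,|h_u-z h_v| = \bigl(|\ip{F_u}{g}|+|\ip{F_v}{g}|\bigr)\cdot|\ip{f}{g}|, \qquad f:=F_u - z F_v,\]
and $f$ is a fixed vector, independent of the Gaussians.

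Next I bound the indicator $\I[h_u,h_v\neq0]$ by $\I[h_u\neq 0]$ for the term containing $|\ip{F_u}{g}|$, and by $\I[h_v\neq 0]$ for the term containing $|\ip{F_v}{g}|$. For the first term I apply Cauchy--Schwarz:
\[\E\bigl[|\ip{F_u}{g}|\,|\ip{f}{g}|\,\I[h_u\neq0]\bigr]\le \sqrt{\E\bigl[|\ip{F_u}{g}|^2\I[h_u\ne0]\bigr]}\cdot\sqrt{\E\bigl[|\ip{f}{g}|^2\I[h_u\ne0]\bigr]}.\]
Item 1 of Proposition \ref{prop:condition-one} bounds each factor: the first by $\sqrt{16\|F_u\|^2\log k/k}$, using $f=F_u$ there, and the second by $\sqrt{16\|f\|^2\log k/k}$. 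The product is therefore at most $\frac{16\log k}{k}\|F_u\|\,\|F_u-zF_v\|$. The symmetric argument with $v$ gives $\frac{16\log k}{k}\|F_v\|\,\|F_u-zF_v\|$, and summing the two yields the claim.

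The main obstacle is making sure Item 1 of Proposition \ref{prop:condition-one} legitimately applies with the indicator $\I[h_v\neq 0]$. That indicator depends on all of $g_1,\dots,g_k$, while $f$ is fixed and $g=g_i$, which is exactly the setting of that proposition, so the application is fine. The remaining care is the phase/conjugation bookkeeping in the first step: $|h_u-zh_v|$ must equal $|\ip{f}{g}|$ for some fixed $f$ with $\|f\|=\|F_u-zF_v\|$, and $\|F_u-z^*F_v\|$ must not appear in its place. If the paper's conventions produce $z^*$, I would redefine the embedding coordinates by conjugation, which does not affect any norms, so that the identity matches.
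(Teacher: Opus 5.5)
Your main line is the paper's argument: split off $|h_u|$ and $|h_v|$ by linearity, apply Cauchy--Schwarz, drop one of the two non-vanishing conditions, and invoke item 1 of Proposition \ref{prop:condition-one} with $f=F_u$, $f=F_v$ and $f$ equal to the difference vector. You weaken the indicator before Cauchy--Schwarz rather than after, which is immaterial. The step that fails as you justify it is the phase bookkeeping. The paper passes over this step without comment, simply writing $h_u-zh_v=\ip{F_u-zF_v}{g}$.

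You correctly find that with $\ip{x}{y}=x^*y$ one gets $h_u-zh_v=\ip{F_u-z^*F_v}{g}$, but none of your reasons for calling this harmless is valid.
\begin{itemize}
\item In general $\|F_u-z^*F_v\|\neq\|F_u-zF_v\|$, and the difference matters. Take $p\ge 3$ and $F_u=zF_v\neq 0$. Then $|\ip{F_u}{g_j}|=|\ip{F_v}{g_j}|$ for all $j$, so $h_u\neq 0$ exactly when $h_v\neq 0$. On that event $h_u-zh_v=(z^*-z)\ip{F_v}{g}\neq 0$ almost surely. So the left-hand side is positive while $\|F_u-zF_v\|=0$.
\item Projecting with $\ip{g_i}{F_v}$ instead does not preserve the distribution. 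With the same $g_i$ it replaces each $h_v$ by $\overline{h_v}$, so $|h_u-zh_v|$ becomes exactly $|h_u-z^*h_v|$ of the original algorithm. The joint laws differ because the covariances $\ip{F_u}{F_v}$ get conjugated.
\item Conjugating the embedding sends $\|F_u-zF_v\|$ to $\|F_u-z^*F_v\|$, so it does change the norm you need.
\end{itemize}

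The repair is to fix one convention and use it throughout. Either option works:
\begin{itemize}
\item Keep $\ip{x}{y}=x^*y$ and prove the bound with $\|F_u-z^*F_v\|$ on the right; your argument does this verbatim. The paper's $B_z$ carries $-z^*$ at the head of each edge, so $x^*B_z^*B_zx=\sum_{u\to v}w(u,v)|x_u-z^*x_v|^2$. This is precisely the quantity Proposition \ref{fact:embedding} actually controls, so the downstream analysis goes through.
\item Project with $h_v=\ip{g_i}{F_v}$. This changes no magnitude $|h_v|$, hence no indicator, so item 1 of Proposition \ref{prop:condition-one} still applies. It gives $h_u-zh_v=\ip{g}{F_u-zF_v}$ exactly, which matches the Rayleigh quotient written with $|x_u-zx_v|^2$.
\end{itemize}
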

\begin{proof}
    First, note that
    \begin{align*}
        &\E\left[(|h_u|+|h_v|)\cdot |h_u-z\cdot h_v| \cdot  \I[h_u,h_v\neq 0]\right] \\
        &= \E\left[|\ip{F_u}{g}|\cdot |\ip{F_u-z\cdot F_v}{g}| \cdot \I[h_u,h_v\neq 0] \right]  + \E\left[|\ip{F_v}{g}|\cdot |\ip{F_u-z\cdot F_v}{g}| \cdot  \I[h_u,h_v\neq 0] \right] \\
        &\quad  (\text{linearity of expectation}) \\ 
        &\leq \sqrt{\E\left[|\ip{F_u}{g}|^2 \cdot  \I[h_u,h_v\neq 0]\right]}\cdot \sqrt{\E\left[|\ip{F_u-z\cdot F_v}{g}|^2 \cdot \I[h_u,h_v\neq 0]\right]} \\
        &\quad + \sqrt{\E\left[|\ip{F_v}{g}|^2 \cdot  \I[h_u,h_v\neq 0] \right]}\cdot \sqrt{\E\left[ |\ip{F_u-z\cdot F_v}{g}|^2 \cdot  \I[h_u,h_v\neq 0] \right]}\\
        &\quad (\text{Cauchy-Schwarz})
    \end{align*}
    We have
    \begin{align*}
        \E\left[|\ip{F_u}{g}|^2 \cdot \I[h_u,h_v\neq 0]\right] &\leq \E\left[|\ip{F_u}{g}|^2 \cdot \I[h_u\neq 0]\right]\\
        &\leq \frac{16\|F_u\|^2 \log k}{k} && (\text{by item 1 of Proposition }\ref{prop:condition-one} \text{ and symmetry})
    \end{align*}

    Similarly, we have 
\[\E\left[|\ip{F_u}{g}|^2 \cdot \I[h_u,h_v\neq 0]\right] \leq \frac{16\|F_v\|^2 \log k}{k}\]

and 

\begin{align*}
    \E\left[|\ip{F_u-z\cdot F_v}{g}|^2 \cdot \I[h_u,h_v\neq 0] \right]&\leq \E\left[\max_j |\ip{F_u-z\cdot F_v}{g_j}|^2\cdot \I\left[h_v\neq 0\right]\right]\\
    &\leq \frac{16\|F_u-z\cdot F_v\|^2\log k}{k} \quad  (\text{by item 1 of Proposition }\ref{prop:condition-one} \text{ and symmetry})
\end{align*}

Putting together, 
    \[\E\left[(|h_u|+|h_v|)\cdot |h_u-z\cdot h_v| \cdot  \I[h_u,h_v\neq 0]\right] \leq \frac{16\log k}{k} \cdot (\|F_u\|+\|F_v\|)\cdot \|F_u - z\cdot F_v\|\]
\end{proof}

\begin{proposition}\label{prop:5.13}
    \[\E\left[(|h_u|+|h_v|)\cdot |h_u-z\cdot h_v| \cdot \I[h_u\neq 0, h_v = 0]\right] \leq \frac{16c_1}{k}\cdot \|F_u\|^2 \cdot  (\log k)^{3/2} \cdot \ang(F_u,F_v)\]
\end{proposition}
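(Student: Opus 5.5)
The plan is to reduce the left-hand side to a Gaussian second-moment quantity weighted by the bad event, and then to control that quantity by combining the tail behaviour of complex Gaussians with the bound on the probability that the argmax switches (item 2 of Proposition \ref{prop:condition-one}, which rests on Theorem \ref{fact:correlation}).

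\textbf{Reduction.} On the event $\{h_u\neq 0,\ h_v=0\}$ we have $h_v=0$ and $h_u=\ip{F_u}{g}$ with $g=g_i$. So the integrand equals $|h_u|\cdot|h_u| = |\ip{F_u}{g_i}|^2$, and the quantity to bound is
\[\E\left[|\ip{F_u}{g_i}|^2\cdot \I[h_u\neq 0,\ h_v=0]\right].\]
Write $X_j=\ip{\bar F_u}{g_j}$ and $Y_j=\ip{\bar F_v}{g_j}$. These are i.i.d.\ pairs of $\rho$-correlated complex Gaussians with $\rho=\ip{\bar F_u}{\bar F_v}$. The event says $\arg\max_j|X_j|=i$ and $\arg\max_j|Y_j|\neq i$. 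Magnitudes are invariant under multiplying $\bar F_v$ by a root of unity $w$, so we may replace $\bar F_v$ by the rotation $w\bar F_v$ attaining $\ang(F_u,F_v)$ without changing the event. Summing over $i$ and using symmetry of the $k$ indices, the quantity equals
\[\frac{\|F_u\|^2}{k}\,\E\left[M^2\,\I[\mathcal A]\right], \qquad M=\max_j|X_j|, \quad \mathcal A=\{\arg\max|X|\neq\arg\max|Y|\}.\]
It therefore suffices to show $\E[M^2\I[\mathcal A]]\le 16c_1(\log k)^{3/2}\ang(F_u,F_v)$.

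\textbf{Truncation.} Fix a threshold $T=C\log k$ and split according to whether $M^2\le T$ or $M^2>T$.
\begin{itemize}
\item On the bulk $\{M^2\le T\}$, the contribution is at most $T\cdot\pr[\mathcal A]$. Item 2 of Proposition \ref{prop:condition-one}, summed over $i$, gives $\pr[\mathcal A]\le c_1\sqrt{\log k}\cdot\ang(F_u,F_v)$. This yields the target $O((\log k)^{3/2})\cdot\ang$.
\item On the tail $\{M^2>T\}$, the naive bound $\E[M^2\I[M^2>T]]\le k(T+1)e^{-T}$ is tiny, but it is not proportional to $\ang$. This matters when $\ang$ is extremely small.
\end{itemize}

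\textbf{Handling the tail.} To get a bound proportional to $\ang$ on the tail, I would open up the proof of Theorem \ref{fact:correlation} and use its conditional form. Write $Y=\rho^* X+\sqrt{1-|\rho|^2}\,Z$ with $Z$ independent, and note $1-|\rho|\lesssim\ang^2$. Conditioned on $X$, the argmax switches only if the perturbation $\sqrt{1-|\rho|^2}\,Z$ (plus the magnitude shrinkage of $\rho^*X$) changes the order between the top coordinate and some other coordinate. So $\pr[\mathcal A\mid X]$ is bounded by a union over $j$ of the probability that the noise exceeds the gap $|X_{(1)}|-|X_j|$. This gives $\pr[\mathcal A\mid X]\lesssim\ang\cdot\mathrm{poly}(M)\cdot(\text{density of the gap near }0)$. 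Integrating $M^2\,\pr[\mathcal A\mid X]$ against the Gaussian law over $\{M^2>T\}$, the $e^{-T}$-type tail should absorb the polynomial factors and leave $O(\ang)$.

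\textbf{Main obstacle.} The main obstacle is this tail step: extracting an $\ang$-linear conditional switching bound that survives multiplication by $M^2$. If the tail cannot be handled cleanly, the fallback is a Hölder argument $\E[M^2\I[\mathcal A]]\le \E[M^{2q}]^{1/q}\pr[\mathcal A]^{1-1/q}$ with $q$ close to $1$ (for example $q=1+1/\log k$). This would lose only constant factors, using Proposition \ref{prop:gaussian-fact}-type moment bounds $\E[M^{2q}]^{1/q}=O(\log k)$, but it yields $\ang^{1-1/q}$ rather than $\ang$. Recovering the exact linear dependence would then require the refined conditional argument above.
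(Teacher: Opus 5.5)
Your reduction is correct: on $\{h_u\neq 0,h_v=0\}$ the integrand is exactly $|h_u|^2=\|F_u\|^2M^2$, so the left-hand side equals $\frac{\|F_u\|^2}{k}\E[M^2\,\mathbb{I}[\mathcal A]]$. The bulk estimate is also fine: writing $a:=\ang(F_u,F_v)$, you have $\E[M^2\,\mathbb{I}[\mathcal A]\,\mathbb{I}[M^2\le T]]\le T\cdot\pr[\mathcal A]\le T c_1\sqrt{\log k}\,a$. But the proposal stops exactly where the difficulty lies. Nothing you write bounds the tail $\E[M^2\,\mathbb{I}[\mathcal A]\,\mathbb{I}[M^2>T]]$ by something linear in $a$, so the statement is not proved.

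The conditional sketch is not a usable estimate as written. Conditioned on all of $X$, the switching probability is not $O(a)\cdot\mathrm{poly}(M)$. For a fixed gap $\delta$ between the two largest $|X_j|$, it is roughly a Gaussian tail in $\delta/\sqrt{1-|\rho|^2}$, and it is of constant order when $\delta$ is below the noise level. So linearity in $a$ only emerges after averaging over the law of the gap with the weight $M^2$, which is the computation you leave undone.

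The H\"older fallback is also set up backwards. In $\E[M^{2q}]^{1/q}\pr[\mathcal A]^{1-1/q}$, the choice $q=1+1/\log k$ gives exponent $1-1/q=1/(\log k+1)$, i.e.\ essentially no decay in $a$. An exponent near $1$ requires $q$ large. Since $\E[M^{2q}]^{1/q}\ge\Gamma(q+1)^{1/q}$ grows linearly in $q$, you then lose at least a factor of order $\log(1/a)$ for small $a$.

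The missing idea, which is how the paper proceeds, is that no truncation is needed: $M^2$ and the switching indicator are non-positively correlated. The conditional probability of a switch given $M^2=x$ is non-increasing in $x$, since a larger top magnitude makes the argmax more stable. The paper proves this in the equivalent form that $\pr[h_v\neq 0\mid |h_u|^2=x]$ is non-decreasing. It writes $F_v=\alpha F_u+\beta F_v'$ with $F_v'\perp F_u$ and identifies the conditional law of $|\ip{F_v}{g_i}|$ as a Rice distribution. It then differentiates in $x$ via Bessel-function identities, using that $I_1/I_0$ is non-decreasing and that $f_{|A|}(x)/\pr[|A|\le x]$ is non-increasing for the Rayleigh variable $|A|$. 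Harris's inequality (Theorem \ref{thm:harris}) then gives $\E[M^2\,\mathbb{I}[\mathcal A]]\le\E[M^2]\,\pr[\mathcal A]\le 16\log k\cdot c_1\sqrt{\log k}\,a$, which is the stated bound with the constant $16c_1$.

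If you want to keep your truncation, condition on the value and location of the maximum (the events $E_t$ of Appendix \ref{appendix}), not on all of $X$. Set $\epsilon=1-|\rho|\le a^2/2$ and assume $\epsilon<1/(4\log k)$; otherwise $a\ge 1/\sqrt{2\log k}$ and the bound follows from $\E[M^2]\le 16\log k$. Lemma \ref{lem:APdifference-condition} then gives $\pr[|Y_1|\le|Y_j|\mid E_t]=O(\sqrt{\epsilon})$ uniformly for $t>2\sqrt{\log k}$. Since $M$ is independent of the location of the maximum, the tail above $T=4\log k$ is at most $O(\sqrt{\epsilon})\cdot k\cdot\E[M^2\,\mathbb{I}[M^2>4\log k]]=O(\sqrt{\epsilon}\,\log k/k^2)$. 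This proves the proposition up to the value of the absolute constant.
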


We defer the proof of Proposition \ref{prop:5.13} to Appendix \ref{ap:prop5.13}. Given the the bounds above, we prove the bound on the expectation of the numerator.
\begin{proof}
    (of the first item in Lemma \ref{lem:all-bounds}) adding up the bounds we have 
    \begin{align*}
        &\sum_{u\to v}\E\left[w(u,v)\cdot (|h_u|+|h_v|)\cdot |h_u-z\cdot h_v|\right] \\
        &\leq \sum_{u\to v}\frac{16\log k}{k}\cdot w(u,v)\cdot \Big((\|F_u\| + \|F_v\|)(\|F_u -z\cdot F_v\|) + c_1\cdot \sqrt{\log k}\cdot (\|F_u\|^2+\|F_v\|^2)\cdot \ang(F_u,F_v)\Big)\\
        &\leq \frac{16\log k}{k}\sum_{u\to v}w(u,v)\cdot \Big((\|F_u\| + \|F_v\|)(\|F_u -z\cdot F_v\|)  + 2\sqrt{2}c_1\cdot \sqrt{\log k}\cdot \sqrt{\|F_u\|^2+\|F_v\|^2}\cdot  \|F_u- z\cdot F_v\|\Big) \\&\qquad\qquad\qquad\qquad\qquad\qquad\qquad\qquad\qquad\qquad\qquad\qquad\qquad\qquad\qquad\qquad\qquad\qquad\qquad\quad (\text{by Lemma }\ref{lem:angle-bound})\\
        &\leq \frac{16\log k}{k}\cdot (2\sqrt{2}c_1\cdot \sqrt{\log k}+1)\sum_{u\to v}w(u,v)\cdot (\|F_u\| + \|F_v\|)\cdot \|F_u -z\cdot F_v\|\\
        &\leq \frac{16(2\sqrt{2}c_1+1)}{k}\cdot (\log k)^{3/2}\cdot \sqrt{\sum_{u\to v} w(u,v)\cdot (\|F_u\|+\|F_v\|)^2}\cdot \sqrt{\sum_{u\to v} w(u,v)\cdot \|F_u-z\cdot F_v\|^2} \quad (\text{Cauchy-Schwarz})\\
        &\leq \frac{16(2\sqrt{2}c_1+1)}{k}\cdot (\log k)^{3/2}\cdot \sqrt{\sum_{u\to v} 2w(u,v)\cdot (\|F_u\|^2 + \|F_v\|^2)}\cdot \sqrt{\lambda_k(L_z)\cdot \sum_v d_v\|F_v\|^2} \qquad (\text{by Proposition }\ref{fact:embedding})\\
        &\leq \frac{16(2\sqrt{2}c_1+1)}{k}\cdot(\log k)^{3/2}\cdot  \sqrt{2}\cdot \sqrt{\lambda_k(L_z)} \cdot \sqrt{\sum_{v} d_v\|F_v\|^2}\cdot \sqrt{\sum_v d_v\|F_v\|^2} \\
        &\leq 16(4c_1+\sqrt{2})\cdot (\log k)^{3/2}\cdot \sqrt{\lambda_k(L_z)} \qquad\qquad\qquad\qquad\qquad\qquad\qquad\qquad\qquad\qquad\qquad\quad (\text{by Proposition }\ref{fact:embedding})
    \end{align*}
\end{proof}

\subsection{Proof of Proposition \ref{prop:condition-one}} \label{pf:prop-defer}
We now prove Proposition \ref{prop:condition-one}. For the second item in Proposition \ref{prop:condition-one}, we will need the following theorem, which upper bounds the probability that  the largest magnitudes of two sequences of coordinate-wise correlated complex Gaussian random variables occur at different indices. As mentioned in the introduction, this is a complex analog of Theorem 4.1 in \cite{charikar2006near}. We present its proof in Appendix \ref{appendix}.
\begin{theorem}\label{fact:correlation} 
    Let $(X_1,Y_1), \ldots, (X_k, Y_k)$ be $k$ independent pairs of random variables, where $(X_i, Y_i)$ are $\rho_i$-correlated complex Gaussians.  If the average absolute covariance of $\{X_i\}$ and $\{Y_i\}$ is at least $1-\epsilon$, that is,

    \[\frac{\sum_i |\rho_i|}{k} \geq 1-\epsilon\]

    then 
    
    \[\pr[\arg\max_i |X_i| \neq \arg\max_i |Y_i|] \leq c_1 \sqrt{\epsilon\log k}\]
    for some absolute constant $c_1>0$.
\end{theorem}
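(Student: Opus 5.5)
First I reduce to a clean coupling. Since $(X_j,Y_j)$ are $\rho_j$-correlated, write $Y_j = \rho_j^* X_j + \sqrt{1-|\rho_j|^2}\,Z_j$ with $Z_j\sim\cC\gau(0,1)$ independent. Multiplying $Y_j$ by the unit phase $\rho_j/|\rho_j|$ leaves $|Y_j|$ unchanged. So I may assume every $\rho_j=r_j\in[0,1]$ is real. I also replace $Y_j$ by $Y_j' = X_j+\sqrt{1-r_j}\,W_j$-type perturbations; more precisely I will work with $|Y_j|$ directly.

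Set $\epsilon_j = 1-r_j$, so $\sum_j \epsilon_j\le \epsilon k$. Then $|Y_j - X_j| \le (1-r_j)|X_j| + \sqrt{1-r_j^2}\,|Z_j| \le \epsilon_j|X_j| + \sqrt{2\epsilon_j}\,|Z_j|$. The magnitudes therefore satisfy $\big||Y_j|-|X_j|\big|\le \Delta_j$ with $\Delta_j:=\epsilon_j|X_j|+\sqrt{2\epsilon_j}|Z_j|$.

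The argmax can differ only if the gap between the top magnitude $M_1=\max_j|X_j|$ and the second $M_2$ is small relative to the perturbations. Concretely, if $i^*=\arg\max|X_j|$, a change of argmax requires some $j\ne i^*$ with $|X_{i^*}|-|X_j|\le \Delta_{i^*}+\Delta_j$.

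Following \cite{charikar2006near}, I bound the probability by a union over $j$ with a threshold $\delta$, splitting into two events. The first is $\{M_1-M_2\le\delta\}$. The second is $\{\Delta_{i^*}+\Delta_j>\delta$ for some relevant $j\}$. The difficulty is that a naive union over all $j$ loses a factor of $k$. The remedy, as in the real case, is to condition on the argmax structure and use exchangeability.

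The key estimate is: for fixed $j$ and small $t$,
\[\pr\big[j\ne i^*,\ |X_{i^*}|-|X_j|\le t\big]\lesssim t\sqrt{\log k}/k,\]
with the dependence on $t$ linear. The reasoning is that $|X_j|^2$ is exponential, so the density of $|X_j|$ near the level $M\approx\sqrt{\log k}$ is about $2M e^{-M^2}\approx \sqrt{\log k}/k$. I will prove this by conditioning on $\{|X_l|\}_{l\neq j}$ and integrating the Rayleigh density of $|X_j|$ over an interval of length $t$ just below the maximum of the others. The density $2se^{-s^2}$ evaluated at $s=\max_{l\ne j}|X_l|$ has expectation $O(\sqrt{\log k}/k)$, which follows from Proposition \ref{prop:gaussian-fact}-style tail computations for the max of $k-1$ Rayleigh variables. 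A symmetric bound handles $j$ overtaking from above.

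Next I take $t=\Delta_{i^*}+\Delta_j$, which is random and correlated with the $X$'s. I handle this by conditioning on the $Z$'s and using $\epsilon_j|X_j|\le\epsilon_j M_1$, where $\E M_1=O(\sqrt{\log k})$. The $\sqrt{2\epsilon_j}|Z_j|$ term is independent of $X$, so its contribution is at most about $\sqrt{\epsilon_j}\sqrt{\log k}/k$ after taking expectations. Summing over $j$ gives $\sqrt{\log k}\cdot\frac1k\sum_j\sqrt{\epsilon_j}\le\sqrt{\log k}\sqrt{\epsilon}$ by Cauchy–Schwarz, using $\sum\epsilon_j\le\epsilon k$. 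The $\Delta_{i^*}$ term needs care because $i^*$ is random. I write it as $\sum_i \I[i=i^*]\Delta_i$ and note that the linear-in-$t$ estimate with a per-$i$ $\sqrt{\epsilon_i}$ yields the same average, because the $1/k$ factor per index comes from $\pr[i=i^*]$. The terms $\epsilon_j M_1$ are of lower order, giving $\epsilon\log k\le\sqrt{\epsilon\log k}$ whenever the bound is nontrivial, i.e.\ when $\epsilon\log k\le1$.

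\textbf{Main obstacle.} The hardest step is the linear-in-$t$ gap estimate with the correct $\sqrt{\log k}/k$ factor when $t$ is random and depends on $X_{i^*}$. To avoid the dependence I will first truncate on the event $M_1\le C\sqrt{\log k}$, whose complement has probability $k^{-\Omega(1)}$ and is negligible. On the truncated event I bound the Rayleigh density uniformly in the relevant window, which makes the conditioning argument rigorous.
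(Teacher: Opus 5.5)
Your skeleton is the paper's (and Charikar--Makarychev--Makarychev's): the bound $\bigl||Y_j|-|X_j|\bigr|\le\epsilon_j|X_j|+\sqrt{2\epsilon_j}|Z_j|$ is Lemma~\ref{lem:C.2}; conditioning on the $X$-argmax and taking a union over competitors $j$ is Lemmas~\ref{lem:APtwovar}--\ref{lem:APfinal}; and the $\sqrt{\log k}/k$ Rayleigh density at the level of the maximum is the integral computed in Lemma~\ref{lem:APtwovar}. However, two steps fail as written.

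The first failing step is the truncation. You cannot discard $\{M_1>C\sqrt{\log k}\}$ at cost $k^{-\Omega(1)}$, because that cost does not depend on $\epsilon$. The target bound $c_1\sqrt{\epsilon\log k}$, by contrast, must go to $0$ as $\epsilon\to0$ with $k$ fixed. This is exactly how the theorem is used in Proposition~\ref{prop:condition-one}, where the distance between $\bar F_u$ and the rotations of $\bar F_v$ can be arbitrarily small. You must bound the flip probability \emph{on} the tail event rather than drop it. The paper shows that, given $E_t$, this probability is still $O(\sqrt{\epsilon})$ for large $t$, and multiplies by $\Pr[|X_1|\ge 2\sqrt{\log k}\mid E]\le 1/k$ (Lemmas~\ref{lem:APdifference-condition} and \ref{lem:APtail}). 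An $\epsilon$-dependent truncation level would also work.

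The second, more serious, gap is the random window. Your linear-in-$t$ estimate holds only for $t=O(1/\sqrt{\log k})$, because $e^{-(M'-t)^2}-e^{-M'^2}=e^{-M'^2}\bigl(e^{2M't-t^2}-1\bigr)$. The window $t=\Delta_{i^*}+\Delta_j$ is unbounded, so no uniform density bound on it is available, even on $\{M_1\le C\sqrt{\log k}\}$. What replaces it is an exponential-moment estimate of the form $\E\bigl[e^{2M(\epsilon_i M+\sqrt{2\epsilon_i}|Z_i|)}\bigr]-1=O(M\sqrt{\epsilon_i})$, which is the Taylor-expansion claim inside Lemma~\ref{lem:APdifference-condition}. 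That estimate needs $\epsilon_i M^2=O(1)$, i.e.\ $\epsilon_i=O(1/\log k)$ for \emph{every} index, and the average hypothesis $\epsilon\log k\le 1$ does not give this.

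The failure is real. Take $\epsilon_{i_0}=1$ for one index and $\epsilon_j=0$ otherwise, so $\epsilon=1/k$. Then $\Delta_{i_0}\ge|X_{i_0}|$, so each event $\{i^*=i_0,\ |X_{i_0}|-|X_j|\le\Delta_{i_0}+\Delta_j\}$ equals $\{i^*=i_0\}$. Your union over $j$ therefore totals about $(k-1)/k$, not $O(\sqrt{\log k/k})$.

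The missing step is the paper's preliminary removal of the indices with $\epsilon_i\ge 1/(4\log k)$. Since $\arg\max_i|X_i|$ and $\arg\max_i|Y_i|$ are each uniform on $[k]$, the event that either lands on a removed index has probability at most $2(k-|I|)/k\le 8\epsilon\log k=O(\sqrt{\epsilon\log k})$. Only on the remaining indices is your pairwise estimate, with the exponential-moment bound, valid.
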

\begin{proof}
    (Proof of Proposition \ref{prop:condition-one}.) For the first item, we have
    \begin{align*}
        \E\left[|\ip{f}{g}|^2 \cdot  \mathbb{I}[h_v \neq 0]\right] &\leq \E\left[\max_{l\in [k]} |\ip{f}{g_l}|^2 \cdot   \mathbb{I}\left[|\ip{F_v}{g_i}|^2> |\ip{F_v}{g_j}|^2, \forall j\neq i \right]\right] \\
         &\leq \frac{1}{k} \cdot \E\left[ \max_{l\in [k]} |\ip{f} {g_l}|^2\right] && (*)\\
         &= \frac{16}{k}\cdot \|f\|^2 \cdot \log k && (\text{by Proposition } \ref{prop:gaussian-fact})
    \end{align*}
    where the step of $(*)$ is by symmetry: given $\max_{l\in[k]} |\ip{f}{g_l}|^2$, $\arg \max_{j\in[k]} |\ip{F_v}{g_j}|^2$ is uniformly distributed in $[k]$.
    \begin{align*}
        \E\left[|\ip{f}{g}|^2 \cdot  \mathbb{I}[h_v = 0]\right] &\leq \E\left[\max_{l\in [k]} |\ip{f}{g_l}|^2 \cdot   \mathbb{I}\left[|\ip{F_v}{g_i}|^2 \leq |\ip{F_v}{g_j}|^2, \forall j\in [k]\backslash \{i\} \right]\right] \\
         &\leq \left(1-\frac{1}{k}\right) \cdot \E\left[ \max_{l\in [k]} |\ip{f}{g_l}|^2\right] \\
         &= 16\cdot \left(1-\frac{1}{k}\right)\cdot \|f\|^2 \cdot \log k && (\text{by Proposition } \ref{prop:gaussian-fact})
    \end{align*}

    Now we prove the second item. For $i\in [k]$, let $X_i = \ip{\bar{F}_u}{g_i}$ and $Y_i = \ip{\bar{F}_v}{g_i}$ so that $\rho_i = \E_{g_i}\left[X_iY_i^*\right]=\ip{\bar{F}_u}{\bar{F}_v} $. 
    We have 
    \begin{align*}
        |\rho_i| &= |\ip{\bar{F}_u}{\bar{F}_v}|\\
        &= \max_{|z| =1} \re(z\cdot  \ip{\bar{F}_u}{\bar{F}_v})\\
        &\geq \max_{w\in\{1,z,\ldots, z^{p-1}\}} \re(w\cdot  \ip{\bar{F}_u}{\bar{F}_v}) \\
        &= \max_{w\in\{1,z,\ldots, z^{p-1}\}} \left(1-  \frac{1}{2}\|\bar{F}_u - w\cdot \bar{F}_v\|^2\right) && (*)\\
        &=1-\min_{w\in\{1,z,\ldots, z^{p-1}\}}  \frac{1}{2}\|\bar{F_u} - w\cdot \bar{F}_v\|^2
    \end{align*}
    where $(*)$ holds because $\|\bar{F}_u - w\cdot \bar{F}_v\|^2 = \|\bar{F}_u\|^2 + \|\bar{F}_v\|^2 -w\ip{\bar{F}_u}{\bar{F}_v} - w^*\ip{\bar{F}_v}{\bar{F}_u}$, thus 

    \[\frac{\sum_i |\rho_i|}{k}\geq 1-\min_{w\in\{1,z,\ldots, z^{p-1}\}}  \frac{1}{2}\|\bar{F_u} - w\cdot \bar{F}_v\|^2\]
    Thus by Theorem \ref{fact:correlation},
    \begin{align*}
        \pr[h_u\neq 0, h_v= 0] &= \frac{1}{k}\sum_{j \in [k]} \pr[h_u^{(j)}\neq 0, h_v^{(j)} = 0] && (\text{by symmetry})\\
        &=  \frac{1}{k}\cdot \pr[\arg\max_j |X_j| \neq \arg\max_i |Y_j|]\\
        &\leq c_1 \frac{\sqrt{\log k}}{k} \cdot \min_{w\in\{1,z,\ldots, z^{p-1}\}} \|\bar{F_u} - w\cdot \bar{F}_v\|^2
    \end{align*}
    where the first equality holds because the $g_i$'s are i.i.d., thus by symmetry for each $g_j$, the probability $ \pr[h_u^{(j)}\neq 0, h_v^{(j)}= 0] $ is equal.
\end{proof}

\section{Acknowledgments}

We used GPT-5.5 Pro to assist with calculations in the proof of Proposition \ref{prop:5.13}, and with proof verification. We also used GPT-5.5 to improve the writing. 

Near the completion of this work, Jiyu learned of the passing of his Wai Po, his maternal grandmother. He would like to dedicate this work to her memory and thank her for the beautiful childhood memories she gave him.

\printbibliography

\appendix

\section{Cheeger-type rounding algorithm for periodicity}\label{sec:cheeger}

We present and analyze a Cheeger-type rounding algorithm for finding a complex vector $y\in \{0, z^0, z^1,\ldots, z^{p-1}\}^n$ with small periodicity ratio in a graph using the first eigenvector of its rotated Laplacian matrix. This algorithm is due to the work of Lange et al. \cite{lange2015frustration}, and can be viewed as a generalization of Trevisan's algorithm \cite{Trevisan2009MaxCut}, which is in turn a variant of Fiedler's algorithm \cite{fiedler1973algebraic}.

\begin{algorithm}[H]
\caption{Cheeger-type rounding for periodicity}
\label{algo:cheeger}
\begin{algorithmic} 
\State Input: graph $G=(V,E)$, a vector $x\in \C^n$, an integer $p\in \N$, $z = e^{2\pi i/ p}$.

\begin{enumerate}
    \item sort the value $x_v$ according to their square magnitude $|x_v|^2$
    \item for each $j\in [n]$, 
    \begin{enumerate}
        \item divide $V$ into $S_j:= \{v\in V\mid |x_v|^2\leq |x_j|^2\}$ and $V-S_j$. 
        \item sort $x_v = e^{i\alpha_v}\cdot |x_v|$ for $v\in V-S_j$ according to the angle $\alpha_v$ of $x_v$. 
        \item for each $u\in V-S_j$, divide the unit disk into angle intervals $[\alpha_u + 2\pi l/p, \alpha_u + 2\pi (l+1)/p) $ for $\ell \in \{0,\ldots, p-1\}$,
        set \[y^{(j, u)}_v = \begin{cases}
            z^l & \text{if } v\in V-S_j, \alpha_v \in [\alpha_u + 2\pi l/p, \alpha_u + 2\pi (l+1)/p)\\
            0 & \text{if } v\in S_j
        \end{cases}\]
        \item select $y^{(j)} \in \{y^{(j,u)}\}_{ u\in V-S_j}$ such that $\cP_p(y^{(j)})$ is minimized.
    \end{enumerate}
    \item output $y \in \{y^{(j)}\}_{j\in [n]}$ such that $\cP_p(y)$ is minimized.
\end{enumerate}
\end{algorithmic}
\end{algorithm}

The performance of the above algorithm can be analyzed using the following Lemma:
\begin{lemma}\label{lem:rounding}
    Let $x\in \C^n - \mathbf{0}$, such that $x_v = |x_v|\cdot e^{i\alpha_v}$. Consider the following randomized procedure for rounding $x$ into $y\in\{0,1,z^1, \ldots, z^{p-1}\}^n$:
    \begin{itemize}
        \item normalize $x$ so that $\max_i |x_i|^2 = 1$. 
        \item pick a uniformly random threshold $t^2$  from $[0,1]$, and a uniformly random $\alpha \in [0,2\pi)$
        \item set $y= \textnormal{Round}_{t,\alpha}(x_v)$ for all $v\in V$, where for all $s = |s|\cdot e^{i\theta}\in \C, |s|\leq 1$, we define 
        \[\textnormal{Round}_{t,\alpha}(s) = \begin{cases}
            z^l & \text{if } |s|^2\geq t^2, \theta \in [\alpha + 2\pi l/p, \alpha + 2\pi (l+1)/p)\\
            0 & \text{if }|s|^2< t^2
        \end{cases}\]
    \end{itemize}
    then 
    \[\frac{\E \left[\sum_{u\to v} w(u,v)\cdot |y_u-z\cdot y_z|\right]}{\E\left[\sum_vd_v\cdot |y_v|\right]} \leq 2\cdot \frac{\sum_{u\to v} w(u,v)\cdot |x_u-z\cdot x_v|\cdot (|x_u| + |x_v|)}{\sum_v d_v|x_v|^2} \]
    furthermore, the RHS can be upper bounded by $2\sqrt{2R_{L_z}(D^{1/2}x)}$.
\end{lemma}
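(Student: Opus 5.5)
The plan is to compute the two expectations separately. The denominator is immediate, and for the numerator I will prove a per-edge bound.

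\textbf{Denominator.} After the normalization $\max_v|x_v|^2=1$, we have $|y_v|=\I[t^2\le |x_v|^2]$. Since $t^2$ is uniform on $[0,1]$,
\[\E|y_v| = |x_v|^2,\qquad\text{so}\qquad \E\Big[\sum_v d_v|y_v|\Big]=\sum_v d_v|x_v|^2 .\]
It therefore suffices to show, for every edge $u\to v$,
\[\E|y_u-z\,y_v|\le 2\,|x_u-z x_v|\,(|x_u|+|x_v|).\]

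\textbf{Reducing to two complex numbers.} The first step is to absorb the rotation $z$ into the rounding map. If $s$ has angle $\theta\in[\alpha+2\pi l/p,\alpha+2\pi(l+1)/p)$, then $zs$ has angle in the next sector, and $|zs|=|s|$. Hence, for every fixed $(t,\alpha)$,
\[z\cdot\mathrm{Round}_{t,\alpha}(s)=\mathrm{Round}_{t,\alpha}(zs),\]
with indices taken mod $p$. So with $a=x_u$ and $b=zx_v$ (note $|b|=|x_v|$ and $|a-b|=|x_u-zx_v|$), the task reduces to bounding
\[\E_{t,\alpha}\,|\mathrm{Round}_{t,\alpha}(a)-\mathrm{Round}_{t,\alpha}(b)|\le 2|a-b|(|a|+|b|)\]
for two complex numbers of modulus at most $1$.

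\textbf{Case split on the threshold.} Assume without loss of generality $|b|\le|a|$.

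In the first case, exactly one of $a,b$ survives the threshold. This happens with probability $|a|^2-|b|^2$, and the cost is then $1$. This contributes at most $(|a|-|b|)(|a|+|b|)\le|a-b|(|a|+|b|)$.

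In the second case, both survive, which happens with probability $|b|^2$. Let $\varphi\in[0,\pi]$ be the angle between $a$ and $b$. Over the uniform $\alpha$:
\begin{itemize}
\item if $\varphi<2\pi/p$, the two points fall in distinct (adjacent) sectors with probability $p\varphi/(2\pi)$, at cost $|1-z|=2\sin(\pi/p)\le 2\pi/p$;
\item in general the expected cost is at most about $\varphi$, and for large $\varphi$ I will use the crude bound $2$.
\end{itemize}
This gives a contribution of at most $|b|^2\cdot C\varphi$. Then I use $\varphi\le\frac{\pi}{2}\,|\bar a-\bar b|$ together with the elementary inequality $|b|\,|\bar a-\bar b|\le 2|a-b|$ (valid for $|b|\le|a|$, by the triangle inequality through $|b|\bar a$). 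This bounds the contribution by $O(|a-b|\,|b|)\le O(|a-b|(|a|+|b|))$.

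\textbf{Main obstacle.} The hard part is getting the constant to be exactly $2$ rather than some $O(1)$. To do this I would bound the sector-crossing cost more sharply, using $\E_\alpha|\,\cdot\,|\le\varphi$ and $|b|\varphi\le\frac{\pi}{2}\cdot$ (the perpendicular component of $a-b$). I would then combine the two cases by orthogonally decomposing $a-b$ into a radial part, which is paid for by the first case, and an angular part, which is paid for by the second case.

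\textbf{Final bound.} For the "furthermore" statement, apply Cauchy--Schwarz:
\[\sum_{u\to v} w(u,v)|x_u-zx_v|(|x_u|+|x_v|)\le\sqrt{\sum_{u\to v} w(u,v)|x_u-zx_v|^2}\;\sqrt{\sum_{u\to v} 2w(u,v)(|x_u|^2+|x_v|^2)}.\]
Since every edge contributes to the degree of both endpoints, $\sum_{u\to v} w(u,v)(|x_u|^2+|x_v|^2)=\sum_v d_v|x_v|^2$. Dividing by $\sum_v d_v|x_v|^2$ and using the Rayleigh quotient formula from Section~\ref{sec:prelim} gives the bound $2\sqrt{2R_{L_z}(D^{1/2}x)}$.
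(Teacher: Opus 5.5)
Your skeleton matches the paper's proof. The shared steps are the denominator identity $\E|y_v|=|x_v|^2$, the equivariance $z\cdot\mathrm{Round}_{t,\alpha}(s)=\mathrm{Round}_{t,\alpha}(zs)$, and Cauchy--Schwarz for the ``furthermore'' part. The equivariance reduces the numerator to the two-point inequality $\E|\mathrm{Round}_{t,\alpha}(a)-\mathrm{Round}_{t,\alpha}(b)|\le 2|a-b|(|a|+|b|)$. The paper does not prove this inequality; it imports it as Lemma 4.2 of \cite{lange2015frustration}.

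You try to prove it, and that is where the gap is. The bounds you state are $\E_\alpha|\cdot|\le\varphi$, $\varphi\le\frac{\pi}{2}|\bar a-\bar b|$ and $|b|\,|\bar a-\bar b|\le 2|a-b|$. Together they give $|a-b|(|a|+|b|)+\pi|b|\,|a-b|$. At $|a|=|b|$ this is $(1+\frac{\pi}{2})|a-b|(|a|+|b|)$, not $2|a-b|(|a|+|b|)$.

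The refinement you sketch does not repair this as stated. The inequality $|b|\varphi\le\frac{\pi}{2}\cdot(\text{perpendicular part of }a-b)$ fails for obtuse angles. For example, with $a=1$ and $b=-r$, the perpendicular part is $0$ but $|b|\varphi=\pi r$. So you would still need a separate treatment of large $\varphi$ and a careful recombination of the two cases.

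The repair is a one-line sharpening; you need neither an orthogonal decomposition nor a case split on $\varphi$. The lossy step is the factor $2$ in $|b|\,|\bar a-\bar b|\le 2|a-b|$. For $|b|\le|a|$ one actually has $|b|\,|\bar a-\bar b|\le|a-b|$. This holds because $s\mapsto|s\bar a-b|^2=s^2-2s|b|\cos\varphi+|b|^2$ is nondecreasing for $s\ge|b|\cos\varphi$, in particular on $[|b|,|a|]$.

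You also need $\E_\alpha|\cdot|\le\varphi$ for every $\varphi\in[0,\pi]$, not only $\varphi<2\pi/p$. Let $N$ be the number of sector boundaries on the arc between the two points. Then $N$ has mean $p\varphi/(2\pi)$, and the cost is $|1-z^N|\le 2\pi N/p$. Hence
\[
\E\bigl|\mathrm{Round}_{t,\alpha}(a)-\mathrm{Round}_{t,\alpha}(b)\bigr|\le(|a|^2-|b|^2)+|b|^2\varphi\le|a-b|(|a|+|b|)+\tfrac{\pi}{2}|b|\,|a-b|\le\Bigl(1+\tfrac{\pi}{4}\Bigr)|a-b|(|a|+|b|),
\]
using $|b|\le\frac12(|a|+|b|)$ in the last step. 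With this patch your proof is complete and self-contained, whereas the paper's proof depends on the external lemma.
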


\begin{proof}
    For the numerator, by linearity of expectation we will focus on one edge $u\to v$. By normalization we have $|x_v|^2, |x_u|^2\leq 1$.
    We use the following lemma due to Lange et al. without proof.
    \begin{lemma}\label{lem:lange-threshold}
        (see \cite{lange2015frustration}, Lemma 4.2.) For any two points $ s_1,s_2\in \C$ with $|s_1|,|s_2|\leq 1$, we have
        \[\E \left[|\textnormal{Round}_{t,\alpha}(s_1)- \textnormal{Round}_{t,\alpha}(s_2)|\right] \leq 2\cdot |s_1 - s_2|\cdot (|s_1| +|s_2|)\]
    \end{lemma}
    Therefore, taking $s_1 = x_u$ and $s_2 = z\cdot x_v$, and note that $\textnormal{Round}_{t,\alpha}(s_1)= y_u$ and $\textnormal{Round}_{t,\alpha}(s_2)= z\cdot \textnormal{Round}_{t,\alpha}(x_v) = z\cdot y_v$, by Lemma \ref{lem:lange-threshold} we have 
    \[\E \left[|y_u- z\cdot y_v|\right] \leq 2\cdot |x_u - z\cdot x_v|\cdot (|x_u| +|z\cdot x_v|) = 2\cdot |x_u - z\cdot x_v|\cdot (|x_u| +| x_v|)\]

    and by linearity of expectation, we have 
    \[\E\left[\sum_{u\to v} w(u,v)\cdot |y_u-z\cdot y_v|\right] \leq 2\cdot \sum_{u\to v} w(u,v)\cdot (|x_u| + |x_v|)\cdot |x_u - z\cdot x_v|\]

    For the denominator, we have 

    \begin{equation*}\tag{*}
        \E \left[\sum_v d_v\cdot |y_v|\right] = \sum_{v} d_v\cdot  \pr_{t^2\sim [0,1]}\left[|x_v|^2\geq t^2\right]  = \sum_v d_v|x_v|^2
    \end{equation*}

    Combining both bounds, we obtain the first upper bound. For the furthermore part, we further upper bound the expectation of the numerator by  
    \begin{align*}
        &2\cdot \sum_{u\to v} w(u,v)\cdot (|x_u| + |x_v|)\cdot |x_u - z\cdot x_v| \\
        &\leq 2\  \sqrt{\sum_{u\to v}w(u,v)\cdot (|x_u| + |x_v|)^2} \cdot \sqrt{\sum_{u\to v} w(u,v)\cdot |x_u - z\cdot x_v|^2} \quad  (\text{Cauchy-Schwarz})\\
        &\leq 2 \sqrt{2\cdot \sum_{u\to v}w(u,v)\cdot (|x_u|^2 + |x_v|^2)}\cdot \sqrt{\sum_{u\to v} w(u,v)\cdot |x_u - z\cdot x_v|^2}\\
        &\leq 2 \sqrt{2 \cdot \sum_v d_v|x_v|^2}\cdot  \sqrt{\sum_{u\to v} w(u,v)\cdot |x_u - z\cdot x_v|^2}\\
        &=2 \sqrt{2 \cdot \sum_v d_v|x_v|^2}\cdot  \sqrt{\sum_v d_v|x_v|^2 \cdot R_{L_z}(D^{1/2} x)}\\
        &= 2\sqrt{2\cdot R_{L_z}(D^{1/2}x)} \cdot \sum_v d_v|x_v|^2
    \end{align*}
    Combining the above with the expectation of the denominator (*) concludes the proof.

\end{proof}

We now apply Lemma \ref{lem:rounding} to prove Lemma \ref{lem:lange}

\begin{lemma}
    (Restatement of Lemma \ref{lem:lange}) For a digraph $G=(V,E)$, $p\in \N, z = e^{2\pi i/ p}$, given a complex vector $x\in \C^n$, 
    there exists a polynomial-time algorithm (Algorithm \ref{algo:cheeger}) that rounds $x$ into a vector $y\in \{0,z^0,z^1,\ldots, z^{p-1}\}^n$ with
    \[\cP_p(y) \leq 2\cdot \frac{\sum_{u\to v} w(u,v)\cdot |x_u-z\cdot x_v|\cdot (|x_u| + |x_v|)}{\sum_v d_v|x_v|^2} \]
    Moreover, $\supp(y)\subseteq \supp(x)$.
\end{lemma}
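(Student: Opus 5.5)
Lemma \ref{lem:rounding} bounds a ratio of expectations, so I would turn it into a deterministic statement by an averaging argument, and then check that Algorithm \ref{algo:cheeger} searches over every outcome of the randomized rounding. If $x=\mathbf{0}$ there is nothing to prove; otherwise assume $x\neq\mathbf{0}$ and normalize so that $\max_v|x_v|^2=1$. Both the right-hand side of the claimed bound and the output of Algorithm \ref{algo:cheeger} are invariant under positive rescaling of $x$, so this costs nothing.

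\textbf{Step 1 (averaging).} Write $N(y)=\sum_{u\to v}w(u,v)|y_u-z y_v|$ and $M(y)=\sum_v d_v|y_v|$ for the random vector $y=\mathrm{Round}_{t,\alpha}(x)$. Let $\beta$ denote the right-hand side of Lemma \ref{lem:rounding}. That lemma gives $\E[N(y)-\beta M(y)]\le 0$, so some realization $(t,\alpha)$ satisfies $N(y)\le\beta M(y)$. To divide by $M(y)$ I need $y\neq\mathbf{0}$. I handle this by restricting to thresholds with $t^2\le 1$, which is the full range. Since $\max_v|x_v|^2=1$, every such $t$ gives $y_v\neq 0$ for the maximizing $v$, except possibly on the measure-zero event $t^2=1$. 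Hence $M(y)>0$ almost surely, and there is a realization with $y\neq \mathbf{0}$ and $\cP_p(y)=N(y)/M(y)\le\beta$.

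\textbf{Step 2 (the algorithm covers every realization).} I need every possible output $\mathrm{Round}_{t,\alpha}(x)$ with $y\neq\mathbf 0$ to equal one of the candidates $y^{(j,u)}$, up to multiplication by a global phase $z^{m}$. A global phase does not change $\cP_p$, because $|z^m y_u - z\cdot z^m y_v|=|y_u-zy_v|$ and $|z^m y_v|=|y_v|$.

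The threshold part works as follows. The support $\{v:|x_v|^2\ge t^2\}$ is always a superlevel set of $|x_v|^2$, and each such set equals $V-S_j$ for a suitable $j$, or all of $V$ when $t$ is below every $|x_v|$. I will need to adjust the indexing, since $S_j$ as written uses $\le$. I would take the intended family to consist of all sets of the form $\{v : |x_v|^2\ge |x_j|^2\}$, and note this explicitly.

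The angle part works as follows. For a fixed support, the partition of the circle into $p$ arcs of length $2\pi/p$ starting at $\alpha$ assigns labels to the vertices in the support. Rotate $\alpha$ forward until it hits the angle $\alpha_u$ of some supported vertex $u$ (modulo $2\pi/p$). During this rotation no supported vertex changes arc, so the labelling is unchanged. Once $\alpha\equiv\alpha_u \pmod{2\pi/p}$, the labelling equals $y^{(j,u)}$ up to a cyclic shift of labels, that is, up to a global phase.

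Since the algorithm takes the minimum of $\cP_p$ over all candidates, its output satisfies $\cP_p(y)\le\beta$.

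\textbf{Step 3 (support and running time).} Every candidate vanishes outside the chosen superlevel set, and each superlevel set with a positive threshold lies inside $\supp(x)$. So it suffices to use only thresholds with $|x_j|>0$. The realization from Step 1 can be taken with $t>0$, because $t=0$ has measure zero. This gives $\supp(y)\subseteq\supp(x)$. The algorithm considers at most $n^2$ candidates, and each can be evaluated in $O(|E|+n)$ time, so it runs in polynomial time.

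The main obstacle I expect is the bookkeeping in Step 2. I must check carefully that the half-open arc boundaries and the cyclic label shift match the definition of $y^{(j,u)}$ exactly, and that the off-by-one in how $S_j$ is defined does not exclude the realization produced in Step 1. Compared with this, the averaging step is routine.
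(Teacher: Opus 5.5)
Your proposal is correct and follows the same route as the paper's proof. An averaging argument turns the ratio-of-expectations bound of Lemma \ref{lem:rounding} into a single good realization $(t,\alpha)$, one observes that Algorithm \ref{algo:cheeger} enumerates every threshold/angle level set (up to a global phase $z^m$, which leaves $\cP_p$ unchanged), and the support claim follows because $t>0$ almost surely. Your extra bookkeeping is substantive rather than pedantic: with $S_j=\{v:|x_v|^2\le|x_j|^2\}$ as literally written, $V-S_j\neq V$ for every $j$, so realizations with $t^2\le\min_v|x_v|^2$ are never examined (and if all $|x_v|$ are equal, no candidate is produced at all); your switch to the sets $\{v:|x_v|^2\ge|x_j|^2\}$ with $|x_j|>0$, together with your forward-rotation argument for the half-open arcs, closes a gap that the paper's one-line claim that every level set ``has been examined'' passes over.
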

\begin{proof}
    (of Lemma \ref{lem:lange}) By the fact that if $X$ and $Y$ are random variables such that $\Pr[Y>0] = 1$, then with positive probability $\frac{X}{Y} \leq \frac{\E\left[X\right]}{\E \left[Y\right]}$, and we can conclude that there exists $t^2$, $\theta$ such that the corresponding  $y\in\{0,1,z^1, \ldots, z^{p-1}\}^n$ has periodicity ratio: 
    \[\cP_p(y) \leq 2\cdot \frac{\sum_{u\to v} w(u,v)\cdot |x_u-z\cdot x_v|\cdot (|x_u| + |x_v|)}{\sum_v d_v|x_v|^2} \]
    Moreover, we claim such a $y$ must have been considered in Algorithm \ref{algo:cheeger}. This is because $y$ is one of the level sets induced by a threshold value $t^2\in[0,1]$ and a threshold angle $\alpha\in [0,2\pi)$.  There are at most $n^2$ threshold sets since there are at most $n$ vertices (hence $n$ threshold values and $n$ threshold angles). Each of these level sets has been examined in Algorithm \ref{algo:cheeger}.
    
    Finally, observe that in the rounding procedure in Lemma \ref{lem:rounding}, $x_v= 0$ implies $y=0$ almost surely, thus we have $\supp(y)\subseteq \supp(x)$.
\end{proof}

Using the same argument and the bound of $2\sqrt{2R_{L_z}(D^{1/2}x)}$ in Lemma \ref{lem:rounding}, we can recover the 
upper bound in the Cheeger-type inequality for periodicity in Theorem \ref{thm:cheeger-unified}.

\section{Missing proofs for Section  \ref{sec:prelim}, \ref{sec:analysis-main}} \label{ap:prelim}

\begin{proposition}
    (Restatement of Proposition \ref{prop:Lz-property}) For $z=e^{2\pi i/p}$ for some $ p\in \mathbb{N}$, the normalized rotated Laplacian matrix is defined as $L_z = I-D^{-1/2}A_zD^{-1/2}$ has the following properties:
    \begin{enumerate}
        \item $L_z = D^{-1/2}B_z^*B_zD^{-1/2}$
        \item $L_z$ is Hermitian, positive semidefinite, and has eigenvalues $0\leq \lambda_1(L_z) \leq \cdots \leq \lambda_n(L_z)\leq 2$.
    \end{enumerate}
\end{proposition}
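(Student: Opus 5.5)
The plan is to verify the incidence-matrix factorization entrywise, and then read off all spectral properties from it together with a Rayleigh-quotient computation.

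\textbf{Step 1: the factorization.} It suffices to show $B_z^*B_z = D - A_z$. Row $(u,v)$ of $B_z$ is the vector $b_{uv}=\sqrt{w(u,v)}\,(e_u - z^* e_v)^T$, so
\[B_z^*B_z = \sum_{(u,v)\in E} w(u,v)\,(e_u - z e_v)(e_u - z^* e_v)^T .\]
Expanding a single term gives
\[w(u,v)\bigl(e_ue_u^T + e_ve_v^T - z^* e_ue_v^T - z\, e_ve_u^T\bigr).\]
Summing over edges, the diagonal parts contribute $w(u,v)$ to vertex $u$ (an out-edge) and to vertex $v$ (an in-edge), which totals $D_{\text{out}}+D_{\text{in}}=D$. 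For the off-diagonal parts, recall that $A(v,u)=w(u,v)$. The $zA$ part of $A_z=zA+z^*A^*$ places $z\,w(u,v)$ at position $(v,u)$, and the $z^*A^*$ part places $z^*w(u,v)$ at position $(u,v)$. These match the terms $-z\,e_ve_u^T$ and $-z^* e_ue_v^T$ with the correct signs. Conjugating both sides by $D^{-1/2}$ then yields item 1. I will assume every vertex has positive degree, or else restrict to the vertices that do, so that $D^{-1/2}$ is defined.

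\textbf{Step 2: Hermitian and PSD.} A matrix of the form $M^*M$ with $M = B_zD^{-1/2}$ is Hermitian and positive semidefinite. Hence all eigenvalues are real and nonnegative.

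\textbf{Step 3: the upper bound $2$.} This is the only mildly nontrivial step. By Courant–Fischer (Theorem~\ref{thm:CF-eigen}), it suffices to bound the Rayleigh quotient: for all $x\neq 0$ I want $R_{L_z}(D^{1/2}x)\le 2$. The numerator is $\sum_{u\to v} w(u,v)|x_u - z x_v|^2$. Using $|a-zb|^2\le 2|a|^2+2|b|^2$, which follows from $|z|=1$ and the parallelogram law, the numerator is at most
\[2\sum_{u\to v} w(u,v)\bigl(|x_u|^2+|x_v|^2\bigr) = 2\sum_v d_v|x_v|^2 .\]
Here each edge's weight is counted once at its tail and once at its head, which reproduces exactly $d_v=D_{\text{in}}(v,v)+D_{\text{out}}(v,v)$. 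So $\lambda_n=\max_x R_{L_z}(D^{1/2}x)\le 2$, completing item 2.
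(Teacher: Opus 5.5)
Your proof is correct and follows essentially the same route as the paper: you verify $B_z^*B_z = D - A_z$ entrywise (via the rank-one expansion of each edge's row, where the paper uses inner products of columns), deduce that $L_z$ is Hermitian and positive semidefinite from the factorization, and bound the Rayleigh quotient by $2$ using $|x_u - z x_v|^2 \le 2(|x_u|^2+|x_v|^2)$ together with Courant--Fischer. The remaining differences are cosmetic: you invoke the parallelogram law where the paper uses the triangle inequality, and you explicitly note that $D^{-1/2}$ requires positive degrees, which the paper assumes implicitly.
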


\begin{proof}
    For the first item, we claim that $B_z^*B_z = D-(zA)^*- zA$.  Let $B_v$ denote the $v$-th column of $B_z$. If there is an edge $u\to v$, by definition we have $B_z^*B_z(v,u) = \ip{B_v}{B_u} = -z\cdot w(u,v)$, and that $B_z^*B_z(u,v) = \ip{B_u}{B_v} = (-z^*)\cdot w(u,v)$. As for the diagonal entries, this edge contributes to $(-z)(-z^*)\cdot w(u,v) = w(u,v) $ for the in-degree of $v$, and $w(u,v)$ for the out-degree of $u$. Adding up all the edges, we get $B_z^*B_z(v,v) = D_{\text{in}}(v,v) + D_{\text{out}}(v,v)$.  

    For the second item, it is straightforward calculation to see that $L_z^* = L_z$. To see that $L_z$ is positive semidefinite, note that by the first item,
    \[\min_{x\in \C} x^*L_zx  = \min_{y = D^{-1/2}x \in \C} y^*B_z^*B_zy = \sum_{u\to v} w(u,v)\cdot |y_u - z\cdot y_v|^2 \geq 0\]
    To upper bound the eigenvalues, we use the Courant-Fischer theorem (Theorem \ref{thm:CF-eigen}). The Rayleigh quotient of any $x\in \C^n$ w.r.t $L_z$ is bounded by
    \begin{align*}
        \frac{\sum_{u\sim v} w(u,v) \cdot |x_u- z\cdot x_v|^2}{\sum_v d_v |x_v|^2} &\leq \frac{\sum_{u\sim v} w(u,v) \cdot |x_u- z\cdot x_v|^2}{\sum_v d_v |x_v|^2}\\
        &\leq \frac{\sum_{u\sim v} w(u,v)\cdot  ||x_u|+ |x_v||^2}{\sum_v d_v |x_v|^2}\\
        &\leq \frac{\sum_{u\sim v} w(u,v) 2(|x_u|^2 + |x_v|^2)}{\sum_v d_v |x_v|^2}\\
        &\leq \frac{2\sum_{v}d_v|x_v|^2}{\sum_v d_v |x_v|^2}\\
        &\leq 2
    \end{align*}
\end{proof}

\begin{proposition}
    (Restatement of Proposition \ref{fact:embedding})
    \begin{enumerate}
        \item \[R_{L_z}(D^{1/2}F) \leq \lambda_k(L_z)\]
        \item \[\sum_v d_v\|F_v\|^2 = k\]
        \item \[\sum_{u,v}d_ud_v |\ip{F_u}{F_v}|^2 = k\]
    \end{enumerate}
\end{proposition}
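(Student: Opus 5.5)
The plan is to rewrite all three items in terms of the matrix $U = D^{1/2}F = [u^{(1)},\ldots,u^{(k)}]$. Its columns are the orthonormal eigenvectors of $L_z$ for $\lambda_1(L_z),\ldots,\lambda_k(L_z)$, so $U^*U = I_k$. The $v$-th row of $U$ is $\sqrt{d_v}\,F_v$.

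\emph{Item 2.} I will expand $\sum_v d_v\|F_v\|^2$ as
\[\sum_v\sum_{i}d_v|x^{(i)}_v|^2=\sum_i\|D^{1/2}x^{(i)}\|^2=\sum_i\|u^{(i)}\|^2=k.\]

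\emph{Item 1.} I will work coordinate-wise, using $\|F_u-zF_v\|^2=\sum_i|x^{(i)}_u-z x^{(i)}_v|^2$. The Rayleigh-quotient identity from Section \ref{sec:prelim} gives, for each $i$,
\[\sum_{u\to v}w(u,v)\,|x^{(i)}_u-zx^{(i)}_v|^2=(u^{(i)})^*L_z u^{(i)}=\lambda_i(L_z).\]
This uses Proposition \ref{prop:Lz-property}, namely $x^*B_z^*B_zx=\sum_{u\to v} w(u,v)|x_u-zx_v|^2$. Summing over $i$, the numerator becomes $\sum_{i\le k}\lambda_i(L_z)\le k\lambda_k(L_z)$. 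By item 2 the denominator is $k$, so the ratio is at most $\lambda_k(L_z)$.

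\emph{Item 3.} I will note that $\sqrt{d_u d_v}\,\ip{F_u}{F_v}$ is the $(u,v)$ entry of $UU^*$, up to conjugation convention, which only affects the entry by a complex conjugate and not its modulus. Hence
\[\sum_{u,v}d_ud_v|\ip{F_u}{F_v}|^2=\|UU^*\|_F^2=\mathrm{tr}(UU^*UU^*)=\mathrm{tr}(U(U^*U)U^*)=\mathrm{tr}(UU^*)=\mathrm{tr}(U^*U)=k.\]
Equivalently, $UU^*$ is the orthogonal projection onto a $k$-dimensional subspace, so its squared Frobenius norm equals its rank.

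None of the steps is a real obstacle. The points that need care are the conventions: keeping the conjugation consistent in the complex inner product, and matching the edge sum in the Rayleigh quotient with $u^{(i)*}L_zu^{(i)}$ via $x^{(i)}=D^{-1/2}u^{(i)}$.
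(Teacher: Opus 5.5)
Your proposal is correct and follows the paper's proof essentially step for step: item 2 is the same column-by-column expansion, and item 3 is the same computation $\|UU^*\|_F^2=\mathrm{tr}(UU^*UU^*)=\mathrm{tr}(U^*U)=k$ with $U=D^{1/2}F$. The only cosmetic difference is in item 1, where you compute the numerator exactly as $\sum_{i\le k}\lambda_i(L_z)$ and divide by the denominator $k$ from item 2, whereas the paper bounds the ratio of sums by $\max_i R_{L_z}(D^{1/2}x^{(i)})\le\lambda_k(L_z)$ without using the normalization.
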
 
\begin{proof}
    For the first item,
    \begin{align*}
        R_{L_z}(D^{1/2}F)&= \frac{\sum_{u\to v} w(u,v)\|F_v- z\cdot F_v\|^2}{ \sum_v d_v\|F_v\|^2} \\ 
        &= \frac{\sum_{u\to v}w(u,v)\sum_i |x_u^{(i)} - z\cdot  x_v^{(i)}|^2}{ \sum_v d_v\cdot \sum_i |x_v^{(i)}|^2} \\ 
        &= \frac{\sum_i \sum_{u\to v} w(u,v)|x_u^{(i)} -z\cdot  x_v^{{(i)}}|^2}{\sum_i \sum_v d_v|x^{(i)}_v|^2} \\ 
        &\leq \max_i \frac{\sum_{u\sim v} w(u,v)|x_u^{(i)} -z\cdot  x_v^{{(i)}}|^2}{ \sum_v d_v|x^{(i)}_v|^2} \\
        &\leq \max_i R_{L_z}(D^{1/2}x^{(i)}) \leq \lambda_k(L_z)
    \end{align*} 
    For the second item, notice that 

    \[\sum_v d_v \|F_v\|^2 = \sum_i \sum_v d_v |x^{(i)}_v|^2 = \sum_i \|D^{1/2}x^{(i)}\|^2 = k\]

    For the third item, 
    \begin{align*}
        \sum_{u,v}d_ud_v |\ip{F_u}{F_v}|^2 &= \sum_{u,v} |\ip{\sqrt{d_u}F_u}{\sqrt{d_v}{F_v}}|^2 \\ 
        &= \sum_{u,v} |D^{1/2}FF^*D^{1/2}(u,v)|^2 \\
        &= \|D^{1/2}FF^*D^{1/2}\|_{\text{Frob}}^2\\
        &=\text{tr}\left(D^{1/2}FF^*D FF^*D^{1/2}\right)\\
    \end{align*}
    by the definition of $F$, $W:=D^{1/2}F$ is the $n\times k$ matrix with the first $k$ orthonormal eigenvectors  of $L_z$ as its columns, so the above is equal to 
    \[\text{tr}\left(WW^*WW^*\right)= \text{tr}\left(WW^*\right) = \text{tr}\left(W^*W\right)=\text{tr}(I_k) = k\]
\end{proof}

\begin{lemma}
    (Restatement of Lemma \ref{lem:angle-bound}) For $z = e^{2\pi i/p}, p\in \N$,
    \[\sqrt{||F_u\|^2 + \|F_v\|^2} \cdot \ang(F_u,F_v) \leq 2\sqrt{2}\cdot \|F_u - z\cdot  F_v\|\]
\end{lemma}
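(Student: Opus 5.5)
Since $\ang(F_u,F_v)$ is a minimum over the rotations $w\in\{1,z,\ldots,z^{p-1}\}$, I will bound it above by the single choice $w=z$:
\[\ang(F_u,F_v)\le \|\bar F_u - z\bar F_v\|.\]
It then suffices to prove $\sqrt{\|F_u\|^2+\|F_v\|^2}\,\|\bar F_u - z\bar F_v\| \le 2\sqrt2\,\|F_u - zF_v\|$. This is the complex, rotated analogue of the radial projection estimate in \cite{lee2014multiway}. Since $|z|=1$, set $a=F_u$ and $b=zF_v$; then $\bar b = z\bar F_v$ and $\|b\|=\|F_v\|$. The claim reduces to the rotation-free inequality
\[\sqrt{\|a\|^2+\|b\|^2}\,\|\bar a-\bar b\|\le 2\sqrt2\,\|a-b\|\]
for arbitrary nonzero $a,b\in\C^k$. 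If one of them is zero, $\bar{\cdot}$ is undefined, so I adopt the convention that such edges contribute nothing; they can be discarded in the later sum.

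For the reduced inequality, assume without loss of generality $\|a\|\ge\|b\|$. I split into two cases.

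\textbf{Case 1: $\|a-b\|\ge \|a\|/2$.} Use the trivial bound $\|\bar a-\bar b\|\le 2$ together with $\sqrt{\|a\|^2+\|b\|^2}\le\sqrt2\|a\|$. The left side is then at most $2\sqrt2\|a\|\le 4\sqrt2\|a-b\|$. This overshoots the target constant $2\sqrt2$, so this case must be sharpened; see below.

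\textbf{Case 2: the general bound via triangle inequality.} Write
\[\|a\|(\bar a-\bar b)=(a-b)+b\Bigl(1-\tfrac{\|a\|}{\|b\|}\Bigr).\]
The second term has norm $\bigl|\|b\|-\|a\|\bigr|\le\|a-b\|$. Hence $\|a\|\,\|\bar a-\bar b\|\le 2\|a-b\|$, which holds for all $a,b$ with no case distinction. Multiplying by $\sqrt{\|a\|^2+\|b\|^2}/\|a\|\le\sqrt2$ (using $\|a\|\ge\|b\|$) gives exactly
\[\sqrt{\|a\|^2+\|b\|^2}\,\|\bar a-\bar b\|\le 2\sqrt2\,\|a-b\|.\]
So the case split is unnecessary: normalizing by the larger of the two vectors gives the constant $2\sqrt2$ directly.

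The only delicate point is choosing the larger-norm vector for the normalization; the statement is symmetric under swapping $u,v$ once the rotation is absorbed into $b=zF_v$. I will verify that $\|\bar a-\bar b\|$ is symmetric in $a,b$, so the WLOG is legitimate. The step I expect to need the most care is the reduction $\ang(F_u,F_v)\le\|\bar F_u-z\bar F_v\|$. This requires $z$ itself to lie in the set of allowed rotations $\{1,z,\ldots,z^{p-1}\}$. It does, since this is the set of all $p$-th roots of unity, so the reduction goes through.
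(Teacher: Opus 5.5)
Your proof is correct and is essentially the paper's argument: bound $\ang(F_u,F_v)$ by the choice $w=z$, use $\sqrt{\|F_u\|^2+\|F_v\|^2}\le\sqrt2\max(\|F_u\|,\|F_v\|)$, and then split off $\bigl|\|F_u\|-\|F_v\|\bigr|\le\|F_u-zF_v\|$ by the triangle inequality. Your substitution $b=zF_v$ is a small improvement, because it makes the norm-ordering WLOG genuinely symmetric; the paper assumes $\|F_v\|\le\|F_u\|$ even though the right-hand side $\|F_u-zF_v\|$ is not symmetric in $u,v$ (the argument works in either case). One aside is inaccurate: edges with a zero embedding vector cannot be discarded later, since they still contribute to the numerator, but the inequality itself still holds for them under the convention $\bar{0}:=0$.
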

\begin{proof}
    WLOG assume that $\|F_v\|\leq \|F_u\|$, and let $w$ minimize $\min_{w\in\{1,z,\ldots, z^{p-1}\}} \|\bar{F_u} - w\cdot \bar{F}_v\|$. We have 
    \begin{align*}
        \sqrt{||F_u\|^2 + \|F_v\|^2} \cdot \ang(F_u,F_v) &\leq
        \sqrt{2}\cdot \|F_u\|\cdot (\|\bar{F}_u - w\cdot \bar{F}_v\|)\\
        &\leq
        \sqrt{2}\cdot \|F_u\|\cdot (\|\bar{F}_u - z\cdot \bar{F}_v\|)\\
        &\leq \sqrt{2}\cdot \left\|F_u - \|F_u\| \cdot \frac{z\cdot F_v}{\|F_v\|}\right\| \\
        &\leq \sqrt{2}\cdot \left\|F_u - \|F_v\|\cdot \frac{z\cdot F_v}{\|F_v\|} + \|F_v\|\cdot \frac{z\cdot F_v}{\|F_v\|} -\|F_u\|\cdot \frac{z\cdot F_v}{\|F_v\|} \right\| \\
        &\leq \sqrt{2} \cdot \left(\left\|F_u -z\cdot  F_v\right\| + \left\|(\|F_u\|-\|F_v\|)\cdot  \frac{z\cdot F_v}{\|F_v\|}\right\| \right)\\
        &= \sqrt{2}\cdot \left(\|F_u -  z\cdot F_v\| + \left|\|F_u\| - \|F_v\|\right|\right) \\
        &\leq 2\sqrt{2}\cdot \|F_u- z\cdot F_v\|
    \end{align*}
\end{proof}

\begin{proposition}
    (Restatement of Proposition \ref{prop:gaussian-fact})
    Let $X_1, \ldots, X_k$ be $k$ independent standard complex Gaussian random variables. Let $Y = \max\{|X_i|\}$,
    \begin{enumerate}
        \item $\E[Y] \leq 4\sqrt{\log k}$, $\E[Y^2] \leq 16\log k$ and $\E[Y^4] \leq 64e(\log k)^2$
        \item $\E[Y] \geq  \sqrt{(\log k)/2 }$ and $\E[Y^2] \geq (\log k)/2$
    \end{enumerate}
\end{proposition}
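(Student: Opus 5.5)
The key observation is that $|X_i|^2$ for a standard complex Gaussian $X_i=\frac{1}{\sqrt2}(A+iB)$ equals $\frac12(A^2+B^2)$, which is an $\mathrm{Exp}(1)$ random variable. Hence $Y^2=\max_i |X_i|^2$ is the maximum of $k$ i.i.d.\ standard exponentials, and every moment bound reduces to that one-dimensional distribution. The plan is to handle the upper bounds with a tail-integration argument and the lower bounds with an explicit formula or a simple quantile bound. Throughout we take $\log$ to be the natural logarithm and assume $k\ge 2$. For $k=1$ the right-hand sides of the upper bounds vanish, so the statement must be read for $k\ge 2$, or with $\log k$ replaced by $\max(\log k,1)$; we note this.

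\textbf{Upper bounds.} Write $M=Y^2$. The union bound gives
\[\pr[M>s]\le \min(1,ke^{-s}).\]
Integrating the tail,
\[\E[M]\le \log k+\int_{\log k}^\infty ke^{-s}\,ds=\log k+1.\]
For $k\ge 2$ this is at most $16\log k$, since $1\le 1.45\log k$. By Jensen, $\E[Y]\le\sqrt{\E[M]}\le \sqrt{\log k+1}\le 4\sqrt{\log k}$. For the fourth moment,
\[\E[M^2]=\int_0^\infty 2s\,\pr[M>s]\,ds\le (\log k)^2+\int_{\log k}^\infty 2ske^{-s}\,ds=(\log k)^2+2\log k+2.\]
This is at most $64e(\log k)^2$ for $k\ge2$, because $(\log 2)^{-1}$ and $(\log 2)^{-2}$ are small constants. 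These are routine verifications.

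\textbf{Lower bounds.} It is known that $\E[M]=H_k=\sum_{j\le k}1/j\ge \log k$. This follows either from the Rényi representation of exponential order statistics, or directly: $\pr[M\le s]=(1-e^{-s})^k$ and the tail integral evaluates to $H_k$. This gives $\E[Y^2]\ge \log k\ge(\log k)/2$.

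For $\E[Y]$, Jensen goes the wrong way, so I would use a quantile argument. Set $s_0=\log k$. Then
\[\pr[M\ge s_0]=1-(1-1/k)^k\ge 1-1/e.\]
Therefore $\E[Y]\ge (1-1/e)\sqrt{\log k}\ge \sqrt{(\log k)/2}$, since $1-1/e\approx0.632>0.707^{-1}\cdot 0.5$, i.e.\ $0.632\ge 1/\sqrt2\approx0.707$ fails. So this needs care.

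Instead I would pick the threshold $s_0=\log k-c$ and optimize. Then $\pr[M\ge s_0]=1-(1-e^c/k)^k\ge 1-e^{-e^{c}}$, which gives
\[\E[Y]\ge \bigl(1-e^{-e^c}\bigr)\sqrt{\log k-c}.\]
This works for large $k$. For small $k$ one can use the alternative bound $\E[Y]\ge \E|X_1|=\sqrt{\pi}/2\approx0.886$, which exceeds $\sqrt{(\log k)/2}$ whenever $\log k\le 1.57$, i.e.\ $k\le 4$. For larger $k$, choose $c=1$: the factor is $1-e^{-e}\approx0.934$, and one checks that $0.934\sqrt{\log k-1}\ge\sqrt{(\log k)/2}$ holds when $\log k\ge 2.34$. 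The remaining range $5\le k\le 10$ needs a tailored threshold, for instance $c=1/2$, or a direct numerical check. This small-$k$ case analysis of the $\E[Y]$ lower bound is the only step I expect to be fiddly. The rest is standard exponential-maximum calculus.
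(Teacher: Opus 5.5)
Your argument is correct for $k\ge 2$ and takes a different route from the paper, except that the lower bound on $\E[Y]$ is not finished for $5\le k\le 8$.

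\textbf{Comparison with the paper.} The paper never uses the exact law of $|X_i|$. It bounds $|X_i|^4\le 4\max\{\re(X_i)^4,\im(X_i)^4\}$ and invokes the real-Gaussian maximum bounds of Louis et al.\ for the $2k$ real and imaginary parts, using $\log 2k\le 2\log k$. It then gets $\E[Y^2]$ and $\E[Y]$ from $\E[Y^4]$ by Jensen. For the lower bounds it uses $\max_i|X_i|\ge\max_i|\re(X_i)|$ and the real bound $\sqrt{\log k}$ scaled by $1/\sqrt2$, and then deduces $\E[Y^2]\ge\E[Y]^2$.

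That is short, but it inherits the cited constants and their range of validity. With natural logarithms, the quoted real lower bound $\E[\max_i X_i]\ge\sqrt{\log k}$ already fails at $k=2$, where the left side is $1/\sqrt{\pi}$.

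Your observation that $|X_i|^2\sim\mathrm{Exp}(1)$ makes the argument self-contained and much sharper. You get $\E[Y^2]\le\log k+1$, $\E[Y^4]\le(\log k)^2+2\log k+2$, and $\E[Y^2]=H_k\ge\log k$, all correct for $k\ge 2$. You are right that $k=1$ must be excluded; the paper does so implicitly. The cost is that you reverse the paper's order: you get $\E[Y^2]$ directly, so $\E[Y]$ needs a separate argument, and Jensen goes the wrong way.

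\textbf{The unfinished step.} Your fallback $c=1/2$ gives $0.808\sqrt{\log k-1/2}\ge\sqrt{(\log k)/2}$ only when $\log k\ge 2.14$, i.e.\ $k\ge 9$. Moreover, with the estimate $1-e^{-e^c}$ no choice of $c$ works at $k=5$. The maximum of $(1-e^{-e^c})\sqrt{\log 5-c}$ is about $0.85$, below $\sqrt{(\log 5)/2}\approx 0.897$. The bound $\E|X_1|\approx 0.886$ also falls short there. So $5\le k\le 8$ genuinely rests on the numerical check. One way to do it is the exact tail $\pr[M\ge s]=1-(1-e^{-s})^k$; at $k=5$, the choice $s=1.1$ already gives about $0.91$.

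A cleaner fix avoids case analysis entirely. By H\"older, $\E[M]\le \E[M^{1/2}]^{2/3}\,\E[M^2]^{1/3}$. The R\'enyi representation gives $\E[M]=H_k$ and $\E[M^2]=H_k^2+\sum_{j\le k}j^{-2}$. Since $\sum_{j\le k}j^{-2}\le \pi^2/6\le H_k^2$ for $k\ge 2$, this yields
\[
\E[Y]\ \ge\ \frac{\E[M]^{3/2}}{\E[M^2]^{1/2}}\ \ge\ \frac{H_k^{3/2}}{\sqrt{2}\,H_k}\ =\ \sqrt{H_k/2}\ \ge\ \sqrt{(\log k)/2}
\]
uniformly for all $k\ge 2$.
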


The following fact about the expected maximum of $k$ independent Gaussian variables is given in \cite{LouisRaghavendraTetaliVempala12} for the real case: 

\begin{fact}\label{prop:gaussian-fact-LRTV}
    (\cite{LouisRaghavendraTetaliVempala12}, Fact 3.8) Let $X_1, \ldots, X_k$ be $k$ independent standard real Gaussian random variables. Let $Y = \max\{X_i\}$,
    \begin{enumerate}
        \item $\E[Y] \leq 2\sqrt{\log k}$, $\E[Y^2] \leq 4\log k$ and $\E[Y^4] \leq 4e(\log k)^2$
        \item $\E[Y] \geq \sqrt{\log k }$ and $\E[Y^2] \geq \log k$
    \end{enumerate}
\end{fact}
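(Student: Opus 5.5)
We will prove the upper bounds with an exponential-moment (Jensen) argument and the lower bounds with a Gaussian tail lower bound plus a union-type argument. Throughout, $\log$ is the natural logarithm and $k$ is at least a suitable absolute constant. This restriction is needed: for $k=2$ one has $\E[Y]=1/\sqrt{\pi}<\sqrt{\log 2}$. Small $k$ then only affects constants, which is harmless wherever the fact is used.

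\textbf{Upper bounds.} For $t>0$, Jensen and a union bound give
\[e^{t\E[Y]}\le \E[e^{tY}]\le \sum_i \E[e^{tX_i}]=k e^{t^2/2}.\]
Choosing $t=\sqrt{2\log k}$ yields $\E[Y]\le\sqrt{2\log k}\le 2\sqrt{\log k}$.

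For the even moments, $Y$ may be negative, so we control $M:=\max_i X_i^2\ge Y^2$ instead. Since $\E[e^{sX_i^2}]=(1-2s)^{-1/2}$ for $s<1/2$, we get the tail bound $\pr[M>r]\le k\,e^{-sr}(1-2s)^{-1/2}$. We integrate this tail, splitting at $r_0=(\log k)/s$ and using the trivial bound $\pr[M>r]\le 1$ below $r_0$:
\begin{align*}
\E[M]&\le r_0+\int_{r_0}^\infty \pr[M>r]\,dr,\\
\E[M^2]&= \int_0^\infty 2r\,\pr[M>r]\,dr .
\end{align*}
With $s$ close to $1/2$, say $s=1/2-\delta$, this gives $\E[M]\le (2+o(1))\log k$ and $\E[M^2]\le (4+o(1))(\log k)^2$. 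These are comfortably below $4\log k$ and $4e(\log k)^2$ once $k$ is moderately large. The slack in the constants is exactly what absorbs the lower-order terms. This tail-integration step is where the constant bookkeeping happens, and I expect it to be the main (if routine) nuisance.

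\textbf{Lower bounds.} Set $a=\sqrt{\log k}$. The standard Mills-ratio bound gives
\[q:=\pr[X_1\ge a]\ge \frac{a}{(1+a^2)\sqrt{2\pi}}\,e^{-a^2/2}=\Omega\!\left(\frac{1}{\sqrt{k\log k}}\right).\]
Hence, by independence,
\[\pr[Y< a']\le(1-q')^k\le e^{-kq'}\]
for any fixed threshold $a'$ slightly above $a$ (e.g.\ $a'=\sqrt{1.5\log k}$), and this probability is $o(1)$. To handle the event where $Y$ is small or negative, we lower bound $\E[Y]$ by
\[a'\,\pr[Y\ge a']-\E[Y^-],\qquad \E[Y^-]\le \E[X_1^-]\,\pr[Y<0]\le 2^{-k}.\]
This gives $\E[Y]\ge(1-o(1))\sqrt{1.5\log k}\ge\sqrt{\log k}$ for large $k$. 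Similarly, $\E[Y^2]\ge a'^2\,\pr[Y\ge a']\ge \log k$.

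\textbf{Transfer to the complex case.} These real bounds then feed Proposition \ref{prop:gaussian-fact}. For a standard complex Gaussian, $|X_i|\le\max(|\re X_i|,|\im X_i|)\cdot\sqrt2$, where $\re X_i,\im X_i$ are real Gaussians of variance $1/2$. Applying the real upper bounds to the $2k$ variables $\pm\sqrt2\,\re X_i$, $\pm\sqrt2\,\im X_i$ loses only the stated constant factors. For the lower bounds we use $|X_i|\ge|\re X_i|\ge \re X_i$.
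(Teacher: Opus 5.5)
The paper does not prove this statement. It imports it verbatim from Louis et al.\ and uses it only inside the proof of Proposition~\ref{prop:gaussian-fact}. So your argument is a self-contained replacement for a citation, and it is correct for $k$ beyond an absolute constant.

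Your restriction on $k$ is not cosmetic. For $k=2$, $\mathbb{E}[Y]=1/\sqrt{\pi}\approx 0.56<\sqrt{\log 2}\approx 0.83$, and numerically $\mathbb{E}[Y]\ge\sqrt{\log k}$ fails for every $2\le k\le 8$. So the Fact as literally stated is false, and some threshold is unavoidable. The upper bounds do hold for all $k\ge 2$. Your constants only take effect for moderately large $k$, and below that crude bounds such as $\mathbb{E}[Y^4]\le 3k$ give the same form with a worse constant.

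Your route gains two things over the citation:
\begin{itemize}
\item By working with $M=\max_i X_i^2$ you bound $\mathbb{E}[\max_i X_i^2]$ and $\mathbb{E}[\max_i X_i^4]$. That is exactly what the paper's complex reduction needs: there the Fact is applied to $\max\{\mathrm{Re}(X_i)^4,\mathrm{Im}(X_i)^4\}$, which is not literally $(\max_j W_j)^4$.
\item Your upper-bound arguments use only Jensen plus union bounds, so they do not need independence.
\end{itemize}

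That last point matters for your transfer paragraph. The ``$2k$ variables $\pm\sqrt2\,\mathrm{Re}\,X_i,\pm\sqrt2\,\mathrm{Im}\,X_i$'' are really $4k$ dependent variables. You must therefore invoke your own proofs there, not the Fact as stated. Simpler still: since $M$ already absorbs signs, apply it to the $2k$ independent standard Gaussians $\sqrt2\,\mathrm{Re}\,X_i,\sqrt2\,\mathrm{Im}\,X_i$. Then obtain $\mathbb{E}[\max_i|X_i|]\le\sqrt{\mathbb{E}[\max_i|X_i|^2]}$.

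Two small repairs are needed in the lower bound.
\begin{itemize}
\item The displayed Mills estimate is taken at $a$, but what you actually use is $q'=\Pr[X_1\ge a']$. The same estimate gives $q'=\Omega(k^{-3/4}/\sqrt{\log k})$, hence $kq'\to\infty$.
\item The step $\mathbb{E}[Y^-]\le\mathbb{E}[X_1^-]\Pr[Y<0]$ is unjustified, because $X_1^-$ and $\{Y<0\}$ are positively correlated. At $k=2$ the left side is $\tfrac14\mathbb{E}\min(|X_1|,|X_2|)\approx 0.117$, while the right side is $\approx 0.100$. Use instead $Y^-\le X_1^-\,\mathbb{I}[X_2<0,\ldots,X_k<0]$, which gives $\mathbb{E}[Y^-]\le 2^{-(k-1)}/\sqrt{2\pi}<2^{-k}$. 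Your conclusion then stands unchanged.
\end{itemize}
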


Here we prove the complex case by a reduction to the real case.

\begin{proof} (of Proposition \ref{prop:gaussian-fact})
    Note that for any $i$, 
    \[|X_i|^4 = (\re(X_i)^2 + \im(X_i)^2)^2 \leq  2\re(X_i)^4 + 2\im(X_i)^4 \leq 4\max \{\re(X_i)^4, \im(X_i)^4\}\]
    We then have
    \begin{align*}
        \E[Y^4]&\leq \E\left[\max_i |X_i|^4 \right] \\
        &\leq \E\left[ 4\cdot \max \left\{\re(X_1)^4, \im(X_1)^4, \ldots, \re(X_k)^4, \im(X_k)^4  \right\} \right]\\
        &\leq 4\left( 4e \cdot (\log 2k)^2\right) && (\text{by Fact \ref{prop:gaussian-fact-LRTV}})\\
        &\leq 64e \cdot (\log k)^2 &&(\log 2k\leq \log 2 +\log k \leq  2\log k)
    \end{align*}
    Apply Jensen's inequality we obtain
    \[\E[Y^2]  = \E[(Y^4)^\frac{1}{2}]\leq  \sqrt{\E[Y^4]} \leq 16\log k\]
    and 
    \[\E[Y] \leq \sqrt{\E[Y^2]}\leq 4\sqrt{\log k}\]

    For the second item, we have 
    \[\E[Y] = \E \left[\max_i |X_i|\right] \geq \E \left[\max_i |\re(X_i)|\right] \geq \frac{1}{\sqrt{2}} \sqrt{\log k}\]
    and by Jensen's inequality
    \[\E[Y^2] \geq \E[Y] ^2 \geq \frac{1}{2}\log k\]
    
\end{proof}

\section{Proof of Proposition \ref{prop:5.13}}\label{ap:prop5.13}

We state a few useful facts. We first recall Harris’s Inequality, which gives a criterion for proving non-positive correlation between monotone functions of independent random variables
\begin{theorem}
    \label{thm:harris}
    (Harris's Inequality) Let $f:\R\to \R$ be non-decreasing function and $g:\R\to \R$ be non-increasing functions. For a real random variable $X$, we have
    \[\E[f(X)g(X)] \leq \E[f(X)]\cdot \E[g(X)]\]
\end{theorem}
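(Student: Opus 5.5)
The plan is the standard coupling argument with an independent copy of $X$. Let $X'$ be an independent random variable with the same distribution as $X$. Throughout I assume $f(X)$, $g(X)$ and $f(X)g(X)$ are integrable; this is implicit in the statement, since otherwise the two sides need not be defined. It holds in every application in the paper, where the functions are polynomially bounded functions of Gaussians.

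The key pointwise observation uses only monotonicity. For any reals $a,b$,
\[
\bigl(f(a)-f(b)\bigr)\bigl(g(a)-g(b)\bigr)\le 0 .
\]
To see this, suppose without loss of generality $a\ge b$. Since $f$ is non-decreasing, $f(a)-f(b)\ge 0$. Since $g$ is non-increasing, $g(a)-g(b)\le 0$. The case $a<b$ is symmetric, with both signs flipped. Applying this with $a=X$ and $b=X'$ and taking expectations gives
\[
0\ \ge\ \E\bigl[(f(X)-f(X'))(g(X)-g(X'))\bigr].
\]

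Next I expand the right-hand side into four terms. Two of them are $\E[f(X)g(X)]$ and $\E[f(X')g(X')]$, which are equal because $X$ and $X'$ have the same law. The other two are the cross terms $\E[f(X)g(X')]$ and $\E[f(X')g(X)]$. By independence, each cross term equals $\E[f(X)]\,\E[g(X)]$, and each is finite by the integrability assumption. The expansion therefore reads
\[
0\ \ge\ 2\,\E[f(X)g(X)]-2\,\E[f(X)]\,\E[g(X)] ,
\]
which is exactly the claimed inequality after dividing by $2$.

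There is no real obstacle here. The only point needing care is justifying the expansion of the expectation, which requires the integrability of each product term. The cross terms are integrable as products of independent integrable variables, and the diagonal terms are integrable by assumption. If one wants to avoid assuming integrability of $f(X)g(X)$ in advance, the alternative is to truncate: replace $f$ and $g$ by $\max(\min(f,M),-M)$ and $\max(\min(g,M),-M)$, which are still monotone in the same direction. One then proves the inequality for these bounded functions and passes to the limit $M\to\infty$ by dominated convergence.
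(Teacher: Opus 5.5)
The paper gives no proof of this statement. Harris's inequality is quoted as a standard fact and used once, in Claim~\ref{claim:logic1}. There $f(x)=x$ is applied to the random variable $\max_l|\langle F_u,g_l\rangle|^2$, which has finite mean, and $g$ is a conditional probability with values in $[0,1]$. Your argument is the standard proof: take an independent copy $X'$, integrate the pointwise inequality $(f(a)-f(b))(g(a)-g(b))\le 0$, and expand using independence for the cross terms. It is correct under the integrability hypotheses you make explicit, and all of them hold in that application. Pointing out that some integrability is needed improves on the statement as written.

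Your optional truncation remark needs one correction. On the left-hand side, dominated convergence would need $|f(X)g(X)|$ as the dominating function, which is exactly the integrability you wanted to avoid assuming. Use monotone convergence instead. Clipping preserves signs, and $|f_Mg_M|=\min(|f|,M)\min(|g|,M)$ is non-decreasing in $M$, so $(f_Mg_M)^{\pm}$ increase to $(fg)^{\pm}$.

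You also need the bound $(fg)^+\le C(|f|+|g|)$ to rule out an $\infty-\infty$ limit. To prove it, fix a point $c$ in the interval where $f,g>0$. Use $fg\le g(c)f$ to the right of $c$ and $fg\le f(c)g$ to the left, and argue the same way on the interval where $f,g<0$. With this, integrability of $f(X)$ and $g(X)$ alone suffices, and $\E[f(X)g(X)]$ is well defined in $[-\infty,\infty)$.
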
 
We will need to reason about the magnitude of complex Gaussians. The magnitude of a complex Gaussian $X$ is equal to the magnitude of a real Gaussian vector where the components are the real and imaginary part of $X$. We summarize its distribution in the following:
\begin{fact}
    \label{fact:rayleigh-rice}
    Let $X = (Y_1, Y_2)$ be a real Gaussian vector where $Y_i\sim \gau(\mu_i,\sigma^2)$ for $i=1,2$ are independent Gaussians, we have the magnitude of the vector, i.e, $|X|$, has the following distribution:
    \begin{itemize}
    \item if $\mu_1= \mu_2 = 0$, $|X|$ has a Rayleigh distribution \cite{rayleigh-distribution}, with density 
    \[f_{|X|}(r) = \frac{r}{\sigma^2}\cdot e^{-r^2/2\sigma^2}, \qquad r\geq 0\]
    and CDF
    \[\pr\left[|X|\leq r\right] = 1-e^{-r^2/2\sigma^2}, \qquad r\geq 0\]

    \item if $\mu_1\geq 0, \mu_2= 0$, then $|X|$ has a Rice distribution \cite{rice-distribution}, with density 
    \[f_{|X|}(r) = \frac{r}{\sigma^2}\cdot e^{-\frac{r^2 +\mu_1^2}{2\sigma^2}}\cdot I_0\left(\frac{r\mu_1}{2\sigma^2}\right)\]
    where 
    \[
    I_0(z)= \int_0^{2\pi}e^{z\cos \theta}\rmd \theta
    \]
    is called the modified Bessel function of the first kind of order $0$ \cite{modified-bessel}.

    \end{itemize} 
\end{fact}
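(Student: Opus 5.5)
The plan is a direct polar-coordinates computation, starting from the joint density of $X=(Y_1,Y_2)$. Since $Y_1,Y_2$ are independent with common variance $\sigma^2$, the joint density is
\[
p(y_1,y_2)=\frac{1}{2\pi\sigma^2}\exp\!\left(-\frac{(y_1-\mu_1)^2+(y_2-\mu_2)^2}{2\sigma^2}\right).
\]
I would substitute $y_1=r\cos\theta$, $y_2=r\sin\theta$ with $r\ge 0$ and $\theta\in[0,2\pi)$; the Jacobian is $r$. The density of $|X|$ is then
\[
f_{|X|}(r)=\int_0^{2\pi} p(r\cos\theta,r\sin\theta)\, r\,\rmd\theta .
\]

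\textbf{Centered case} $\mu_1=\mu_2=0$. The exponent is $-r^2/2\sigma^2$, which does not depend on $\theta$. The $\theta$-integral therefore contributes a factor $2\pi$, giving $f_{|X|}(r)=\frac{r}{\sigma^2}e^{-r^2/2\sigma^2}$. The CDF follows by integrating from $0$ to $r$ with the substitution $s=r'^2/2\sigma^2$:
\[
\pr[|X|\le r]=\int_0^r \frac{r'}{\sigma^2}e^{-r'^2/2\sigma^2}\,\rmd r' = 1-e^{-r^2/2\sigma^2}.
\]

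\textbf{Case} $\mu_1\ge 0$, $\mu_2=0$. Expanding the square gives $(r\cos\theta-\mu_1)^2+r^2\sin^2\theta = r^2+\mu_1^2-2r\mu_1\cos\theta$. The factor $e^{-(r^2+\mu_1^2)/2\sigma^2}$ comes out of the integral, leaving
\[
f_{|X|}(r)=\frac{r}{2\pi\sigma^2}\,e^{-\frac{r^2+\mu_1^2}{2\sigma^2}}\int_0^{2\pi}e^{\frac{r\mu_1}{\sigma^2}\cos\theta}\,\rmd\theta .
\]
The remaining $\theta$-integral is exactly a modified Bessel function of order $0$.

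\textbf{Main obstacle: normalization conventions.} The standard definition is $I_0(x)=\frac{1}{2\pi}\int_0^{2\pi}e^{x\cos\theta}\rmd\theta$. With it, the computation above yields the usual Rice density $\frac{r}{\sigma^2}e^{-(r^2+\mu_1^2)/2\sigma^2}I_0(r\mu_1/\sigma^2)$. This does not literally match the displayed formula, which omits the $\frac{1}{2\pi}$ in the definition of $I_0$ and scales its argument by $2\sigma^2$ rather than $\sigma^2$. I would therefore state the fact with the standard normalization, and then check that the later argument only uses the qualitative features of this density. Those features are its product form, the monotonicity of $I_0$ on $[0,\infty)$, and the identity $I_0(0)=1$, which recovers the Rayleigh case. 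Checking this ensures the constant and argument discrepancy does not affect the proof of Proposition~\ref{prop:5.13}.
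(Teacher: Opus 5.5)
Your polar-coordinates computation is correct. It is the standard derivation behind the references the paper cites; the paper gives no proof of its own for this fact.

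You are also right that the Rice bullet cannot be proved as printed, and the printed formula has two distinct errors:
\begin{itemize}
\item The paper's $I_0(z)=\int_0^{2\pi}e^{z\cos\theta}\,d\theta$ lacks the $\frac{1}{2\pi}$. Putting $\mu_1=0$ into the printed formula therefore gives $2\pi$ times the Rayleigh density of the first bullet, so the two bullets contradict each other.
\item Your expansion $(r\cos\theta-\mu_1)^2+r^2\sin^2\theta=r^2+\mu_1^2-2r\mu_1\cos\theta$ shows the Bessel argument must be $r\mu_1/\sigma^2$, not $r\mu_1/(2\sigma^2)$.
\end{itemize}
The correct statement is the one you derive: $f_{|X|}(r)=\frac{r}{\sigma^2}e^{-(r^2+\mu_1^2)/2\sigma^2}I_0(r\mu_1/\sigma^2)$ with $I_0(z)=\frac{1}{2\pi}\int_0^{2\pi}e^{z\cos\theta}\,d\theta$. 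The Rayleigh bullet is correct as stated.

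Your prediction about the downstream check is off, however. The later argument does not rely only on qualitative features such as monotonicity of $I_0$ and $I_0(0)=1$. Observation~\ref{obs:identity} needs the exact identity $\partial_x f_{G_1\vert E_x}(t)=-\partial_t\bigl(\rho_x(t)\,f_{G_1\vert E_x}(t)\bigr)$, obtained from the recurrences in Fact~\ref{fact:bessel}. This identity requires the Bessel argument to be exactly $xt/\sigma^2$. With the printed argument $xt/(2\sigma^2)$, the total mass $\int_0^\infty f\,dt$ varies with $x$ (with the paper's $I_0$ it equals $2\pi e^{-3x^2/(8\sigma^2)}$). That is incompatible with such an identity, whose boundary terms vanish.

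What actually keeps the downstream argument sound is the following:
\begin{itemize}
\item The overall factor $2\pi$ is harmless. Everything afterwards is linear in the density, or uses $I_0,I_1$ only through $I_1/I_0$ and the homogeneous recurrences.
\item The paper's $\rho_x(t)$ already uses the argument $2xt/\|\beta F_v'\|^2=xt/\sigma^2$ with $\sigma^2=\|\beta F_v'\|^2/2$, which is your corrected argument. Only the first displayed density in that observation inherits the misprint.
\end{itemize}
Also, the Bessel property used in Proposition~\ref{prop:monotonicity} is monotonicity of $I_1/I_0$ (log-convexity of $I_0$), not monotonicity of $I_0$.
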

The following properties of modified Bessel function will be used:

\begin{fact}\label{fact:bessel}
    (\cite{modified-bessel}) 
    For $m\in \N$, 
    \[I_m:= \int_0^{2\pi}e^{z\cos \theta}\cos (m\theta)\rmd \theta\] 
    is the modified Bessel function of the first kind of order $m$. In particular,
    \begin{enumerate}
        \item $I_0'(z) = I_1(z)$
        \item $I_1'(z) = I_0(z) -\frac{1}{z}\cdot I_1(z)$
    \end{enumerate}
\end{fact}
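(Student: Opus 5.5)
Both identities follow from differentiating under the integral sign and, for the second, one integration by parts. Since the paper's normalization of $I_m$ (an integral over $[0,2\pi]$ with no $1/(2\pi)$ or $1/\pi$ prefactor) only rescales every $I_m$ by the same constant, and both identities are linear in $I_0, I_1$, I will work directly with the definition as stated.

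For item 1, I will differentiate under the integral. This is justified because the integrand $e^{z\cos\theta}$ is smooth in $(z,\theta)$ and the interval $[0,2\pi]$ is compact, so the $z$-derivative $\cos\theta\, e^{z\cos\theta}$ is bounded uniformly for $z$ in compact sets. This gives
\[
I_0'(z)=\int_0^{2\pi}\cos\theta\, e^{z\cos\theta}\,\rmd\theta = I_1(z).
\]

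For item 2, the same differentiation gives
\[
I_1'(z)=\int_0^{2\pi}\cos^2\theta\, e^{z\cos\theta}\,\rmd\theta .
\]
Using $\cos^2\theta = 1-\sin^2\theta$, this equals $I_0(z)-\int_0^{2\pi}\sin^2\theta\, e^{z\cos\theta}\,\rmd\theta$. It therefore suffices to prove
\[
z\int_0^{2\pi}\sin^2\theta\, e^{z\cos\theta}\,\rmd\theta = I_1(z).
\]

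To prove this, I will integrate $I_1(z)=\int_0^{2\pi} e^{z\cos\theta}\,\rmd(\sin\theta)$ by parts. The boundary term $\left[e^{z\cos\theta}\sin\theta\right]_0^{2\pi}$ vanishes because $\sin 0=\sin 2\pi=0$. The remaining term is $-\int_0^{2\pi}\sin\theta\cdot(-z\sin\theta)e^{z\cos\theta}\,\rmd\theta = z\int_0^{2\pi}\sin^2\theta\, e^{z\cos\theta}\,\rmd\theta$, which is exactly the required identity. Dividing by $z\neq 0$ yields $I_1'(z)=I_0(z)-\frac{1}{z}I_1(z)$. At $z=0$ the right-hand side is read as its limit: $I_1(z)/z\to I_1'(0)$, which is consistent since $I_1(0)=0$.

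There is no real obstacle here. The only points needing care are the justification of differentiation under the integral sign and the observation that periodicity kills the boundary term.
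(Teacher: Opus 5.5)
Your proof is correct. The paper gives no proof of this Fact and only cites the standard reference, so your derivation supplies an argument the paper omits.

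The reference route usually goes through the general recurrences $I_{m-1}+I_{m+1}=2I_m'$ and $I_{m-1}-I_{m+1}=\frac{2m}{z}I_m$, then specializes to $m=0,1$ using $I_{-1}=I_1$. Those recurrences come from the integral representation by the same steps you use: a product-to-sum identity and one integration by parts with a vanishing periodic boundary term. So your argument is essentially the $m=1$ case of that computation, done directly. Its advantage is that it is self-contained and works verbatim with the paper's definition, which lacks the $1/(2\pi)$ prefactor.

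Your remark that this missing factor is harmless, because both identities are homogeneous and linear in $(I_0,I_1)$, is exactly what is needed to reconcile the paper's definition with the cited one.

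Your handling of $z=0$ is also fine. With the paper's normalization, $I_1'(0)=\pi$ and $I_0(0)-\lim_{z\to 0}I_1(z)/z=2\pi-\pi$, so the identity extends continuously to $z=0$.
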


We now begin our proof of Proposition \ref{prop:5.13}.

\begin{proposition}
    (Restatement of Proposition \ref{prop:5.13})
    \begin{equation}
    \tag{*}\label{eq:5.13}\E\left[(|h_u|+|h_v|)\cdot |h_u-z\cdot h_v| \cdot \I[h_u\neq 0, h_v = 0]\right] \leq \frac{16c_1}{k}\cdot \|F_u\|^2\cdot  (\log k)^{3/2} \cdot \ang(F_u,F_v)
    \end{equation}
\end{proposition}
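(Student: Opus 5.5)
On the event $\{h_u\neq 0,\ h_v=0\}$ the integrand collapses to $|h_u|^2=|\ip{F_u}{g}|^2$, where $g=g_i$. So the left side of \eqref{eq:5.13} equals
\[
\E\left[|\ip{F_u}{g}|^2\cdot \I[h_u\neq 0, h_v=0]\right].
\]
The plan is to show this quantity is at most (up to constants) $\E[\max_j|\ip{F_u}{g_j}|^2]/k\cdot \sqrt{\log k}\cdot\ang(F_u,F_v)$. Two obvious shortcuts do not give this. Cauchy--Schwarz against the probability bound of item 2 of Proposition \ref{prop:condition-one} loses a square root: it only yields $\ang^{1/2}$, and $\sqrt{\E[Y^4]}$ gives the wrong power of $k$. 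Simply multiplying $\E[Y^2]$ by the probability is invalid because of correlation. So I would follow Louis et al.'s strategy and condition on the magnitude.

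\textbf{Step 1 (normalize and condition).} Write $X_j=\ip{\bar F_u}{g_j}$ and $Y_j=\ip{\bar F_v}{g_j}$, so $|h_u|^2=\|F_u\|^2|X_i|^2$. Decompose $\bar F_v = \rho\,\bar F_u + \sqrt{1-|\rho|^2}\,e$ with $e\perp \bar F_u$. Then $Y_j=\rho^* X_j+\sqrt{1-|\rho|^2}W_j$, where the $W_j$ are i.i.d.\ $\cC\gau(0,1)$ and independent of the $X$'s. Conditioning on $|X_i|=r$ being the maximum of $|X_1|,\dots,|X_k|$, the target becomes
\[
\int r^2\, \pr\left[\arg\max_j|Y_j|\neq i \;\middle|\; |X_i|=r=\max_j|X_j|\right]\, \rmd\mu(r),
\]
where $\mu$ is the law of $\max_j|X_j|$ restricted to index $i$ (total mass $1/k$).

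\textbf{Step 2 (monotonicity).} I would show that $q(r):=\pr[\arg\max_j|Y_j|\neq i\mid |X_i|=r=\max]$ is non-increasing in $r$: a larger top value makes $|Y_i|$ stochastically larger, while the other coordinates are constrained below $r$. Formalizing this uses the Rice density of $|Y_i|$ given $X_i$ (Fact \ref{fact:rayleigh-rice}). Monotone likelihood ratio in the noncentrality parameter, via $I_0'=I_1$ and $I_1'=I_0-I_1/z$ (Fact \ref{fact:bessel}), shows that $\pr[|Y_i|>t]$ is increasing in $r$, while the competitors' laws given $|X_j|<r$ stochastically increase with $r$ only mildly. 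This is the step I expect to be the main obstacle; the delicate Bessel calculations belong here. If full monotonicity fails, I would fall back to splitting at a threshold $r_0\approx C\sqrt{\log k}$ and handling the tail $r>r_0$ by the Rayleigh tail bound.

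\textbf{Step 3 (Harris).} With $r^2$ non-decreasing and $q$ non-increasing, Harris's inequality (Theorem \ref{thm:harris}) applied to the normalized measure $k\mu$ gives
\[
\int r^2 q\,\rmd\mu \le k\cdot \left(\int r^2 \rmd\mu\right)\left(\int q\,\rmd\mu\right) = k\cdot \frac{\E[Y^2]}{k}\cdot \pr[h_u\neq0,h_v=0].
\]
By Proposition \ref{prop:gaussian-fact} the middle factor is at most $16\log k/k$. By item 2 of Proposition \ref{prop:condition-one} (Theorem \ref{fact:correlation} applied with $\epsilon\le \ang^2/2$, which gives a bound linear in $\ang$), $\pr[h_u\neq0,h_v=0]\le c_1\sqrt{\log k}\,\ang(F_u,F_v)/k$. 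Multiplying these, using symmetry over the index $i$, and restoring the factor $\|F_u\|^2$ gives the bound $\frac{16c_1}{k}\|F_u\|^2(\log k)^{3/2}\ang(F_u,F_v)$ claimed in \eqref{eq:5.13}, after checking the factors of $k$.
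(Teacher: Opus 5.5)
Your reduction matches the paper's. On $\{h_u\neq 0,h_v=0\}$ the integrand is $|h_u|^2$, you decouple it from the indicator with Harris's inequality, and for that you need the conditional mismatch probability $q(r)$ to be non-increasing in the top magnitude $r$. Your bookkeeping of the factors of $k$ in Step~3 is also correct.

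The gap is Step~2, which is the entire technical content of the proposition and which you leave unproved. Monotone likelihood ratio of the Rice law in its noncentrality parameter only shows that $|Y_i|$ grows stochastically with $r$. The competitors move the same way: $g(t,r)=\pr[|Y_j|\le t \mid |X_j|\le r]$ is non-increasing in $r$. So the two effects have opposite signs, and saying the competitors ``increase only mildly'' does not show that the first effect dominates. What is missing is an exact way to compare them.

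The paper does this as follows. It writes the complementary probability as $p(x)=\int_0^\infty f_{G_1\mid E_x}(t)\,g(t,x)^{k-1}\,\rmd t$. It uses the Rice identity $\partial_x f_{G_1\mid E_x}(t)=\partial_t\bigl[-\rho_x(t)\,f_{G_1\mid E_x}(t)\bigr]$, with $\rho_x(t)=I_1/I_0$ evaluated at $2xt/\|\beta F_v'\|^2$. Integrating by parts in $t$ then gives $p'(x)=(k-1)\int_0^\infty g^{k-2}\,f_{G_1\mid E_x}\,\bigl[\partial_x g+\rho_x(t)\,\partial_t g\bigr]\,\rmd t$. The bracket is shown to be pointwise nonnegative from two separate facts:
\begin{itemize}
\item $I_1/I_0$ is non-decreasing, by log-convexity of $I_0$;
\item the Rayleigh reversed hazard rate $f_{|A|}(x)/\pr[|A|\le x]$ is non-increasing.
\end{itemize}
Without an argument of this kind, Step~3 has no monotone function to feed into Harris's inequality.

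Your fallback does not close the gap. Bounding $r^2\le r_0^2$ on $\{r\le r_0\}$ is fine and gives the right order. But if the tail $\{r>r_0\}$ is handled only by the Rayleigh tail, its contribution is of order $\|F_u\|^2(\log k)\,k^{-C^2}$, with no factor $\ang(F_u,F_v)$. The right-hand side of the proposition is linear in $\ang(F_u,F_v)$ and can be arbitrarily small. Letting $r_0$ depend on $\ang(F_u,F_v)$ instead adds an extra $\log(1/\ang(F_u,F_v))$ term to the $\log k$ factor. A threshold argument could only work if, on $\{r>r_0\}$, you also had a conditional mismatch bound carrying a factor $\sqrt{1-|\rho|}\le \ang(F_u,F_v)/\sqrt{2}$. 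That is the kind of conditional estimate proved in the appendix for Theorem~\ref{fact:correlation}, and your proposal does not supply it.
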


\begin{proof} 
    We first show that Proposition \ref{prop:5.13} can be reduced to proving the monotonicity of the probability $\pr\left[h_v\neq0 ~\Big\vert~ |h_u|^2  = x\right]$.

    \begin{claim}\label{claim:logic1}
        If the probability $\pr\left[h_v\neq0 ~\Big\vert~ |h_u|^2  = x\right]$ is non-decreasing for $x>0$, then Proposition \ref{prop:5.13} holds.
    \end{claim}
    \begin{proof}
        The LHS of the inequality $(\ref{eq:5.13})$ is  
    \begin{align*}
    \E\left[(|h_u|+|h_v|)\cdot |h_u-z\cdot h_v| \cdot \I[h_u\neq 0, h_v = 0]\right] 
        &= \E\left[|h_u|^2\cdot \I[h_v = 0, h_u\neq 0]\right]\\
        &\leq \E\left[\max_l |\ip{F_u}{g_l}|^2\cdot \I[h_u\neq 0, h_v = 0]\right]
    \end{align*}
    on the other hand, by Proposition \ref{prop:gaussian-fact} and Proposition \ref{prop:condition-one}, for the RHS we have 
    \[\E\left[\max_l |\ip{F_u}{g_l}|^2\right]\cdot \pr\left[h_u\neq 0, h_v=0 \right] \leq \frac{16c_1}{k}\cdot \|F_u\|^2\cdot  (\log k)^{3/2} \cdot \ang(F_u,F_v)\]

    so it suffices to prove that 

    \[\E\left[\max_l |\ip{F_u}{g_l}|^2\cdot \I[h_u\neq 0, h_v = 0]\right] \leq \E\left[\max_l |\ip{F_u}{g_l}|^2\right]  \cdot \E\left[\I\left[h_u\neq 0, h_v = 0\right]\right]  \]
    
    That is, we would like to show that the random variables $X:= \max_l |\ip{F_u}{g_l}|^2$ and $Y:= \I[h_u\neq 0, h_v = 0]$ are non-positively correlated. By Harris's Inequality (Theorem \ref{thm:harris}) with $f(X) =X$ it suffices to show that the function
    
    \[g(X)=\E\left[Y\mid X=x\right] = \pr\left[h_u\neq 0, h_v=0 ~\Big\vert~ \max_l |\ip{F_u}{g_l}|^2 = x\right]\]
    is a non-increasing function of $x$. Note that we can consider $x>0$ since  $\max_l |\ip{F_u}{g_l}|^2 = x> 0$ almost surely.  Furthermore, by symmetry we have 
    \begin{align*}
        &\pr\left[h_u \neq 0, h_v=0 ~\Big\vert~ \max_l |\ip{F_u}{g_l}|^2 = x\right] \\
        &= \pr\left[h_u\neq  0 ~\Big\vert~ \max_l |\ip{F_u}{g_l}|^2 = x\right] \cdot \pr\left[h_v =  0 ~\Big\vert~ h_u\neq 0, \max_l |\ip{F_u}{g_l}|^2 = x\right]\\
        &=\frac{1}{k}\cdot \pr\left[h_v=0 ~\Big\vert~ |h_u|^2  = x\right]
    \end{align*}
    Thus it reduces to proving $\pr\left[h_v=0 ~\Big\vert~ |h_u|^2  = x\right]$ is non-increasing for $x>0$, which is equivalent to proving $\pr\left[h_v\neq0 ~\Big\vert~ |h_u|^2  = x\right]$ is non-decreasing for $x>0$.
    \end{proof}
     We next define a few random variables and events that allows us to analyze the monotonicity of $\pr\left[h_v\neq0 ~\Big\vert~ |h_u|^2  = x\right]$.  
    
    Consider the random variables $\{G_i = |\ip{F_v}{g_i}|\}_{i\in [k]}$. W.l.o.g. we assume  $h = h^{(i)} = h^{(1)} $, that is, we fix $i= 1$. We decompose $F_v = \alpha F_u + \beta F_v'$ where $F_v' \perp F_u$. Then for each $j\in [k]$ we let $A_j:=\ip{\alpha F_u}{g_j}$ and $B_j:=\ip{\beta F_v'}{g_j}$, and we have $G_j = |A_j +  B_j| $. Consider the following probability:\begin{equation}\label{eq:prob-monotone}
        p(x) := \pr\left[G_1>G_l, \forall l\geq 2 ~\Big\vert~ |A_1|= x \text{ and }  \forall j\geq 2, |A_j|\leq x\right]
    \end{equation}
    By the definition of $h$, it is straightforward to see it has the same monotonicity as $\pr\left[h_v\neq0 ~\Big\vert~ |h_u|^2  = x\right]$:
    \begin{claim}\label{claim:logic2}
        $p(x)$ is non-decreasing in $x>0$ if and only if $\pr\left[h_v\neq0 ~\Big\vert~ |h_u|^2  = x\right]$ is non-decreasing in $x>0$.
    \end{claim}
    
    Thus it suffices to prove $p'(x)\geq 0$ for $x>0$, which will be our goal for the rest of the proof.   
    
    We next introduce more notations for convenience. 
    Let $E_x$ denote the event $\{|A_1| = x \text{ and }  \forall j\geq 2, |A_j|\leq x\}$, and let $G_1\vert E_x$ denote the  random variable $G_1$ conditioned on $E_x$.  Let $g(t,x)=\pr\left[G_l\leq t ~\Big\vert~ |A_l|\leq x\right]$ for a fixed $l\geq 2$.
    \begin{proposition}\label{prop:second-term} 
        We have  
        \begin{enumerate}
            \item 
            \[p(x) = \int_0^\infty f_{G_1 \lvert E_x}(t)\cdot g(t,x)^{k-1}\rmd t\]
            \item 
            \[p'(x) = (k-1)\int_0^{\infty}g(t,x)^{(k-2)}f_{G_1 \lvert E_x}(t)\cdot \left[\partial_x g(t,x) + \frac{I_1\left(2xt/\|\beta F_v'\|^2\right)}{I_0\left(2xt/\|\beta F_v'\|^2\right)}\cdot \partial_t g(t,x)\right] \rmd t\]
        \end{enumerate}
    For convenience, we will denote 

    \[\rho_x(t):= \frac{I_1\left(2xt/\|\beta F_v'\|^2\right)}{I_0\left(2xt/\|\beta F_v'\|^2\right)}\] 
    \end{proposition}
    \begin{proof}
        For the first item, observe that $A_1, \ldots, A_k$ and $B_1, \ldots, B_k$ are independent because $g_1,\ldots, g_k$ are independent complex Gaussians and the projection of a complex Gaussian onto orthogonal directions $\alpha F_u$ and $\beta F_v'$ are independent complex Gaussians. Therefore we have 
        \[p(x) = \int_0^\infty f_{G_1 \lvert E_x}(t)\cdot \prod_{l\geq 2}\pr\left[G_l\leq t ~\Big\vert~ |A_l|\leq x\right]\rmd t =\int_0^\infty f_{G_1 \lvert E_x}(t)\cdot g(t,x)^{k-1}\rmd t
    \]
        where the last equality holds because for $l\geq 2$,  the random variables $G_l$ conditioned on $|A_l|\leq x$ are identically distributed. 
        
        We next prove the second item. For any $l\in [k], t\geq 0$, by independence of $A_l$ and $B_l$ and rotational symmetry in the complex plane of $B_l$, we have   $\pr\left[
    |A_l+B_l| \leq t  ~\Big\vert~ |A_l| = x\right]=\pr\left[
    |x+B_l|\leq t\right]$. Thus $G_1\vert E_x$ is identically distributed to $|x+B_1|$, which is the magnitude of a Gaussian vector and has a Rice distribution (second item in Fact \ref{fact:rayleigh-rice} , with parameters $\mu_1=x$ and $\sigma^2= \|\beta F_v'\|^2/2$) . We now take the differentiation of $p(x)$ using the product rule:

\begin{align*}
    p'(x) &= \left(\int_0^\infty f_{G_1 \lvert E_x}(t)\cdot g(t,x)^{k-1}\rmd t\right)'\\
    &= (k-1)\int_0^\infty g(t,x)^{k-2}\cdot  \partial_xg(t,x) \cdot f_{G_1 \lvert E_x}(t)\rmd t+\int_0^\infty g(t,x)^{k-1}\cdot \partial_x f_{G_1 \lvert E_x}(t)\rmd t    
\end{align*}
    We show the following identity for the second term in $p'(x)$:
    \begin{claim}\label{claim:second-item}
        $\int_0^\infty g(t,x)^{k-1}\cdot \partial_x f_{G_1 \lvert E_x}(t)\rmd t=(k-1)\int_0^\infty g(t,x)^{k-2}\cdot f_{G_1\vert E_x}(t)\cdot \rho_x(t) \cdot \partial_t g(t,x)\rmd t$ 
    \end{claim}
    and the second item follows by plugging the above identity into $p'(x)$ above. To prove Claim \ref{claim:second-item}, we make the following observation:
    \begin{observation}\label{obs:identity}
        $\partial_x f_{G_1\vert E_x}(t)= \partial_t\left[-\rho_x(t)\cdot f_{G_1\vert E_x}(t)\right]$. In particular, for any $l\in [k]$, $\partial_x \pr\left[|x+B_l|\leq t\right] = -f_{|x+B_l|}(t)\cdot \rho_x(t)$
    \end{observation}
    This follows by a direction calculation: recall that by Fact \ref{fact:rayleigh-rice}. 
    \[f_{G_1\vert E_x}(r) = \frac{2r}{\|\beta F_v'\|^2}\cdot \exp\left(-\frac{r^2 + x^2}{\|\beta F_v'\|^2}\right)\cdot I_0\left(\frac{rx}{\|\beta F_v'\|^2}\right)\]
    Differentiate with respect to $x$:
    \begin{align*}
        \partial_x f_{G_1\vert E_x}(t) &= \partial_x\partial_t \pr\left[|x+B_1|\leq t\right]
        =\partial_t\partial_x \int_0^t f_{G_1 \lvert E_x}(r)\rmd r =\partial_t\int_0^t\partial_x f_{G_1 \lvert E_x}(r)\rmd r\\
        &=\partial_t\int_0^t \frac{2r}{\|\beta F_v'\|^2}\cdot \exp\left(-\frac{r^2+x^2}{\|\beta F_v'\|^2}\right)\cdot \left[-\frac{2x}{\|\beta F_v'\|^2}\cdot I_0\left(\frac{2xr}{\|\beta F_v'\|^2}\right)+ \frac{2r}{\|\beta F_v'\|^2}\cdot I_1\left(\frac{2xr}{\|\beta F_v'\|^2}\right)\right] \rmd r\\
        &\qquad (\text{using item 1 of Fact \ref{fact:bessel}})\\
    &= \partial_t\int_0^t \partial_r\left[-\frac{2r}{\|\beta F_v'\|^2}\cdot \exp\left(-\frac{r^2+x^2}{\|\beta F_v'\|^2}\right)\cdot I_1\left(\frac{2xr}{\|\beta F_v'\|^2}\right)\right] \rmd r \\
    &\qquad (\text{expand and using item 2 of Fact \ref{fact:bessel}})\\
    &= \partial_t\left[-\frac{2r}{\|\beta F_v'\|^2}\cdot \exp\left(-\frac{r^2+x^2}{\|\beta F_v'\|^2}\right)\cdot I_1\left(\frac{2xr}{\|\beta F_v'\|^2}\right)\right]_{r=0}^{r=t}\\
    &= \partial_t\left[-\frac{2r}{\|\beta F_v'\|^2}\cdot \exp\left(-\frac{t^2+x^2}{\|\beta F_v'\|^2}\right)\cdot I_1\left(\frac{2xt}{\|\beta F_v'\|^2}\right)\right]\\
    &=\partial_t \left[-f_{G_1\vert E_x}(t) \cdot \rho_x(t)\right]
    \end{align*}
    \begin{proof}
        (of Claim \ref{claim:second-item}.) Using integration by part,
        \begin{align*}
    \int_0^{\infty} g(t,x)^{k-1}\partial_x f_{G_1 \lvert E_x}(t) \rmd t &= \int_0^{\infty} g(t,x)^{k-1}\partial_t \left[-\rho_x(t)f_{G_1 \lvert E_x}(t)\right] \rmd t \\
    &=\left[g(t,x)^{(k-1)}\cdot (-\rho_x(t))\cdot f_{G_1 \lvert E_x}(t)\right]_{t=0}^{t=\infty}\\
    &\quad -(k-1) \int_0^\infty g(t,x)^{(k-2)}\cdot\left[\partial_t~g(t,x)\right] \cdot \left[-\rho_x(t) \cdot f_{G_1 \lvert E_x}(t)\right] \rmd t\\
    &= (k-1)\int_0^\infty g(t,x)^{(k-2)}\left[\partial_t g{(t,x)}\right]\cdot \rho_x(t) \cdot f_{G_1 \lvert E_x}(t)\rmd t
\end{align*}
where in the last equality, we used the fact that $g(t,x)= 0$ at $t=0$, and $f_{G_1\vert E_x}(t) = 0$ at $t=\infty$, and that $0<\rho_x(t)<1$ for $x,t\geq 0$. Therefore, we have $g(t,x)^{(k-1)}\cdot (-\rho_x(t))\cdot f_{G_1 \lvert E_x}(t)$ vanishes at $t=0$ and $t=\infty$.
    \end{proof}
    
    \end{proof}

\begin{claim}\label{claim:logic3}
    For $x>0$, $p'(x)\geq 0 $ if  $\partial_x g(t,x) +  \rho_x(t)\cdot \partial_t g(t,x) \geq 0$
\end{claim}

\begin{proof}
    By Proposition \ref{prop:second-term}, 
    \[p'(x) = (k-1)\int_0^{\infty}g(t,x)^{(k-2)}f_{G_1 \lvert E_x}(t)\cdot \left[\partial_x g(t,x) + \rho_x(t)\cdot \partial_t g(t,x)\right] \rmd t
\]

both $g(t,x)$ and $f_{G_1\vert E_x}(t)$ are non-negative for $x>0 $ and $t\geq 0$, so it suffices to show the term $\partial_x g(t,x) + \rho_x(t)\cdot \partial_t g(t,x) \geq 0$.
\end{proof}

We next state the monotonicity of two quantities and use it to prove  $\partial_x g(t,x) +  \rho_x(t)\cdot \partial_t g(t,x) \geq 0$.

\begin{proposition}\label{prop:monotonicity}
    \hfill
    \begin{enumerate}
        \item For any $t\geq 0$, and $0<x_1\leq x_2$, 
        \[\rho_{x_1}(t)\leq \rho_{x_2}(t)\]
        \item For any $l\in [k]$ and $0<x_1\leq x_2$,
        \[\frac{ f_{|A_l|}(x_2) }{\pr\left[|A_l|\leq x_2\right]} \leq \frac{f_{|A_l|}(x_1)}{\pr\left[|A_l|\leq x_1\right]}\]
    \end{enumerate}
\end{proposition}

We defer the proof to the end. 

\begin{proposition}\label{prop:logic4}
    For $x>0$, $\partial_x g(t,x) +  \rho_x(t)\cdot \partial_t g(t,x) \geq 0$
\end{proposition}
\begin{proof}
    For any $l\in [k]$, $|A_l|$ is the magnitude of a two-dimensional Gaussian with $\mu_1=\mu_2=0$ and $\sigma^2 = \|\alpha F_v\|^2/2$ (item 1 of Fact \ref{fact:rayleigh-rice}), which has a Rayleigh distribution. Denote its density by $f_{|A_l|}(r)$ for $r\geq 0$, then 
    
    \[g(t,x)
    = \frac{\int_0^x \pr\left[G_l \leq t ~\Big\vert~ |A_l| =r\right] \cdot f_{|A_l|}(r) \rmd r}{\pr\left[|A_l|\leq x\right]}
    =\frac{\int_0^x \pr\left[|r+B_l|\leq t \right] \cdot f_{|A_l|} (r)\rmd r}{\pr\left[|A_l|\leq x\right]}
\]
and a direct calculation gives

\begin{align*}
    \partial_x g(t,x) &= \frac{ f_{|A_l|}(x) }{\pr\left[|A_l|\leq x\right]} \cdot \left( \pr\left[|x+B_l| \leq t  \right] -g(t,x)\right)\\
    &= \frac{ f_{|A_l|}(x) }{\pr\left[|A_l|\leq x\right]^2} \cdot \left(\pr\left[|A_l|\leq x\right]\cdot \pr\left[|x+B_l|\leq t\right]-\pr\left[|x+B_l|\leq t \land |A_l|\leq x|\right]\right)\\
    &=\frac{ f_{|A_l|}(x) }{\pr\left[|A_l|\leq x\right]^2} \cdot \left(\int_0^x f_{|A_l|}(r)\cdot \left(\pr\left[|x+B_l|\leq t\right] - \pr\left[|r+B_l|\leq t\right]\right)\rmd r\right)\\
    &=\frac{ f_{|A_l|}(x) }{\pr\left[|A_l|\leq x\right]^2}\cdot \left(\int_0^x f_{|A_l|}(r)\cdot \left(\int_r^x \left(-\rho_s(t)\cdot f_{|s+B_l|}(t)\right)\rmd s\right) \rmd r\right) \qquad (\text{by Observation \ref{obs:identity}})\\
    &= -\frac{ f_{|A_l|}(x) }{\pr\left[|A_l|\leq x\right]^2}\cdot \left(\int_0^x \rho_s(t)\cdot f_{|s+B_l|}(t)\cdot \left(\int_0^sf_{|A_l|}(r) \rmd r\right)\rmd s\right)  \qquad (\text{by Fubini})\\
    &= -\frac{ f_{|A_l|}(x) }{\pr\left[|A_l|\leq x\right]^2}\cdot \left(\int_0^x \rho_s(t)\cdot f_{|s+B_l|}(t)\cdot \pr\left[|A_l|\leq s\right]\rmd s\right)
\end{align*}

and to prove our proposition, it suffices to prove $\rho_x(t)\cdot \partial_t g(t,x) \geq -\partial_x g(t,x)$. Indeed, we have

\begin{align*}
    \rho_x(t)\cdot \partial_t g(t,x) 
    &= \rho_x(t)\cdot \frac{\partial_t\int_0^x f_{|A_l|}(s) \cdot \pr\left[|s+B_l|\leq t\right] \rmd s}{\pr\left[|A_l|\leq x\right]}\\
    &=\rho_x(t)\cdot \frac{\int_0^x f_{|A_l|}(s) \cdot f_{|s+B_l|(t) \rmd s}}{\pr\left[|A_l|\leq x\right]}\\
    &\geq \frac{1}{\pr\left[|A_l|\leq x\right]}\cdot \left(\int_0^x f_{|A_l|}(s) \cdot \rho_s(t) \cdot f_{|s+B_l|}(t) \rmd s\right) \qquad (\text{by item 1 of Proposition \ref{prop:monotonicity}})\\
    &= \frac{1}{\pr\left[|A_l|\leq x\right]}\cdot \left(\int_0^x \frac{f_{|A_l|}(s)}{\pr\left[|A_l|\leq s\right]}\cdot \pr\left[|A_l|\leq s\right] \cdot \rho_s(t)\cdot  f_{|s+B_l|}(t) \rmd s\right)\\
    &\geq \frac{ f_{|A_l|}(x) }{\pr\left[|A_l|\leq x\right]^2}\cdot \left(\int_0^x  \pr\left[|A_l|\leq s\right]\cdot \rho_s(t) \cdot f_{|s+B_l|}(t) \rmd s\right) \qquad (\text{by item 2 of Proposition \ref{prop:monotonicity}})\\
    &= -\partial_x g(t,x)
\end{align*}
as desired.
\end{proof}

Thus combining Claim \ref{claim:logic1}, Claim \ref{claim:logic2}, Claim \ref{claim:logic3}, and Proposition \ref{prop:logic4}, we have proved Proposition \ref{prop:5.13}. It remains to prove Proposition \ref{prop:monotonicity}.

\begin{proof}
    (of Proposition \ref{prop:monotonicity}.) For the first item, it suffices to show that $\rho(z):=I_1(z)/I_0(z)$ is non-decreasing for $z\geq 0$. Since for any fixed $t\geq 0$ and $x>0$, $z(x,t)= 2xt/\|\beta F_v'\|^2$ is non-decreasing in $x$. We take the differentiation of $\rho(z)$:
    \begin{align*}
        \rho(z)' &= \left(\frac{I_1(z)}{I_0(z)}\right)' = \left(\frac{I_0(z)'}{I_0(z)}\right)  =  \frac{I_0''(z)I_0(z)-I_0'(z)^2}{I_0(z)^2}
    \end{align*}
    where the second equality follows by $I_1(z) = I_0(z)'$ (item 1 of Fact \ref{fact:bessel}). so it suffices to show the numerator is non-negative. We can compute
    \[
I_0'(z)
=
\frac{1}{2\pi}
\int_0^{2\pi}
\cos\theta\ e^{z\cos\theta}\,d\theta
\qquad \text{ and } \qquad I_0''(z)
=
\frac{1}{2\pi}
\int_0^{2\pi}
\cos^2\theta\, e^{z\cos\theta}\,d\theta.
\]
    and so the numerator is 

    \begin{align*}
        I_0''(z)I_0(z)-I_0'(z)^2&= \left(
\int_0^{2\pi}
\cos^2\theta\, e^{z\cos\theta}\,d\theta
\right)
\left(
\int_0^{2\pi}
e^{z\cos\theta}\,d\theta
\right)
-
\left(
\int_0^{2\pi}
\cos\theta\, e^{z\cos\theta}\,d\theta
\right)^2\\
&\geq 0  \qquad (\text{Cauchy-Schwarz})
    \end{align*}
where the last inequality used the Cauchy-Schwarz Inequality:

\[
\left(\int_a^b f(\theta)g(\theta)\,d\theta\right)^2
\le
\left(\int_a^b f(\theta)^2\,d\theta\right)
\left(\int_a^b g(\theta)^2\,d\theta\right)
\]
with 
$f(\theta):=\cos\theta\, e^{(z\cos\theta)/2}$
and $g(\theta):=e^{(z\cos\theta)/2}.
$

\hfill

We next prove the second item. Recall that $|A_l|$ is the magnitude of a two-dimensional Gaussian with $\mu_1=\mu_2=0$ and $\sigma^2 = \|\alpha F_v\|^2/2$, which has a Rayleigh distribution (item 1 of Fact \ref{fact:rayleigh-rice}). We have

\begin{align*}
    \left(\frac{ f_{|A_l|}(x) }{\pr\left[|A_l|\leq x\right]}\right)' 
    &=  \left(\frac{2x}{\sigma^2}\cdot \frac{\exp\left(-\frac{x^2}{\sigma^2}\right)
}{
1-\exp\left(-\frac{x^2}{\sigma^2}\right)
}\right)'\\
    &= \frac{2}{\sigma^2}\cdot \frac{
e^{x^2/\sigma^2}\left(1-\frac{2x^2}{\sigma^2}\right)-1
}{
(e^{x^2/\sigma^2}-1)^2
}
\end{align*}

It suffices to show that the numerator $e^{x^2/\sigma^2}\left(1-\frac{2x^2}{\sigma^2}\right)-1\leq 0$. Equivalently, let $y:= x^2/\sigma^2\geq 0 $, we want to show
\[1-2y\leq e^{-y}\] for $y\geq 0$. This is true because  for $y\geq0$, $1-2y\leq 1-y \leq e^{-y}$.

\end{proof}
\end{proof}

\section{Proof of Theorem \ref{fact:correlation}.}\label{appendix}
Recall that two complex Gaussians $X$ and $Y$ are $\rho$-correlated if $X\sim \cC\gau (0,1)$, $Y= \rho^* X + \sqrt{1-|\rho|^2}\cdot Z$ for an independent $Z\sim \cC\gau(0,1)$. As a corollary, the covariance $\mathrm{cov}(X,Y)=\E\left[XY^*\right]= \rho$.
\begin{theorem}

    (Restatement of Theorem \ref{fact:correlation})
    Let $(X_1,Y_1), \ldots, (X_k, Y_k)$ be $k$  independent pairs of complex random variables, where $(X_i, Y_i)$ are $\rho_i $-correlated complex Gaussians.  If the average absolute covariance of $\{X_i\}$ and $\{Y_i\}$ is at least $1-\epsilon$, that is,

    \[\frac{\sum_i |\rho_i|}{k} \geq 1-\epsilon\]

    then 
    
    \[\pr[\arg\max_i |X_i| \neq \arg\max_i |Y_i|] \leq c_1 \sqrt{\epsilon\log k}\]
    for some absolute constant $c_1>0$.
\end{theorem}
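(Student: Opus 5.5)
First, reduce to real nonnegative correlations. Write $\rho_i = |\rho_i| e^{i\theta_i}$ and replace $Y_i$ by $e^{i\theta_i}Y_i$. This changes no magnitude $|Y_i|$, so the event $\{\arg\max_i |X_i| \neq \arg\max_i |Y_i|\}$ is unchanged. Hence we may assume every $\rho_i\in[0,1]$ is real, and write $Y_i = \rho_i X_i + \sqrt{1-\rho_i^2}\,Z_i$ with independent $Z_i\sim\cC\gau(0,1)$. Let $\epsilon_i = 1-\rho_i$, so that $\sum_i\epsilon_i\le \epsilon k$.

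Next, handle indices with large $\epsilon_i$ by a Markov-type split, following Charikar--Makarychev--Makarychev. Let $B=\{i:\epsilon_i > \delta\}$ for a threshold $\delta$ to be chosen; then $|B|\le \epsilon k/\delta$. By symmetry of the i.i.d. magnitudes $|X_i|$, the probability that $\arg\max|X_i|\in B$ is $|B|/k\le\epsilon/\delta$. The same bound holds for $\arg\max|Y_i|$, since the $Y_i$ are also i.i.d. standard complex Gaussians. Off these events both argmaxes lie outside $B$. One option is to take $\delta$ of order $\sqrt{\epsilon/\log k}$, but that only gives $\sqrt{\epsilon\log k}$ from this part when balanced against the remaining term, so the split must be coupled with the main estimate below.

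For the main estimate, let $i^*=\arg\max|X_i|$ and let $M=|X_{i^*}|$ and $M_2$ be the largest and second-largest magnitudes. Because $\big||Y_i| - \rho_i|X_i|\big|\le\sqrt{1-\rho_i^2}\,|Z_i|\le \sqrt{2\epsilon_i}\,|Z_i|$, the argmax can change only if
\[
\rho_{i^*}M - \max_{j\neq i^*}\rho_j|X_j| \;\le\; \sqrt{2\epsilon_{i^*}}|Z_{i^*}| + \max_{j\neq i^*}\sqrt{2\epsilon_j}|Z_j| .
\]
I would bound the probability of this in two ways.

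\begin{itemize}
\item \emph{Gap distribution.} For i.i.d. Rayleigh variables ($|X_i|^2$ is exponential), the gap $M^2-M_2^2$ is exactly exponential with mean $1$, independent of the rest. Since $M\approx\sqrt{\log k}$, this gives $\Pr[M-M_2\le\eta]\lesssim \eta\sqrt{\log k}$. This is a cleaner analogue of the real-Gaussian anti-concentration used by Charikar et al.
\item \emph{Size of the perturbation.} The right-hand side, together with the slack $(1-\rho_j)|X_j|\le\epsilon_j\cdot O(\sqrt{\log k})$, must be shown to be $O(\sqrt{\epsilon})$ in an averaged sense.
\end{itemize}

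The naive $\max_j\sqrt{\epsilon_j}|Z_j|$ can be as large as $\sqrt{\epsilon}\sqrt{\log k}$, which would lose a factor. To avoid this, I would instead use a union bound over $j$:
\[
\Pr[\arg\max|Y| = j \neq i^*] \;\le\; \Pr\Big[|X_j| \text{ is within } O\big(\sqrt{\epsilon_j}|Z_j| + \sqrt{\epsilon_{i^*}}|Z_{i^*}| + \epsilon\sqrt{\log k}\big) \text{ of } M\Big].
\]
By exchangeability, $\Pr[\text{index } j \text{ is second and within } \eta \text{ of the top}]\approx \eta\sqrt{\log k}/k$. Summing over $j$ with $\eta_j\propto\sqrt{\epsilon_j+\epsilon_{i^*}}$, and using Cauchy--Schwarz (concavity) to get $\frac1k\sum_j\sqrt{\epsilon_j}\le\sqrt{\epsilon}$, yields $O(\sqrt{\epsilon\log k})$. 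Rigorously, I would condition on $Z$ and on the top index, and use that the density of $|X_j|$ near $M$, given it is below $M$, is $O(\sqrt{\log k}/k)$ per unit length for the relevant range. Only the second-largest index needs care; farther indices require a gap larger than $\eta$, which is handled by the same density bound.

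The main obstacle is this final anti-concentration step: proving rigorously that for each fixed $j$,
\[
\Pr\big[M - |X_j| \le \eta,\ j \text{ not the argmax}\big] \;\le\; C\,\eta\sqrt{\log k}/k
\]
uniformly for $\eta\le 1$, and controlling the $\epsilon\sqrt{\log k}$ scaling term; the latter is harmless if $\epsilon\le 1/\log k$, and otherwise the claim is trivial. I would try to prove this density bound first, via the exponential representation of $|X_i|^2$. Writing $E_i = |X_i|^2$, the event is that $E_j$ lies in $[M^2-2\eta M,\,M^2]$. Given the maximum of the others, the probability is at most $2\eta M e^{-(M^2-2\eta M)}$. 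Integrating this against the law of the maximum gives $O(\eta\sqrt{\log k}/k)$ up to a large-deviation tail in $M$, which is handled by the moment bounds of Proposition~\ref{prop:gaussian-fact}.
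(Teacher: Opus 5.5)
Your plan follows the same route as the paper's proof:
\begin{itemize}
\item write $Y_i=\rho_i^*X_i+Z_i$ with $\mathrm{Var}(Z_i)\le 2\epsilon_i$;
\item for a fixed top index $i^*$, union-bound over competitors $j$, using the Rayleigh (equivalently, exponential) law of $|X_j|$ conditioned to lie below the maximum, and integrate against the law of the maximum;
\item average $\sqrt{\epsilon_j}$ and $\sqrt{\epsilon_{i^*}}$ by concavity;
\item discard indices with large $\epsilon_i$, using that the argmax is uniform.
\end{itemize}
Your bound $2\eta M e^{-(M^2-2\eta M)}$ is essentially the paper's $e^{-t^2}\bigl(e^{2t(\epsilon t+z)}-1\bigr)/(1-e^{-t^2})$.

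However, the step you call the main obstacle is misstated, and correcting it is what fixes the discard threshold. The bound $\Pr[M-|X_j|\le\eta,\ j\neq\arg\max]\le C\eta\sqrt{\log k}/k$ is false for $\eta$ of constant size. Given $M'=\max_{i\neq j}|X_i|$, the conditional probability is $e^{-(M'-\eta)^2}-e^{-M'^2}=e^{-M'^2}(e^{2\eta M'-\eta^2}-1)$. Averaging with $\mathbb{E}[e^{-M'^2}]=1/k$ and $M'\approx\sqrt{\log k}$ gives order $(e^{2\eta\sqrt{\log k}-\eta^2}-1)/k$. Intuitively, about $e^{2\sqrt{\log k}-1}$ indices have $|X_i|\ge\sqrt{\log k}-1$ in expectation. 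So the linear bound holds only when $\eta\sqrt{\log k}=O(1)$.

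Because your $\eta_j$ is random through $|Z_j|$ and $|Z_{i^*}|$, you cannot replace it by its typical size $\sqrt{\epsilon_j+\epsilon_{i^*}}$. You have to keep the factor $e^{2\eta_j M}-1$ and average it over $Z$ using the moment generating function of $|Z|$. That gives $O(\sqrt{\epsilon_j}\,M)$ only when $\epsilon_j M^2=O(1)$. This is exactly the paper's claim $\mathbb{E}[e^{2t|Z_1|}-1]=O(t\sqrt{\epsilon})$ for $\epsilon t^2\le 1$, together with a separate treatment of $M\ge 2\sqrt{\log k}$, which has conditional probability at most $1/k$.

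The same per-index condition is needed for the slack term. That term is $(1-\rho_{i^*})M=\epsilon_{i^*}M$, not $(1-\rho_j)|X_j|$; the latter has the favorable sign and can be dropped. A bound on the average $\epsilon$ does not control an individual $\epsilon_{i^*}$.

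Hence the discard threshold must satisfy $\delta=O(1/\log k)$. The paper uses $\delta=1/(4\log k)$, paying $4\epsilon\log k=O(\sqrt{\epsilon\log k})$ when $\epsilon\log k<1$. Your option $\delta=\sqrt{\epsilon/\log k}$ also works: its cost $2\sqrt{\epsilon\log k}$ is exactly the target, not a loss. With these corrections your argument goes through.
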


A similar result for the real Gaussian case was established in Theorem 4.1 of \cite{charikar2006near}, and our proof is inspired by theirs. We first prove several useful results.

\begin{lemma}\label{lem:C.2}
    Let $X_1,X_2$ and $Y_1,Y_2$ be standard complex Gaussian variables where 
    \begin{enumerate}

        \item  $(X_1, Y_1)$ is independent of $(X_2, Y_2)$
        \item $\forall i\in \{1,2\}, (X_i,Y_i)$ is $\rho_i$-correlated and  $|\rho_i|\geq 1-\epsilon$.
    \end{enumerate}
    Then there exists complex Gaussians $Z_1,Z_2$ independent of $X_1,X_2$, and each with variance at most $2\epsilon$ such that with probability $1$ we have
    \[|Y_1|-|Y_2|\geq (1-\epsilon)|X_1|- |X_2| -|Z_1| - |Z_2|\]
\end{lemma}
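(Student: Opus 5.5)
We use the definition of $\rho$-correlation recalled at the start of this appendix directly: $Y_i = \rho_i^* X_i + \sqrt{1-|\rho_i|^2}\, W_i$, where $W_i\sim\cC\gau(0,1)$ is independent of $X_i$. Condition 1 makes the pairs independent, so $W_1,W_2$ can be taken independent of each other and of $(X_1,X_2)$. Set $Z_i := \sqrt{1-|\rho_i|^2}\, W_i$. Then $Z_i$ is a complex Gaussian independent of $X_1,X_2$, with variance
\[1-|\rho_i|^2 = (1-|\rho_i|)(1+|\rho_i|)\le 2(1-|\rho_i|)\le 2\epsilon,\]
using $|\rho_i|\ge 1-\epsilon$ and $|\rho_i|\le 1$. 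With this choice, $Y_i = \rho_i^* X_i + Z_i$ holds identically, not merely in distribution.

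The inequality then follows from the triangle inequality, applied pointwise. For the lower bound on $|Y_1|$,
\[|Y_1| \ge |\rho_1^* X_1| - |Z_1| = |\rho_1|\,|X_1| - |Z_1| \ge (1-\epsilon)|X_1| - |Z_1|.\]
For the upper bound on $|Y_2|$,
\[|Y_2| \le |\rho_2|\,|X_2| + |Z_2| \le |X_2| + |Z_2|.\]
Subtracting the second from the first gives $|Y_1|-|Y_2| \ge (1-\epsilon)|X_1| - |X_2| - |Z_1| - |Z_2|$ for every realization, hence with probability $1$.

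No step here is a real obstacle; the only care needed is in the coupling. The lemma asks for $Z_i$ on the same probability space as the given $(X_i,Y_i)$, so we cannot just pick fresh Gaussians. Instead we define $Z_i := Y_i - \rho_i^* X_i$ directly and verify three things. First, $Z_i$ is a linear combination of jointly complex Gaussian variables, so it is complex Gaussian. Second, $\E[X_i Z_i^*] = \E[X_iY_i^*] - \rho_i = 0$, so $Z_i$ is uncorrelated with $X_i$; for jointly circular complex Gaussians this gives independence, and independence from the other index comes from condition 1. Third, its variance is $\E|Y_i|^2 - |\rho_i|^2 = 1-|\rho_i|^2 \le 2\epsilon$.
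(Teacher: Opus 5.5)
Your proposal is correct and follows essentially the same argument as the paper. You write $Y_i=\rho_i^*X_i+Z_i$ with $Z_i$ a complex Gaussian independent of $X_i$ and of variance $1-|\rho_i|^2\le 2\epsilon$, then apply the triangle inequality using $|\rho_1|\ge 1-\epsilon$ and $|\rho_2|\le 1$; your explicit check that $Z_i:=Y_i-\rho_i^*X_i$ lives on the given probability space and is independent of $(X_1,X_2)$ makes precise a coupling point the paper leaves implicit.
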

\begin{proof}
    We can decompose $Y_1$ as $Y_1 = \rho_1^* X_1 + Z_1$ where $Z_1$ is an independent complex Gaussian. Similarly $Y_2 = \rho_2^* X_2 + Z_2$. Since $\var[Y_1] = 1 = \E[|\rho_1^* X_1 + Z_1|^2] = \E[|\rho_1|^2\cdot |X_1|^2] + \E[|Z_1|^2] = |\rho_1|^2 + \var[Z_1]$, we have $\var[Z_1] = 1-|\rho_1|^2\leq 2\epsilon$. Similarly $\var[Z_2] = 1-|\rho_2|^2\leq 2\epsilon$. We have
    \begin{align*}
        |Y_1| - |Y_2| &= |\rho_1^* X_1 + Z_1|- |\rho_2^* X_2 + Z_2|\\
        &\geq |\rho_1|\cdot |X_1| - |Z_1| - |\rho_2|\cdot |X_2| - |Z_2|\\
        &\geq (1-\epsilon)\cdot |X_1| - |X_2| - |Z_1| -|Z_2| && (\text{by assumption } 1-\epsilon\leq |\rho_1|,|\rho_2| \leq 1)
    \end{align*}
\end{proof}

Now, for a sequence of $k$ \textit{i.i.d.} standard complex Gaussian variables $X_1,\ldots. X_k$ and $t\geq 0$, we let the events $E_t, E$ be
\[E_t:= \{|X_1|= t \text{ and }|X_i| \leq t \text{ for all } 2\leq i\leq k\}\]
\[E := \{|X_1| \geq |X_i|, \forall i\in [k]\} = \cup_{t\geq 0}E_t\]

\begin{lemma}\label{lem:APdifference-condition}
    Let $(X_1,Y_1), \ldots, (X_k, Y_k)$ be $k$ independent pairs of random variables such that  $X_1,Y_1$ are $\rho_1$-correlated complex Gaussians and $X_2, Y_2$ are $\rho_2$-correlated complex Gaussians, with $|\rho_1|,|\rho_2|\geq 1-\epsilon$. Then for $\epsilon, t\geq 0$, we have:
    
    \hfill
    
    if $\epsilon t^2 \leq 1$,
    \[\pr\left[|Y_1|\leq |Y_2| ~\big|~ E_t\right]  \leq \frac{e^{-t^2}}{1-e^{-t^2}}\cdot  O(t\sqrt{\epsilon})\]
    if $\epsilon t^2 \geq1$ and $t>1$, 
    \[\pr\left[|Y_1|\leq |Y_2| ~\big|~ E_t\right]  \leq O(\sqrt{\epsilon})\]
\end{lemma}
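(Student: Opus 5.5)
Conditioned on $E_t$ we have $|X_1|=t$, and $|X_2|$ is distributed as a standard complex Gaussian magnitude conditioned on $|X_2|\le t$. With $\cC\gau(0,1)$ normalized so that $\E|X|^2=1$, this gives $\pr[|X_2|\le r]=1-e^{-r^2}$. The noise terms $Z_1,Z_2$ from Lemma \ref{lem:C.2} are independent of $(X_1,X_2)$, and hence of $E_t$, since $E_t$ depends only on the $X_i$. Applying Lemma \ref{lem:C.2}, the event $|Y_1|\le|Y_2|$ implies
\[|X_2|\ \ge\ (1-\epsilon)t-|Z_1|-|Z_2|,\]
where $W:=|Z_1|+|Z_2|$ has $\E W\le 2\sqrt{2\epsilon}$ and Gaussian-type tails at scale $\sqrt\epsilon$. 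So I bound
\[\pr\bigl[|X_2|\ge (1-\epsilon)t-W \,\big|\, |X_2|\le t\bigr]\]
by conditioning on $W=w$ and then averaging over $w$.

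\textbf{Case $\epsilon t^2\le1$.} For fixed $w$, set $s=\epsilon t+w$. The conditional probability is
\[\frac{e^{-(t-s)_+^2}-e^{-t^2}}{1-e^{-t^2}}.\]
I write $e^{-(t-s)^2}-e^{-t^2}=e^{-t^2}\bigl(e^{2ts-s^2}-1\bigr)\le e^{-t^2}\bigl(e^{2ts}-1\bigr)$. Now $2t\cdot\epsilon t=2\epsilon t^2\le 2$, so $e^{2ts}-1\le e^{2}\bigl(e^{2tw}-1\bigr)+O(\epsilon t^2)$. Since $\epsilon t^2\le t\sqrt\epsilon$ in this regime, the $\epsilon t^2$ term is harmless. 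It remains to show $\E[e^{2tW}]-1=O(t\sqrt\epsilon)$. Because $W$ is sub-Gaussian with variance proxy $O(\epsilon)$ and $t\sqrt\epsilon\le1$, we get $\E e^{2tW}\le 1+2t\E W+O(t^2\epsilon)=1+O(t\sqrt\epsilon)$. For this I use the elementary bound $e^a-1\le a+a^2e^{a}$ together with the fact that $\E[W^2e^{2tW}]=O(\epsilon)$ when $t\sqrt{\epsilon}\le 1$, which follows by Cauchy–Schwarz and the Gaussian moment generating function. Combining gives
\[\frac{e^{-t^2}}{1-e^{-t^2}}\cdot O(t\sqrt\epsilon).\]
The main obstacle is this step, keeping the exponential factor $e^{2ts}$ bounded. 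It is exactly why the hypothesis $\epsilon t^2\le 1$ is needed, and why the $w$-average has to be taken through the moment generating function rather than just the mean.

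\textbf{Case $\epsilon t^2\ge1$, $t>1$.} Here $t\sqrt{\epsilon}\ge 1$, and the target is only $O(\sqrt\epsilon)$. The conditional density of $|X_2|$ on $[0,t]$ is $2re^{-r^2}/(1-e^{-t^2})$. Since $t>1$, the denominator $1-e^{-t^2}$ is at least $1-e^{-1}$, so the density is $O(1)$ uniformly in $r$, because $\sup_r 2re^{-r^2}<1$. The probability that $|X_2|$ lands in an interval of length $\ell$ is therefore $O(\ell)$. The relevant interval is $[t-\epsilon t-W,\,t]$, of length at most $\epsilon t+W$. Averaging gives $O(\epsilon t+\sqrt\epsilon)$.

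This is not yet enough, since $\epsilon t$ could exceed $\sqrt{\epsilon}$. I fix it by using the density bound more sharply near $t$. On $[t/2,t]$ the density is at most $2te^{-t^2/4}/(1-e^{-1})$, so the contribution from the $\epsilon t$ part is $O\bigl(\epsilon t^2e^{-t^2/4}\bigr)$. If $\epsilon t^2$ is large, I instead simply bound the probability by $1$. The case split is then:

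if $\epsilon\ge c$ for a constant $c$, the bound is trivial, since $1=O(\sqrt\epsilon)$;

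if $\epsilon\le c$, then $\epsilon t\le t/2$, so the interval $[(1-\epsilon)t,\,t]$ lies inside $[t/2,t]$. There $\epsilon t^2e^{-t^2/4}=O(\epsilon)=O(\sqrt{\epsilon})$, since $t^2e^{-t^2/4}$ is bounded.

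For the $W$ part, the bound $O(\E W)=O(\sqrt\epsilon)$ follows from the uniform $O(1)$ density bound.
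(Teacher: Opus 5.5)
Your proof is correct. For $\epsilon t^2\le 1$ it follows the paper's argument: reduce via Lemma \ref{lem:C.2} to $|X_2|\ge(1-\epsilon)t-W$, condition on $W$, use the Rayleigh tail to get $\frac{e^{-t^2}}{1-e^{-t^2}}\bigl(e^{2t(\epsilon t+W)}-1\bigr)$, and average through the moment generating function. The only difference there is cosmetic. The paper dominates the Taylor series of $e^{2t|Z_1|}-1$ by $\frac{4t\sqrt{\epsilon}}{\lambda}\E[e^{\lambda|G_1|}]$ and factors over the independent $Z_1,Z_2$, whereas you use $e^a-1\le a+a^2e^a$ and Cauchy--Schwarz. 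Since you wrote $(t-s)_+$, you should add the one-line check that $1-e^{-t^2}\le e^{-t^2}(e^{2ts}-1)$ when $s>t$, which holds because $2ts\ge t^2$.

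The regime $\epsilon t^2\ge 1$, $t>1$ is where you genuinely diverge. The paper splits on $W\le t/4$. On that event $|X_2|$ must lie in $[t/2,t]$, which has conditional probability $O(e^{-t^2/4})\le O(e^{-1/(4\epsilon)})=O(\sqrt{\epsilon})$, precisely because $\epsilon t^2\ge 1$. The event $\{W>t/4\}$ is handled by Markov. You instead use anti-concentration. For $t>1$ the conditional density of $|X_2|$ on $[0,t]$ is uniformly $O(1)$, so the slack $W$ costs $O(\E W)=O(\sqrt{\epsilon})$. The slack $\epsilon t$ costs $O(\epsilon t^2e^{-t^2/4})=O(\epsilon)$ via the density bound $2te^{-t^2/4}$ on $[t/2,t]$.

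The paper's split is slightly shorter, but yours never uses $\epsilon t^2\ge 1$, so it gives $O(\sqrt{\epsilon})$ for every $t>1$. That is the form actually needed in Lemma \ref{lem:APtwovar}. There the paper invokes this case on $\{|X_1|\ge 2\sqrt{\log k}\}$, even though $\epsilon|X_1|^2\ge 1$ does not follow from $\epsilon<1/(4\log k)$. Its conclusion still survives, because the first case also yields $O(\sqrt{\epsilon})$ once $t>1$.

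Finally, your aside about bounding by $1$ when $\epsilon t^2$ is large is unnecessary, since $t^2e^{-t^2/4}$ is bounded. The split $\epsilon\le 1/2$ versus $\epsilon>1/2$ is all you use.
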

\begin{proof}
    By Lemma \ref{lem:C.2}, we have
    \begin{align*}
        \pr\left[|Y_1|\leq |Y_2| ~\big|~ E_t\right] &\leq \pr\left[|X_2|+|Z_1|+|Z_2| \geq (1-\epsilon)\cdot |X_1| ~\big\vert~ E_t\right] \\
        &\leq \pr\left[|X_2| \geq (1-\epsilon)t - \left(|Z_1|+|Z_2|\right) ~\big \vert~ E_t\right] && (|X_1|=t \text{ under }E_t)\\
    \end{align*}
    $|X_2|$ is a Rayleigh distribution (see Fact \ref{fact:rayleigh-rice}), and for every fixed value $z$ of $|Z_1| + |Z_2|$, we have 
    \begin{align*}
        \pr\left[|X_2| \geq (1-\epsilon)t - z ~\big \vert~ E_t\right] &= \frac{\pr\left[(1-\epsilon)t - z \leq |X_2| \leq t\right]}{\pr\left[|X_2|\leq t\right]}\\
        &\leq \frac{e^{-(t-\epsilon t - z)^2} - e^{-t^2}}{1-e^{-t^2}}\\
        &\leq \frac{e^{-t^2}(e^{2t(\epsilon t+ z)- (\epsilon t + z)^2} - 1)}{1-e^{-t^2}}\\
        &\leq \frac{e^{-t^2}(e^{2t(\epsilon t+ z)} - 1)}{1-e^{-t^2}}
    \end{align*}

    Since $Z_1$ and $Z_2$ are independent of $X_2$,  we take the expectation of the above bound over $|Z_1| + |Z_2|$, and get

    \begin{align*}
        \pr\left[|Y_1|\leq |Y_2| ~\big|~ E_t\right]&\leq \E_{z\sim |Z_1|+|Z_2|}\left[\pr\left[|X_2| \geq (1-\epsilon)t - z ~\big \vert~ E_t\right]\right]\\
        &\leq \frac{e^{-t^2}}{1-e^{-t^2}}\cdot  \E_{|Z_1|+|Z_2|}\left[e^{2t(\epsilon t + |Z_1| + |Z_2|)} - 1\right]\\
        &\leq \frac{e^{-t^2}}{1-e^{-t^2}}\cdot\left(\left(e^{2\epsilon t^2}-1\right )\cdot \E_{|Z_1|+|Z_2|}\left[e^{2t(|Z_1| + |Z_2|)}\right] + \E_{|Z_1|+|Z_2|}\left[e^{2t(|Z_1| + |Z_2|)} -1\right]\right)
    \end{align*}
    When $\epsilon t^2 \leq 1$, the following claim holds:
    \begin{claim}
        If $\epsilon t^2 \leq 1$,  then $\E_{|Z_1|}\left[e^{2t |Z_1|}-1\right]\leq O(t\sqrt{\epsilon})$
    \end{claim}
    \begin{proof}
        Let $G_1 \sim \cC\gau(0,1)$ we have $ \var\left[Z_1\right] \leq 2\epsilon \cdot \var\left[G_1\right]$ (Lemma \ref{lem:C.2}) and $\E\left[\left(2t\cdot |Z_1|\right)^m\right]\leq (4t\sqrt{\epsilon})^m\cdot \E\left[|G_1|^m\right]$ for every $m\geq 0$. Since $\epsilon t^2\leq 1$, let $\lambda$ be an absolute constant such that $\lambda\geq 4t\sqrt{\epsilon}$. Using Taylor expansion:

    \begin{align*}
    \E\left[e^{2t\cdot |Z_1|}-1\right]&=\E\left[\sum_{m=1}^\infty\frac{(2t)^{m}|Z_1|^m}{m!}\right] \\
    &\leq  \E\left[\sum_{m=1}^\infty\frac{(4t\sqrt{\epsilon})^{m}|G_1|^m}{m!}\right]\\
    &\leq \frac{4t\sqrt{\epsilon}}{\lambda}\cdot \E\left[\sum_{m=1}^\infty\frac{\lambda^{m}|G_1|^m}{m!}\right] \\
    &\leq \frac{4t\sqrt{\epsilon}}{\lambda}\cdot \E\left[e^{\lambda |G_1|}\right]\\
    &\leq O(t\sqrt{\epsilon}) \qquad \qquad \qquad \qquad(\text{by the fact }\E\left[e^{\lambda |G_1|}\right] <\infty\text{ for every }\lambda>0)
    \end{align*}
    \end{proof}

    Using the above claim, by independence of $Z_1$ and $Z_2$, and the fact that that when $\epsilon t^2 \leq 1$, $e^{2\epsilon t^2}-1 = O(\epsilon t^2) = O(t\sqrt{\epsilon})$, we have the upper bound:
    \begin{align*}
        \pr\left[|Y_1|\leq |Y_2| ~\big|~ E_t\right]  &\leq \frac{e^{-t^2}}{1-e^{-t^2}}\cdot \left(O(t\sqrt{\epsilon})\cdot  \left(1+O(t\sqrt{\epsilon}) \right)^2+ \left(1+O(t\sqrt{\epsilon})\right)^2-1\right)\\
        &\leq \frac{e^{-t^2}}{1-e^{-t^2}}\cdot O(t\sqrt{\epsilon})
    \end{align*}
 
    When $\epsilon t^2 \geq 1$ and  $t>1$. If $\epsilon > 1/4$, $\pr\left[|X_2| \geq (1-\epsilon)t - \left(|Z_1|+|Z_2|\right) ~\big \vert~ E_t\right] \leq 1\leq O(\sqrt{\epsilon})$ is trivial. Thus we assume $\epsilon \leq 1/4$. Now we divide into two cases:

    If $|Z_1|+|Z_2| \leq t/4$, then
    \begin{align*}
        \pr\left[|X_2| \geq (1-\epsilon)t - \left(|Z_1|+|Z_2|\right) ~\big \vert~ E_t\right] &\leq \pr\left[|X_2| \geq t/2 ~\big\vert~ |X_2|\leq t\right]\\
        &\leq O(e^{-t^2/4})  \qquad \qquad (\text{by Fact \ref{fact:rayleigh-rice}})\\
        &\leq O(e^{-1/4\epsilon}) \qquad \qquad (\epsilon t^2\geq 1) \\
        &\leq O(\sqrt{\epsilon}) \qquad \qquad \qquad (\epsilon\leq 1/4)
    \end{align*}
    
    By Markov's Inequality,  
    \[\pr\left[|Z_1| + |Z_2| > t/4\right]\leq \frac{4\E\left[|Z_1| + |Z_2|\right]}{t} \leq O(\sqrt{\epsilon}/t)\leq O(\sqrt{\epsilon})\]
    where we used the fact that $\E\left[|Z_1|\right] = \E\left[|Z_2|\right]= \sqrt{\E\left[|Z_1|^2\right]} \leq O(\sqrt{\epsilon})$ and $t>1$.

\end{proof}

\begin{lemma}\label{lem:APtail}
    \[\pr\left[|X_1|\geq 2\sqrt{\log k} ~\big|~ E\right]\leq \frac{1}{k}\]
\end{lemma}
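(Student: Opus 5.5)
By the symmetry among the $X_i$, I expect the event $E$ to make the maximum magnitude $M:=\max_i |X_i|$ independent of which index attains it. If so, conditioning on $E$ does not change the law of $|X_1|$, and
\[\pr\left[|X_1|\geq 2\sqrt{\log k} ~\big|~ E\right] = \pr\left[M\geq 2\sqrt{\log k}\right].\]
To justify this, I would write the conditional density of $|X_1|$ given $E$ as $k\, f(t)F(t)^{k-1}$. Here $f$ and $F$ are the Rayleigh density and CDF from Fact \ref{fact:rayleigh-rice}, normalized so that $F(t)=1-e^{-t^2}$ for a standard complex Gaussian. Since $\pr[E]=1/k$, this conditional density is exactly the density of $M$.

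It then remains to bound the tail of $M$, and a union bound should suffice:
\[\pr\left[M\geq s\right]\leq k\cdot \pr\left[|X_1|\geq s\right] = k e^{-s^2}.\]
With $s=2\sqrt{\log k}$ this gives
\[k e^{-4\log k}=k^{-3}\leq 1/k,\]
provided $\log$ is the natural logarithm. If instead $\log$ denotes $\log_2$, then $s^2=4\log_2 k\geq 4\ln k$, so the bound $k^{-3}$ is even better. Either way the claim holds for every $k\geq 1$.

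The only delicate point is the normalization of $\cC\gau(0,1)$. With the convention $Z=(X+iY)/\sqrt 2$, we have $|Z|^2\sim \mathrm{Exp}(1)$, so the tail is $e^{-s^2}$. This matches the form $e^{-t^2}$ already used in Lemma \ref{lem:APdifference-condition}. Even under a variance-$2$ convention, the tail would be $e^{-s^2/2}=k^{-2}$, which is still at most $1/k$. So I expect no real obstacle.
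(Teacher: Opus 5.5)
Your proof is correct and arrives at the same quantity as the paper, namely $k\cdot\pr[|X_1|\geq 2\sqrt{\log k}]$. The paper gets there in one line, $\pr[A\mid E]\leq \pr[A]/\pr[E]$ with $\pr[E]=1/k$ and the single-variable tail bounded by $1/k^2$. Your detour through the exact identity that $|X_1|$ conditioned on $E$ has the law of $\max_i|X_i|$ is valid but not needed, and your final $k e^{-4\log k}=k^{-3}$ is simply a sharper version of the paper's $(1/k^2)/(1/k)$.
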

\begin{proof}
    The probability of $E$ is $1/k$ since for every $i$, the event $|X_i| = \max_{j\in [k]} |X_j|$ has equal probability. Therefore,
    \[\pr\left(|X_1|\geq 2\sqrt{\log k} ~\big|~ E\right)\leq \frac{\pr\left[|X_1|\geq 2\sqrt{\log k}\right]}{1/k}  \leq \frac{1/k^2}{1/k} \leq \frac{1}{k}\]
\end{proof}

\begin{lemma}\label{lem:APtwovar}
    Let $(X_1,Y_1), \ldots, (X_k, Y_k)$ be $k$  independent pairs of random variables, such that $X_1,Y_1$ are $\rho_1$-correlated complex Gaussians and $X_2, Y_2$ are $\rho_2$-correlated complex Gaussians, with $|\rho_1|,|\rho_2|\geq 1-\epsilon$, where $\epsilon< 1/(4\log k)$. Then
    \[\pr\left[|Y_1|\leq|Y_2| ~\big|~ E\right] = O\left(\frac{\sqrt{\epsilon\log k}}{k}\right)\]
\end{lemma}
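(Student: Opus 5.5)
We prove the bound by conditioning on the value $t=|X_1|$ given $E$, and integrating the two regimes of Lemma \ref{lem:APdifference-condition} against the conditional law of $|X_1|$. Given $E$, the event $E_t$ occurs with density proportional to $f_{|X_1|}(t)\cdot\pr[|X_2|\le t]^{k-1}$. The Rayleigh density with variance $1/2$ per coordinate is $2te^{-t^2}$, and $\pr[|X_i|\le t]=1-e^{-t^2}$. Since $\pr[E]=1/k$, we obtain
\[
\pr\left[|Y_1|\le|Y_2| ~\big|~ E\right]=k\int_0^\infty 2te^{-t^2}(1-e^{-t^2})^{k-1}\,\pr\left[|Y_1|\le |Y_2| ~\big|~ E_t\right]\rmd t .
\]
We split the range at $T:=2\sqrt{\log k}$. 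The hypothesis $\epsilon<1/(4\log k)$ gives $\epsilon t^2\le 4\epsilon\log k<1$ for all $t\le T$, so only the first case of Lemma \ref{lem:APdifference-condition} is needed there.

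\textbf{Main range $t\le T$.} Plugging in the first case of Lemma \ref{lem:APdifference-condition}, the conditional probability is at most $\frac{e^{-t^2}}{1-e^{-t^2}}\cdot O(t\sqrt{\epsilon})$. The factor $\frac{1}{1-e^{-t^2}}$ cancels one power of $(1-e^{-t^2})$, so the contribution is at most
\[
O(\sqrt\epsilon)\cdot k\int_0^{T} 2t^2e^{-2t^2}(1-e^{-t^2})^{k-2}\,\rmd t .
\]
Bound $t\le T=2\sqrt{\log k}$ in one factor and substitute $s=e^{-t^2}$, so that $2t\,e^{-t^2}\rmd t=-\rmd s$. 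The remaining integral is at most $\int_0^1 s(1-s)^{k-2}\rmd s=\frac{1}{k(k-1)}$. The contribution is therefore $O(\sqrt\epsilon)\cdot k\cdot\sqrt{\log k}\cdot\frac{1}{k(k-1)}=O\!\left(\frac{\sqrt{\epsilon\log k}}{k}\right)$.

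\textbf{Small $t$ near $0$.} Here $\frac{e^{-t^2}}{1-e^{-t^2}}\approx 1/t^2$ blows up, but the factor $t\cdot 2te^{-t^2}$ compensates. The computation above already uses the exact cancellation rather than an approximation, so no separate treatment is needed. For $k=2$ the exponent $k-2$ equals $0$, which is harmless.

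\textbf{Tail $t>T$.} Bound the conditional probability trivially by $1$. By Lemma \ref{lem:APtail}, this region has conditional probability at most $1/k$ given $E$, which is too weak. Instead we use the second case of Lemma \ref{lem:APdifference-condition} (when $\epsilon t^2\ge 1$, where the bound is $O(\sqrt\epsilon)$) or the first case (when $\epsilon t^2\le 1$), whichever applies. In both cases the conditional probability is $O(\sqrt{\epsilon})\cdot\max(1,t)$, because $\frac{e^{-t^2}}{1-e^{-t^2}}\le 1$ for $t\ge 1$. This reduces the tail to $O(\sqrt\epsilon)\,\E[\,|X_1|\,\I(|X_1|>T)\mid E\,]$. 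Since the density of $E_t$ is at most $k\cdot 2te^{-t^2}$, this expectation is at most $k\int_T^\infty 2t^2e^{-t^2}\rmd t=O(kTe^{-T^2})=O(\sqrt{\log k}/k^3)$. The tail is therefore negligible.

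The main obstacle is the bookkeeping near $t\to0$ and making sure the $O(\cdot)$ constants in Lemma \ref{lem:APdifference-condition} are uniform in $t$. That lemma's bound in the regime $\epsilon t^2\le1$ is uniform, which justifies the computation. Summing the two ranges gives the claimed $O\!\left(\sqrt{\epsilon\log k}/k\right)$.
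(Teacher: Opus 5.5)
Your proof is correct and follows the paper's route: you condition on $|X_1|=t$ given $E$ (density $k\cdot 2te^{-t^2}(1-e^{-t^2})^{k-1}$), split at $T=2\sqrt{\log k}$, and apply the first case of Lemma \ref{lem:APdifference-condition} below $T$. You then reduce via $s=e^{-t^2}$ to the integral $\int_0^1 s(1-s)^{k-2}\,ds=\frac{1}{k(k-1)}$.

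Your tail treatment is more careful than the paper's. The paper asserts $\epsilon|X_1|^2\ge 1$ whenever $|X_1|\ge 2\sqrt{\log k}$, which is false on $[T,1/\sqrt{\epsilon})$ under the hypothesis $\epsilon<1/(4\log k)$. You instead use whichever case of Lemma \ref{lem:APdifference-condition} applies and integrate $O(\sqrt{\epsilon})\,t$ against the density, giving a negligible $O(\sqrt{\epsilon\log k}/k^3)$. In the main range, bounding one factor of $t$ by $T$ yields $1/(k(k-1))$ directly and avoids the paper's split at $x=1/k$, whose stated intermediate bounds are, as written, loose by a factor of $k$.
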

\begin{proof}
    We decompose the probability as
    \begin{align*}
        \pr\left[|Y_1|\leq|Y_2| ~\big|~ E\right] &=\pr\left[|Y_1|\leq|Y_2| \land |X_1|\leq 2\sqrt{\log k}  ~\big|~ E\right]\\
        &+ \pr\left[|Y_1|\leq|Y_2| \land |X_1|\geq 2\sqrt{\log k}  ~\big|~ E\right]
    \end{align*}
    First consider the case $|X_1|\leq 2\sqrt{\log k}$ so that $\epsilon|X_1|^2\leq 1$ and let $\rmd F_{|X_1|~\vert E}$ be the probability density of $|X_1|$ conditioned on $E$, we have
    \begin{align*}
        \pr\left[|Y_1|\leq|Y_2| \land |X_1|\leq 2\sqrt{\log k}  ~\big|~ E\right] &= \int_0^{2\sqrt{\log k}} \pr\left[|Y_1|\leq|Y_2|   ~\big|~ E_t\right] \rmd F_{|X_1|~\vert E}(t)\\
        &\leq O(\sqrt{\epsilon})\cdot \int_0^{2\sqrt{\log k}} t\cdot \frac{e^{-t^2}}{1-e^{-t^2}}\rmd F_{|X_1|~\vert E}(t) && (\text{by Lemma \ref{lem:APdifference-condition}})
    \end{align*}  
    Since $X_i$'s are i.i.d. and continuous, the event $E$ that $|X_1|= \max_i |X_i|$ occurs with probability $1/k$, and is independent of the event that $|X_i|\leq t$ for all $i$, thus
    \begin{align*}
        \pr\left[\forall i \leq k, |X_i|\leq t ~\big\vert~ E \right] &= F(t)^k
    \end{align*}
    where $F(t) = \pr\left[X_i\leq t\right]$. Take the differentiation, we have $\rmd F_{|X_1|~\vert E}(t) = \int_0^t f_{|X_1|}(t)\cdot k\cdot F(t)^{k-1} \rmd t$, where $f_{|X_1|}(t) = \rmd F(t)$ is the density of $|X_1|$. Plug this back into the probability above:
    \begin{align*}
        &\pr\left[|Y_1|\leq|Y_2| \land |X_1|\leq 2\sqrt{\log k}  ~\big|~ E\right]\\
        &\leq O(\sqrt{\epsilon})\cdot \int_0^{2\sqrt{\log k}} t\cdot \frac{e^{-t^2}}{1-e^{-t^2}}\cdot f_{|X_1|}(t)\cdot k\cdot F(t)^{k-1} \rmd t\\
        &\leq O(\sqrt{\epsilon})\cdot k\cdot  \int_0^{\infty} 2t^2\cdot (e^{-t^2 })^2\cdot (1-e^{-t^2})^{k-2} \rmd t \qquad \qquad (|X_1| \text{ has a Rayleigh distribution, see Fact }\ref{fact:rayleigh-rice})\\
        &\leq O(\sqrt{\epsilon})\cdot k\cdot  \int_0^1 \sqrt{\log (1/x)}\cdot x\cdot (1-x)^{k-2} \rmd x\\
        &\leq  O(\sqrt{\epsilon})\cdot k\left( \int_0^{1/k} \sqrt{\log (1/x)}\cdot x\cdot (1-x)^{k-2} \rmd x + \int_{1/k}^{1} \sqrt{\log (1/x)}\cdot x\cdot (1-x)^{k-2} \rmd x\right)\\
    \end{align*}
    where the first term in the bracket is at most \[\int_0^{1/k} \sqrt{\log (1/x)}\cdot  \rmd x \leq O(\sqrt{\log k}/k)\] 
    since  $x\leq 1/k$ and $(1-x)^{k-2}\leq 1$. The second term is at most 
    \[\int_{1/k}^{1} \sqrt{\log (1/x)}\cdot x\cdot (1-x)^{k-2} \rmd x \leq \int_0^1 \sqrt{\log k}\cdot x(1-x)^{k-2}\rmd x = O\left(\frac{\sqrt{\log k}}{k(k-1)}\right) = O(\sqrt{\log k}/k)\]
    So we have
    \[\pr\left[|Y_1|\leq|Y_2| \land |X_1|\leq 2\sqrt{\log k}  ~\big|~ E\right] \leq O\left(\frac{\sqrt{\epsilon \log k}}{k}\right)\]

    Now consider the case $|X_1|\geq 2\sqrt{\log k}\geq 1$ and $\epsilon|X_1|^2\geq 1$. Let $\mathcal{A}:= E\cap \{|X_1|\geq 2\sqrt{\log k}\}$. By Lemma \ref{lem:APdifference-condition}, 
    \begin{align*}
        \pr\left[|Y_1|\leq|Y_2| ~\Big|~   \mathcal{A}\right]  &= \int_{2\sqrt{\log k}}^{\infty} \pr\left[|Y_1|\leq|Y_2|   ~\big|~ E_t\right] \rmd F_{|X_1|~\mid \mathcal{A}}(t)\\
        &\leq \int_{2\sqrt{\log k}}^{\infty} O(\sqrt{\epsilon})  \ \rmd F_{|X_1|~\mid \mathcal{A}}(t)\\
        &=O(\sqrt{\epsilon})
    \end{align*}
    and by Lemma \ref{lem:APtail},
    \[\pr\left(|X_1|\geq 2\sqrt{\log k} ~\big|~ E\right)\leq \frac{1}{k}\]
    Multiplying the above we have $\pr\left[|Y_1|\leq|Y_2| \land |X_1|\geq 2\sqrt{\log k}  ~\big|~ E\right] \leq O(\sqrt{\epsilon}/k)$. Finally, adding the bounds from both cases  we have
    \[\pr\left[|Y_1|\leq|Y_2| ~\big|~ E\right] = O\left(\frac{\sqrt{\epsilon\log k}}{k}\right)\]
\end{proof}
\begin{lemma}\label{lem:APfinal}
    Let $(X_1,Y_1), \ldots, (X_k, Y_k)$ be $k$ independent pairs of random variables. For all $i$. $X_i$ and $Y_i$ are $\rho_i$-correlated with $|\rho_i|\geq 1-\epsilon_i$. Suppose that
    \begin{enumerate}
        \item $\sum_i \epsilon_i / k = \epsilon$
        \item $\forall i, \epsilon_i \leq 1/(4\log k)$
    \end{enumerate}
    Then 
    \[\pr\left[\arg\max_i |X_i| = \arg\max_j |Y_j|  \right] \geq  1- O(\sqrt{\epsilon \log k})\]
\end{lemma}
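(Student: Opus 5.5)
We reduce Lemma \ref{lem:APfinal} to the pairwise estimate of Lemma \ref{lem:APtwovar} by a union bound over the competing index, following the strategy of \cite{charikar2006near}. Since the $|X_i|$ are i.i.d.\ continuous, ties occur with probability $0$, and by symmetry each index $i$ is the argmax of $\{|X_j|\}$ with probability exactly $1/k$. Define $E^{(i)}:=\{|X_i|\geq |X_j|,\ \forall j\}$, so that $E=E^{(1)}$. Then
\[
\pr\left[\arg\max_i |X_i| \neq \arg\max_j |Y_j|\right] = \sum_{i} \pr\left[E^{(i)}\right]\cdot \pr\left[\exists j\neq i: |Y_i|\leq |Y_j| ~\big|~ E^{(i)}\right]\leq \frac{1}{k}\sum_{i}\sum_{j\neq i}\pr\left[|Y_i|\leq |Y_j| ~\big|~ E^{(i)}\right].
\]

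For each ordered pair $(i,j)$, relabel indices so that $i=1$ and $j=2$. Lemma \ref{lem:APtwovar} is stated with a common $\epsilon$; here we apply it with $\epsilon_{ij}:=\max(\epsilon_i,\epsilon_j)$. The hypothesis $\epsilon_{ij}<1/(4\log k)$ follows from assumption 2, up to the boundary case of equality, which is harmless after adjusting constants. The conditioning on $E^{(i)}$ involves all $k$ coordinates, but the remaining pairs are independent and only enter through the event $E$. This is exactly the setting of Lemma \ref{lem:APtwovar}, and it gives
\[
\pr\left[|Y_i|\leq |Y_j| ~\big|~ E^{(i)}\right] = O\left(\frac{\sqrt{\epsilon_{ij}\log k}}{k}\right)\leq O\left(\frac{(\sqrt{\epsilon_i}+\sqrt{\epsilon_j})\sqrt{\log k}}{k}\right).
\]

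Summing over all ordered pairs $i\neq j$ and multiplying by $1/k$ bounds the failure probability by
\[
O\left(\frac{\sqrt{\log k}}{k^2}\right)\cdot 2(k-1)\sum_i\sqrt{\epsilon_i} \leq O\left(\sqrt{\log k}\cdot \frac{1}{k}\sum_i \sqrt{\epsilon_i}\right).
\]
By Cauchy--Schwarz (concavity of the square root), $\frac1k\sum_i\sqrt{\epsilon_i}\leq \sqrt{\frac1k\sum_i \epsilon_i}=\sqrt{\epsilon}$. The failure probability is therefore $O(\sqrt{\epsilon\log k})$, which is the claim.

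The main point to verify is that Lemma \ref{lem:APtwovar} really applies pairwise with the other $k-2$ coordinates present in the conditioning. Its proof conditions on $E_t$, which constrains all $|X_l|\leq t$, and it uses only the pair $(X_1,Y_1),(X_2,Y_2)$ together with the independence of $Z_1,Z_2$ from the $X$'s. The remaining coordinates affect only the conditional law of $|X_1|$ given $E$, and that law was computed with all $k$ coordinates. So the lemma holds as stated in our setting, and the union bound goes through.
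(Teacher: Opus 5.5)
Your proposal is correct and matches the paper's proof: a union bound over competing indices, Lemma~\ref{lem:APtwovar} applied with $\max(\epsilon_i,\epsilon_j)$, the bound $\sqrt{\max(\epsilon_i,\epsilon_j)}\le\sqrt{\epsilon_i}+\sqrt{\epsilon_j}$, and Jensen/Cauchy--Schwarz to get $\frac1k\sum_i\sqrt{\epsilon_i}\le\sqrt{\epsilon}$. The only difference is cosmetic: you sum over ordered pairs in one pass, while the paper fixes $i=1$, sums over $j$, and then averages over $i$. On the strict-versus-nonstrict boundary, "adjusting constants" is not quite the reason; the actual reason is that the proof of Lemma~\ref{lem:APtwovar} only uses $4\epsilon\log k\le 1$, so the case of equality is covered.
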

\begin{proof}
    By Lemma \ref{lem:APtwovar} we have 
    \[\pr\left[|Y_1|\leq  |Y_2| ~\big|~ |X_1| = \max_{i\in [k]} |X_i|\right] = O\left(\frac{\sqrt{\max \{\epsilon_1, \epsilon_2 \} \log k}}{k}\right)\]
    By a union bound,
    \begin{align*}
        \pr\left[|Y_1| \leq \max_{j\neq 1} |Y_j| ~\big|~ |X_1| = \max_{i\in [k]} |X_i|\right] &= O\left(\frac{\sqrt{\log k}}{k}\cdot  \sum_{j=2}^k \sqrt{\max\{\epsilon_1,\epsilon_j\}}\right)\\
        &\leq O\left(\frac{\sqrt{\log k}}{k} \cdot  \sum_{i=1}^k (\sqrt{\epsilon_1} + \sqrt{\epsilon_j})\right)\\
        &\leq O\left(\sqrt{\log k} \cdot (\sqrt{\epsilon_1} + \sqrt{\epsilon })\right) && (\text{Jensen's inequality})
    \end{align*}
    Now, the probability $|X_i| = \max_{j\in [k]} |X_j|$ is equal for each $i$, which is $1/k$. Multiplying this and take another union bound, we have
    \begin{align*}
        \pr\left[\arg\max_i |X_i| \neq \arg\max_j |Y_j|  \right] &\leq   \sum_{i\in [k]} O\left(\sqrt{\log k} \cdot (\sqrt{\epsilon_i} + \sqrt{\epsilon })\right) \cdot \frac{1}{k}\\
        &\leq O\left(\sqrt{\log k}\cdot (\sqrt{\epsilon} + \sqrt{\epsilon })\right) && (\text{Jensen's inequality})\\
        &= O\left(\sqrt{\epsilon\log k } \right)
    \end{align*}
\end{proof}

Finally, we are ready to prove Theorem \ref{fact:correlation}.
\begin{proof}
    (of Theorem \ref{fact:correlation}.)  Let $\epsilon_i = 1-|\rho_i|$ and $\epsilon = (\sum_i \epsilon_i)/k$. Assume $\epsilon < 1/(4\log k)$, otherwise the theorem is trivial.

    Consider the set $I= \{i:  \epsilon_i <  1/(4\log k)\}$. Since $\epsilon< 1/(4\log k)$, the set is not empty. Apply Lemma \ref{lem:APfinal} to $\{X_i\}_{i\in I}$ and $\{Y_i\}_{i\in I}$, we have that
    \[\pr\left[\arg\max_{i\in I} |X_i| \neq \arg\max_{j\in I} |Y_j|  \right] = O(\sqrt{\epsilon \log k})\]

    Since every $|X_i|$ has the same probability of being the largest, the probability that the largest $|X_i|$ is not in $\{X_i\}_{i\in I}$ is  $(k-|I|)/k$. Similar for that of $|Y_i|$. By a union bound,
    \begin{align*}
        \pr\left[\arg\max_i |X_i| = \arg\max_j |Y_j|\right] &\geq 1-O(\sqrt{\epsilon\log k }) - 2\cdot \frac{k-|I|}{k} 
    \end{align*}
    and we know that 
    \[\frac{k-|I|}{4\log k} \leq \sum_{i\not\in I}\epsilon_i \leq k\epsilon \]
    so we have
    \[\frac{k-|I|}{k}\leq 4\epsilon \log k \leq O(\sqrt{\epsilon \log k})
    \]
    where the last step holds since $\epsilon \log k < 1$. Plug this into the bound above we have 
    \begin{align*}
        \pr\left[\arg\max_i |X_i| = \arg\max_j |Y_j|\right] &\geq 1-O(\sqrt{\epsilon\log k })  
    \end{align*}
    as promised.
\end{proof}

\end{document}